\numberwithin{equation}{section}
\numberwithin{figure}{section}
\theoremstyle{plain}
\newtheorem{thm}{\protect\theoremname}
\theoremstyle{plain}
\newtheorem{prop}[thm]{\protect\propositionname}
\theoremstyle{remark}
\newtheorem{rem}[thm]{\protect\remarkname}
\theoremstyle{plain}
\newtheorem{cor}[thm]{\protect\corollaryname}
\theoremstyle{plain}
\newtheorem{lem}[thm]{\protect\lemmaname}
\newcommand{\dd}{\mathrm{d}}
\newtheorem{assm}[thm]{Assumption}
\providecommand{\corollaryname}{Corollary}
\providecommand{\lemmaname}{Lemma}
\providecommand{\propositionname}{Proposition}
\providecommand{\remarkname}{Remark}
\providecommand{\theoremname}{Theorem}
\begin{document}
\title[A generalised time-evolution model for contact problems with wear]{A generalised time-evolution model for\\ contact problems with wear
and its analysis}
\author{Dmitry Ponomarev$^{1,2,3}$}
\begin{abstract}
In this paper, we revisit some classical and recent works on modelling
sliding contact with wear and propose their generalisation. Namely,
we upgrade the relation between the pressure and the wear rate by
incorporating some non-local time-dependence. To this effect, we use
a combination of fractional calculus and relaxation effects. Moreover,
we consider a possibility when the load is not constant in time. The
proposed model is analysed and solved. The results are illustrated
numerically and comparison with similar models is discussed.
\end{abstract}

\maketitle

\section{Introduction\label{sec:intro}}

\footnotetext[1]{FACTAS team, Centre Inria d'Universit{\'e} C{\^o}te d'Azur, France}\footnotetext[2]{St. Petersburg Department of Steklov Mathematical Institute of Russian Academy of Sciences, Russia}\footnotetext[3]{Contact: dmitry.ponomarev@inria.fr}
Wear of material is a complicated process which involves different
phenomena such as abrasion, adhesion and crack formation. Wear processes
have been studied for decades and numerous empirical models have been
proposed in attempt to fit experimental data for particular settings.
One classical model of wear is due to Archard \cite{Arch}. According
to this model, the wear rate is proportional to the load with a power-law
dependence. In many settings, this reduces to a linear relation between
the wear rate and the pressure (see e.g. \cite[Sect. 17.2--17.3]{Pop})
leading to several effectively solvable tribological models for sliding
of an indented punch (stamp), see e.g. \cite{ArgChai1,ArgChai2,ArgFad,Fepp,Kov,Zhu}.
It seems natural to explore a larger class of such models based on
a more general relation between the wear rate and the contact pressure.
A qualitative upgrade of this relation could be achieved by accounting
for non-local (temporal) dependencies. Investigation of how a new
wear model affects the solution of a problem would give
a hope to extend the applicability of original models to contexts
with different wear mechanisms and a broader set of materials.

Here, we consider the following instance of the two-dimensional punch
problem: a rigid wearable punch, subject to a given normal (vertical)
load, slides on a thick elastic layer or a half-plane, with a prescribed
speed. The sliding speed is taken to be constant, but the normal load
may be time-dependent. The contact area is assumed to be fixed (which
is normally expected if the load is sufficiently large, see e.g. \cite{Kom}).

Since the layer is homogeneous and material wear occurs on the interface,
the problem can be effectively described by one-dimensional integral
equations. On this level, the presence of wear in the problem manifests
itself as an additional term in the integral equation for the pressure.
This term stems from the linear Archard's wear law and brings a temporal
dependence to the problem.

In the present work, we upgrade the wear term
so that it corresponds to a more general differential relation between
the wear rate and the contact pressure. This more general relation
is meant to have two features (and combinations thereof). First, it
incorporates a wear-relaxation effect which is consistent with typical
observations that wear is the most intense in the beginning of the
sliding process. Second, the time-derivative in this relation may
be changed to a fractional order (in a Riemann-Liouville sense) which
adds another non-locality and is motivated by recent success of fractional
calculus in mechanics (viscoelasticity) and other applied contexts
\cite[Ch. 9--10]{GorKilMaiRog}. Importance of non-local relations
between wear and contact pressure is also stressed in recent work
\cite{ArgChai3}.

As we shall see, newly introduced parameters can
affect qualitative behaviour of the solution. Namely, depending on
a choice of the parameter values, the solution may exhibit exponential
or algebraic decay in time that would range between monotone and arbitrary
oscillatory. Therefore, such a generalised model is highly desirable
and is expected to be useful for fitting experimentally observable
data. On the other hand, this model is almost explicitly solvable,
meaning that the solution can be written up in a closed form in terms
of some auxiliary functions which could be precomputed (numerically
or asymptotically). In particular, it is rather straightforward to
analyse long-time behaviour of this model, namely, its convergence
speed to the stationary pressure distribution. Moreover, computing
such a pressure profile itself is a problem of an essential practical
interest. 

The outline of the paper is as follows. We formulate a new model in
Section \ref{sec:model}. In Section \ref{sec:model_deriv}, we derive
the proposed form of the wear relation and recast the model in form
of a single integral equation. Then, in Section \ref{sec:sol}, we
deal with the solution of the general model and its analysis. In particular,
in Subsection \ref{subsec:sol_exist}, under appropriate conditions,
we deduce the existence of a unique solution and provide its explicit
form in terms of spectral functions of a pertinent integral operator.
This is detailed even more for a concrete choice of the kernel function
in Subsection \ref{subsec:concr_K}. Furthermore, in Section \ref{sec:sol_analys},
we show that the solution of the proposed model has the anticipated
behaviour consistent with previous models. A collection of mathematical
results needed for Sections \ref{sec:sol}--\ref{sec:sol_analys}
is outsourced to Appendix. Section \ref{sec:numerics} is dedicated
to the numerical solution of the proposed model: its validation and
exploration. Finally, we conclude with Section \ref{sec:discuss}
where we summarise and discuss the results outlining potential further
work.

\section{The model\label{sec:model}}

According to \cite{AleksKov1,ArgChai1,ArgChai2,ArgFad,Gal2,Kom,Vor}, a
balance of displacements at the contact interface (see Figure \ref{fig:stamp}) yields an integral
equation which involves two unknowns: the contact pressure $p\left(x,t\right)$
and the indentation $\delta\left(t\right)$. This equation can be
formulated as follows
\begin{equation}
\eta\left(p\left(x,t\right)-p\left(x,0\right)\right)+\int_{-a}^{a}K\left(x-\xi\right)\left(p\left(\xi,t\right)-p\left(\xi,0\right)\right)\dd\xi+w\left[p\right]\left(x,t\right)=\delta\left(t\right)-\delta\left(0\right),\hspace{1em}x\in\left(-a,a\right),\hspace{1em}t>0.\label{eq:p_delta_eq}
\end{equation}
Furthermore, (\ref{eq:p_delta_eq}) should be complemented by the
equilibrium condition 
\begin{equation}
\int_{-a}^{a}p\left(x,t\right)\dd x=P\left(t\right),\hspace{1em}t\geq0,\label{eq:equil_eq}
\end{equation}
and the integral equation for the initial pressure distribution $p\left(x,0\right)$
\begin{equation}
\eta p\left(x,0\right)+\int_{-a}^{a}K\left(x-\xi\right)p\left(\xi,0\right)\dd\xi=\delta\left(0\right)-\Delta\left(x\right),\hspace{1em}x\in\left(-a,a\right).\label{eq:p0_int_eq}
\end{equation}
Here, $K\left(x\right)$ is a given function which links the contact pressure to the vertical displacement, $\eta\geq0$ is a given constant, $\left(-a,a\right)$ is the contact interval, $P\left(t\right)>0$
is the contact load, $w\left[p\right]\left(x,t\right)$ is the wear
term which is a given mapping of the contact pressure $p\left(x,t\right)$
and the punch profile $\Delta\left(x\right)$ is a known function.

The main feature of our model is a new form of the wear term $w\left[p\right]\left(x,t\right)$
entering (\ref{eq:p_delta_eq}). This wear term is given by 
\begin{equation}
w\left[p\right]\left(x,t\right)=-\nu\mu^{1/\alpha-1}\int_{0}^{t}\mathcal{E}_{\alpha}\left(\mu^{1/\alpha}\left(t-\tau\right)\right)p\left(x,\tau\right)\dd\tau,\label{eq:w_p_term}
\end{equation}
where $\mathcal{E}_{\alpha}$ is an auxiliary function defined by
(\ref{eq:cal_E_a}) and $\mu>0$, $\alpha\in\left(0,2\right)$ are given constants.

\begin{figure}
\begin{centering}
\includegraphics[scale=0.6]{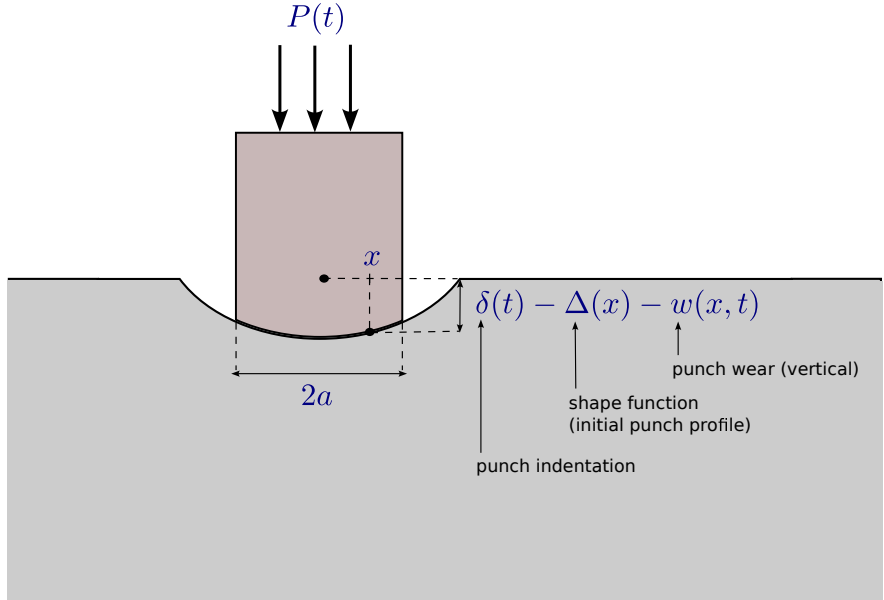}
\par\end{centering}
\caption{\label{fig:stamp} Geometry of the problem. A balance of displacements under the punch is governed by the equation: $\eta p\left(x,t\right)+\int_{-a}^{a}K\left(x-\xi\right)p\left(\xi,t\right)\dd\xi=\delta\left(t\right)-\Delta\left(x\right)-w\left(x,t\right)$, $x\in\left(-a,a\right)$, $t\geq 0$.}
\end{figure}

\subsection*{Relation to other models}

The considered model is an amalgam as it attempts to treat different
settings at once as we shall discuss here.

First of all, we perform most of the analysis for a rather general
form of the kernel function $K\left(x\right)$ entering (\ref{eq:p_delta_eq})
and (\ref{eq:p0_int_eq}) while only making some assumptions on it.
This allows dealing with an elastic half-plane or an elastic layer
subject to different boundary conditions at its foundation, see e.g.
\cite[Par. 11]{AlblKuip1,AlblKuip2,Vor}. In Subsection \ref{subsec:concr_K},
we also provide a more detailed study for a concrete kernel function
pertinent to the half-plane geometry.

Works \cite{ArgFad,ArgChai1,ArgChai2,Kom} involve counterparts of
(\ref{eq:p_delta_eq}) and (\ref{eq:p0_int_eq}) with $\eta=0$. The
case $\eta>0$ accounts for the surface roughness effect or coating
\cite{AleksKov2,Kov} which is due to an additional deformation proportional
to the local pressure \cite[Ch. 2 Par. 8]{Shtaer}. We note that
some aspects of the solution construction and its analysis differ
depending on whether $\eta=0$ or $\eta>0$, and this will be reflected
in our formulations of theorems and propositions given in Sections
\ref{sec:sol}--\ref{sec:sol_analys}.

In most settings, $P\left(t\right)\equiv P_{0}$ was taken in (\ref{eq:equil_eq})
for some constant $P_{0}>0$. A possibility of a non-constant load
is mentioned in \cite{ArgChai1} but not studied. We study the more
general load since, on the one hand, it appears to be of a direct
physical interest and, on the other hand, the presented solution of
the model can be adapted for this case, even though some computational
details become more cumbersome. Despite this generality, we perform
a more concrete analysis (see Section \ref{sec:sol_analys}) for two
particular type of loads: the most important case of a constant load
and the case of a transitional load (when the load may vary in time
but is eventually constant). 

We shall now briefly discuss some relevant works that would motivate
our new model based on a peculiar choice of the wear term. \\
As discussed in Section \ref{sec:intro}, a simple but powerful version
of the wear term is derived from the linear Archard's law. According
to this relation, the wear is linearly proportional to the total sliding
distance and the contact pressure. Consequently, we have
\begin{equation}
\partial_{t}w\left(x,t\right)=\nu p\left(x,t\right)\label{eq:w_Archard_diff}
\end{equation}
for some constant $\nu>0$ related to the material and the sliding
speed. Hence, since at the initial moment $t=0$ the worn material
is absent, i.e. $w\left(x,0\right)\equiv0$ for $x\in\left(-a,a\right)$,
(\ref{eq:w_Archard_diff}) is equivalent to
\begin{equation}
w\left[p\right]\left(x,t\right)\equiv w\left(x,t\right)=\nu\int_{0}^{t}p\left(x,\tau\right)\dd\tau.\label{eq:w_Archard_int}
\end{equation}
Recently, in \cite{ArgChai1}, the authors considered a generalised
version of (\ref{eq:w_Archard_int}), namely,
\[
w\left[p\right]\left(x,t\right)=k_{0}\left(x\right)\int_{0}^{t}p\left(x,\tau\right)\dd\tau
\]
for some suitable function $k_{0}\left(x\right)$. Another generalisation was proposed in \cite{ArgChai4} which, in its simple form, corresponds to
\[
w\left[p\right]\left(x,t\right)=\int_{0}^{t}k_{1}\left(\tau\right)p\left(x,\tau\right)\dd\tau,
\]
but can also include additional dependencies of $k_1$.

Numerous models with nonlinear wear have been considered, in this
case we have (see e.g. \cite{Zhu} and \cite[eq. (6.82)]{Goryach})
\[
w\left[p\right]\left(x,t\right)=\nu\int_{0}^{t}p^{\gamma}\left(x,\tau\right)\dd\tau,
\]
with some nonlinearity parameter $\gamma>0$.

Finally, we mention that, in addition to the wear term, the friction
effects were also considered in \cite{ArgChai2}, by incorporating
in (\ref{eq:p_delta_eq}) another term proportional to $\int_{-a}^{x}p\left(\xi,t\right)\dd\xi$.

\section{Derivation of the model\label{sec:model_deriv}}

While the generalisation of the previous setting by including an extra
term proportional to the solution and allowing load to vary in time
is natural, choosing a complicated looking wear term is a nontrivial
feature of the present model that we shall now discuss. 

\subsection{Towards the wear term (\ref{eq:w_p_term})}

We consider a generalisation of (\ref{eq:w_Archard_diff})--(\ref{eq:w_Archard_int})
that can be obtained in two independent steps. \\
First, we add to the right-hand side of (\ref{eq:w_Archard_diff})
a term $-\mu w\left(x,t\right)$ with some constant $\mu\geq0$. This
would result in (\ref{eq:w_Archard_int}) being replaced by
\begin{equation}
w\left[p\right]\left(x,t\right)=\nu\int_{0}^{t}e^{-\mu\left(t-\tau\right)}p\left(x,\tau\right)\dd\tau,\label{eq:w_nonfrac}
\end{equation}
where we took into account the condition $w\left(x,0\right)=0$ representing
the initial absence of the worn material. We note that models with
such a hereditary wear term has already been considered, see e.g.
\cite{YevtPyr}.\\
At the second step, we replace the differentiation in time in the
left-hand side of (\ref{eq:w_Archard_diff}) with a more general notion
of the derivative. In particular, we use the fractional Riemann-Liouville
derivative of order $\alpha\in\left(0,1\right]$ (for the values $\alpha>1$ see remark at the end of the paragraph) that we denote $D_{t}^{\alpha}$,
namely,
\begin{equation}\label{eq:der_RL_def}
D_{t}^{\alpha}w\left(x,t\right):=\frac{1}{\Gamma\left(1-\alpha\right)}\frac{\partial}{\partial t}\int_{0}^{t}\frac{w\left(x,\tau\right)}{\left(t-\tau\right)^{\alpha}}\dd\tau,
\end{equation}
where $\Gamma$ denotes Gamma function defined in Appendix. Note that,
due to the condition $w\left(x,0\right)=0$, this derivative would
also coincide with the fractional Caputo derivative (see e.g. \cite[eq. (1.19)]{GorMain}).

In other words, we choose to generalise (\ref{eq:w_Archard_diff})
to the following non-homogeneous fractional order ODE
\begin{equation}
D_{t}^{\alpha}w\left(x,t\right)=-\mu w\left(x,t\right)+\nu p\left(x,t\right).\label{eq:w_gener}
\end{equation}

We are now going to obtain the integral form of (\ref{eq:w_gener})
and hence the corresponding relation $w\left[p\right]\left(x,t\right)$.
To this end, we set $\tilde{t}:=\mu^{1/\alpha}t$, $\tilde{w}\left(x,\tilde{t}\right):=w\left(x,\tilde{t}/\mu^{1/\alpha}\right)$,
$\tilde{p}\left(x,\tilde{t}\right):=p\left(x,\tilde{t}/\mu^{1/\alpha}\right)$.
Then, relation (\ref{eq:w_gener}) rewrites as
\begin{equation}
D_{\tilde{t}}^{\alpha}\tilde{w}\left(x,\tilde{t}\right)=D_{\tilde{t}}^{\alpha}\left(\tilde{w}\left(x,\tilde{t}\right)-\tilde{w}\left(x,0\right)\right)=-\tilde{w}\left(x,\tilde{t}\right)+\frac{\nu}{\mu}\tilde{p}\left(x,\tilde{t}\right),\label{eq:w_tilde_eq}
\end{equation}
where we used that $\tilde{w}\left(x,0\right)=w\left(x,0\right)=0$.
A fractional differential equation in the form (\ref{eq:w_tilde_eq})
is solved in \cite[eq. (3.13)]{GorMain} using Laplace transforms.
Taking into account that $\tilde{w}\left(x,0\right)=0$ and $\alpha\in\left(0,1\right]$,
the obtained solution is 
\[
\tilde{w}\left(x,\tilde{t}\right)=-\frac{\nu}{\mu}\int_{0}^{\tilde{t}}\tilde{p}\left(x,\tilde{t}-\tilde{\tau}\right)\mathcal{E}_{\alpha}\left(\tilde{\tau}\right)\dd\tilde{\tau}.
\]
Therefore,
\begin{equation}
w\left[p\right]\left(x,t\right)=\tilde{w}\left(x,\mu^{1/\alpha}t\right)=-\nu\mu^{1/\alpha-1}\int_{0}^{t}p\left(x,t-\tau\right)\mathcal{E}_{\alpha}\left(\mu^{1/\alpha}\tau\right)\dd\tau,\label{eq:w_p_final}
\end{equation}
and thus (\ref{eq:w_p_term}) follows. 
Observe that the assumed range $\alpha\in\left(0,1\right]$ can be extended to values $\alpha>1$ as relation (\ref{eq:w_p_term}) remain meaningful (note, however, that definition (\ref{eq:der_RL_def}) should be replaced with a more general one \cite[eq. (1.13)]{GorMain}). 

Equations (\ref{eq:p_delta_eq})--(\ref{eq:p0_int_eq}) together
with the wear term given by (\ref{eq:w_p_term}) constitute the general
model proposed in the present work. 
\subsection{Consistency}

It is noteworthy that when $\mu=0$ in (\ref{eq:w_gener}), we take
the limit $\mu\rightarrow0$ in (\ref{eq:w_p_term}). In doing so,
we use the asymptotics (\ref{eq:cal_E_a_small}) to obtain
\begin{equation}
w\left[p\right]\left(x,t\right)=\frac{\alpha\nu}{\Gamma\left(1+\alpha\right)}\int_{0}^{t}p\left(x,t-\tau\right)\tau^{\alpha-1}\dd\tau=\frac{\nu}{\Gamma\left(\alpha\right)}\int_{0}^{t}\frac{p\left(x,\tau\right)}{\left(t-\tau\right)^{1-\alpha}}\dd\tau.\label{eq:RL_int}
\end{equation}
The right-hand side of (\ref{eq:RL_int}) contains the Riemann-Liouville
integral of order $\alpha$ which is consistent with the inversion
of the respective fractional derivative appearing in (\ref{eq:w_gener})
(see \cite[eqs. (1.29), (1.2)]{GorMain}).

On the other hand, by taking $\alpha=1$ in (\ref{eq:w_p_term}),
we have $\mathcal{E}_{1}\left(z\right)=-e^{-z}$, and hence we recover
the non-fractional order model corresponding to (\ref{eq:w_nonfrac}). 

Hence, our present model can be succinctly characterised as a model
with the wear term interpolating between the exponential relaxation
and a new fractional order model.

\subsection{The integral equation}

Equations (\ref{eq:p_delta_eq})--(\ref{eq:p0_int_eq}) with the
wear term (\ref{eq:w_p_term}) can be effectively reduced to a single
equation. To achieve that, we integrate (\ref{eq:p_delta_eq}) in
the $x$ variable over $\left(-a,a\right)$ and use (\ref{eq:equil_eq})
and (\ref{eq:w_p_term}). This yields
\begin{equation}
\eta\left(P\left(t\right)-P\left(0\right)\right)+\int_{-a}^{a}K_{1}\left(\xi\right)\left(p\left(\xi,t\right)-p\left(\xi,0\right)\right)\dd\xi-\nu\mu^{1/\alpha-1}\int_{0}^{t}\mathcal{E}_{\alpha}\left(\mu^{1/\alpha}\left(t-\tau\right)\right)P\left(\tau\right)\dd\tau=2a\left(\delta\left(t\right)-\delta\left(0\right)\right),\hspace{1em}t>0,\label{eq:p_delta_eq_int}
\end{equation}
where we introduced the function $K_{1}\left(\xi\right):=\int_{-a}^{a}K\left(\zeta-\xi\right)\dd\zeta$.

Dividing (\ref{eq:p_delta_eq_int}) over $2a$ and subtracting the
resulting equation from (\ref{eq:p_delta_eq}), we eliminate $\delta\left(t\right)-\delta\left(0\right)$
and obtain, for $x\in\left(-a,a\right)$, $t>0$,
\begin{align}
\eta\left(p\left(x,t\right)-p\left(x,0\right)-\frac{P\left(t\right)-P\left(0\right)}{2a}\right)+\int_{-a}^{a}\left[K\left(x-\xi\right)-\frac{1}{2a}K_{1}\left(\xi\right)\right]\left(p\left(\xi,t\right)-p\left(\xi,0\right)\right)\dd\xi=\label{eq:p_eq}\\
=\frac{\nu}{\mu^{1-1/\alpha}}\int_{0}^{t}\mathcal{E}_{\alpha}\left(\mu^{1/\alpha}\left(t-\tau\right)\right)\left(p\left(x,\tau\right)-\frac{P\left(\tau\right)}{2a}\right)\dd\tau.\nonumber 
\end{align}
This is an integral equation featuring only one unknown $p\left(x,t\right)$.
Note that $p\left(x,0\right)$ can be assumed to be known from solving
(\ref{eq:p0_int_eq}) with the unknown $\delta\left(0\right)$ that
is found \textit{a posteriori} from condition (\ref{eq:equil_eq})
imposed at $t=0$. As we shall further discuss in Subsection \ref{subsec:sol_exist}
(see Corollary \ref{cor:eta_pos}), the assumption on the knowledge
of $p\left(x,0\right)$ does not restrict generality as long as $\eta>0$. 

The obtained relation (\ref{eq:p_eq}) is an integral equation of
the mixed type in two variables: it contains both Fredholm and Volterra
integral operators in the spatial and temporal parts, respectively. 

\section{Solution by separation of variables\label{sec:sol}}

Since temporal and spatial operators appearing in equation (\ref{eq:p_eq})
are not intertwined, it is natural to attempt solving the problem
with the method of separation of variables. Therefore, we first focus
on the spatial part of the problem and study the appropriate functional
setting for its solution. 

\subsection{\label{subsec:prelims} Notation, assumptions and their implications}

We start by fixing the notation. We shall use $L^{p}\left(-a,a\right)$
to denote the space of all real-valued functions on $\left(-a,a\right)$
whose $p$-th power is Lebesgue integrable on the interval $\left(-a,a\right)$.
In particular, $L^{2}\left(-a,a\right)$ is a Hilbert space endowed
with the inner product and norm defined as 
\begin{equation}
\left\langle f,g\right\rangle :=\int_{-a}^{a}f\left(x\right)g\left(x\right)\dd x,\hspace{1em}\left\Vert f\right\Vert :=\sqrt{\left\langle f,f\right\rangle },\hspace{1em}f,g\in L^{2}\left(-a,a\right).\label{eq:inner_prod}
\end{equation}
We will also need the space $L_{0}^{2}\left(-a,a\right)$ a subspace
of $L^{2}\left(-a,a\right)$ that consists of all square-integrable
functions with vanishing mean value on $\left(-a,a\right)$, i.e.
\begin{equation}
L_{0}^{2}\left(-a,a\right):=\left\{ f\in L^{2}\left(-a,a\right):\;\int_{-a}^{a}f\left(x\right)\dd x=0\right\} .\label{eq:L0_def}
\end{equation}
Note that $L_{0}^{2}\left(-a,a\right)$ is a closed subspace of $L^{2}\left(-a,a\right)$,
and hence is also a Hilbert space. Given $T>0$, let us denote $C_{b}\left(\left[0,T\right]\right)$,
$C_{b}\left(\mathbb{R}_{+}\right)$ the spaces of bounded continuous
functions on the closed interval $\left[0,T\right]$ and the positive
half-line $\mathbb{R}_{+}:=\left[0,\infty\right)$. 

\begin{assm}[Parity and real-valuedness]\label{assm:K_par}
Suppose that $K\left(x\right)$ is a real-valued even function on $\left(-a,a\right)$:
$$K\left(-x\right)=K\left(x\right)\in\mathbb{R},\hspace{1em} x\in\left(-a,a\right).$$
\end{assm}

\begin{assm}[Hilbert-Schmidt regularity]\label{assm:K_reg}
Suppose that $K\left(x\right)$ is sufficiently regular, namely, we assume validity of the following integrability condition:
$$\int_{-a}^a\int_{-a}^a\left[K\left(x-\xi\right)\right]^2 \dd x \dd \xi< \infty.$$
\end{assm}

\begin{assm}[Positive semidefiniteness]\label{assm:K_pos}
We suppose that, for any function $f\in L^{2}\left(-a,a\right)$, the following condition holds true:
$$\int_{-a}^{a}f\left(x\right)\int_{-a}^{a}K\left(x-\xi\right)f\left(\xi\right)\dd\xi\dd x\geq0.$$
\end{assm}

Because of Assumption \ref{assm:K_reg}, the condition of Lemma \ref{lem:HS_kern}
is satisfied, and therefore the integral transformation given by
\begin{equation}
f\mapsto\mathcal{K}\left[f\right]\left(x\right):=\int_{-a}^{a}K\left(x-\xi\right)f\left(\xi\right)\dd\xi\label{eq:K_op_def}
\end{equation}
defines a compact linear operator $\mathcal{K}:$ $L^{2}\left(-a,a\right)\rightarrow L^{2}\left(-a,a\right)$.
Moreover, due to Assumption \ref{assm:K_par}, this operator is self-adjoint.
Consequently, by the spectral theory for compact self-adjoint operators
(Lemma \ref{lem:A_spec}), it follows that, for any $f\in L^{2}\left(-a,a\right)$,
we can write
\begin{equation}
f\left(x\right)=\sum_{n=1}^{\infty}f_{n}\varphi_{n}\left(x\right)+f_{\perp}\left(x\right),\label{eq:f_decomp1}
\end{equation}
where $f_{\perp}\left(x\right)\in\text{Ker }\mathcal{K}$ and $\varphi_{n}\left(x\right)$,
$n\geq1$, are normalised eigenfunctions of the operator $\mathcal{K}$
with non-zero eigenvalues. In other words, we have 
\begin{equation}
\int_{-a}^{a}K\left(x,\xi\right)f_{\perp}\left(\xi\right)\dd\xi=0,\hspace{1em}x\in\left(-a,a\right),\label{eq:K_null_sp}
\end{equation}
and
\begin{equation}
\int_{-a}^{a}K\left(x,\xi\right)\varphi_{n}\left(\xi\right)\dd\xi=\lambda_{n}\varphi_{n}\left(x\right),\hspace{1em}x\in\left(-a,a\right),\label{eq:K_spectral_pbm}
\end{equation}
with some $\lambda_{n}\neq0$, $n\geq1$, which are eigenvalues (in
general, repeated sequentially according to their multiplicity).

Setting
\begin{equation}
K_{2}\left(x,\xi\right):=K\left(x-\xi\right)-\frac{1}{2a}K_{1}\left(x\right)-\frac{1}{2a}K_{1}\left(\xi\right),\hspace{1em}\hspace{1em}K_{1}\left(x\right):=\int_{-a}^{a}K\left(\zeta-x\right)\dd\zeta,\label{eq:K2K1_def}
\end{equation}
we observe that
\begin{equation}
\int_{-a}^{a}K_{2}\left(x,\xi\right)\dd x=-\frac{1}{2a}\int_{-a}^{a}K_{1}\left(x\right)\dd x=:c_{0},\label{eq:c0_def}
\end{equation}
The right-hand side of (\ref{eq:c0_def}) is constant (independent
of $\xi$), and hence, we obtain a relation
\begin{equation}
\int_{-a}^{a}\int_{-a}^{a}K_{2}\left(x,\xi\right)\dd\xi\dd x=-\int_{-a}^{a}K_{1}\left(x\right)\dd x.\label{eq:K2_K1_rel}
\end{equation}
Moreover, for any $f\in L_{0}^{2}\left(-a,a\right)$, we have

\[
\int_{-a}^{a}\int_{-a}^{a}K_{2}\left(x,\xi\right)f\left(\xi\right)\dd\xi\dd x=0.
\]
Therefore, using Lemma \ref{lem:HS_kern}, we can define $\mathcal{K}_{2}:$
$L_{0}^{2}\left(-a,a\right)\rightarrow L_{0}^{2}\left(-a,a\right)$,
a compact linear operator that corresponds to the integral transformation
with the kernel function $K_{2}\left(x,\xi\right)$:
\begin{equation}
f\mapsto\mathcal{K}_{2}\left[f\right]\left(x\right):=\int_{-a}^{a}K_{2}\left(x,\xi\right)f\left(\xi\right)\dd\xi.\label{eq:K2_op_def}
\end{equation}
By the symmetry of $K_{2}\left(x,\xi\right)$, the operator $\mathcal{K}_{2}$
is self-adjoint and hence, using Lemma \ref{lem:A_spec}, we deduce
that, for any $f\in L_{0}^{2}\left(-a,a\right)$, we can write
\begin{equation}
f\left(x\right)=\sum_{n=1}^{\infty}\widetilde{f}_{n}\phi_{n}\left(x\right)+\widetilde{f}_{\perp}\left(x\right),\label{eq:f_decomp2}
\end{equation}
where $\widetilde{f}_{n}:=\left\langle f,\phi_{n}\right\rangle $,
$\widetilde{f}_{\perp}:=f-\sum_{n=1}^{\infty}\widetilde{f}_{n}\phi_{n}\in\text{Ker }\mathcal{K}_{2}$,
and $\phi_{n}$, $n\geq1$, are normalised (i.e. $\left\Vert \phi_{n}\right\Vert =1$,
$n\geq1$) eigenfunctions of the operator $\mathcal{K}_{2}$ with
non-zero eigenvalues. That is, we have 
\begin{equation}
\int_{-a}^{a}K_{2}\left(x,\xi\right)\widetilde{f}_{\perp}\left(\xi\right)\dd\xi=0,\hspace{1em}x\in\left(-a,a\right),\label{eq:K2_null_sp}
\end{equation}
and
\begin{equation}
\int_{-a}^{a}K_{2}\left(x,\xi\right)\phi_{n}\left(\xi\right)\dd\xi=\sigma_{n}\phi_{n}\left(x\right),\hspace{1em}x\in\left(-a,a\right),\label{eq:K2_spectral_pbm}
\end{equation}
with some $\sigma_{n}\neq0$, $n\geq1$, which are eigenvalues, repeated
consequently if multiple.

Note that $\varphi_{k}\perp\varphi_{n}$ (meaning that $\left\langle \varphi_{k},\varphi_{n}\right\rangle =0$)
for any $k\neq n\geq1$, and similarly, $\phi_{k}\perp\phi_{n}$,
$k\neq n\geq1$. This orthogonality is automatic when eigenfunctions
correspond to different eigenvalues. When belonging to the same eigensubspace,
we assume that they have already been orthogonalised (e.g. by the
Gram-Schmidt procedure). Moreover, since each $\phi_{n}\in L_{0}^{2}\left(-a,a\right)$,
we have
\begin{equation}
\left\langle 1,\phi_{n}\right\rangle =\int_{-a}^{a}\phi_{n}\left(x\right)\dd x=0,\hspace{1em}n\geq1.\label{eq:phi_n_zeromean}
\end{equation}

We shall now show deduce more information about eigenvalues of the
operators $\mathcal{K}$ and $\mathcal{K}_{2}$.
\begin{prop}
\label{prop:K2_spec_pos} The operator $\mathcal{K}_{2}:\,L_{0}^{2}\left(-a,a\right)\rightarrow L_{0}^{2}\left(-a,a\right)$
defined by (\ref{eq:K2_op_def}) is positive semidefinite, and thus
$\sigma_{n}\geq0$.
\end{prop}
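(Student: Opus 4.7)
The plan is to compute the quadratic form $\langle \mathcal{K}_{2}f,f\rangle$ directly for an arbitrary $f\in L_{0}^{2}\left(-a,a\right)$ and exploit the fact that such $f$ has zero mean on $\left(-a,a\right)$ to reduce the expression to the quadratic form of $\mathcal{K}$, on which Assumption \ref{assm:K_pos} applies.

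First, I would unfold the definition (\ref{eq:K2_op_def}) together with (\ref{eq:K2K1_def}) to write
\[
\langle \mathcal{K}_{2}f,f\rangle=\int_{-a}^{a}\!\!\int_{-a}^{a}f\left(x\right)K\left(x-\xi\right)f\left(\xi\right)\dd\xi\dd x-\frac{1}{2a}\left(\int_{-a}^{a}f\left(x\right)K_{1}\left(x\right)\dd x\right)\left(\int_{-a}^{a}f\left(\xi\right)\dd\xi\right)-\frac{1}{2a}\left(\int_{-a}^{a}f\left(x\right)\dd x\right)\left(\int_{-a}^{a}K_{1}\left(\xi\right)f\left(\xi\right)\dd\xi\right).
\]
Because $f\in L_{0}^{2}\left(-a,a\right)$, the factors $\int_{-a}^{a}f\left(x\right)\dd x$ and $\int_{-a}^{a}f\left(\xi\right)\dd\xi$ both vanish by (\ref{eq:L0_def}), killing the two cross terms. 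What remains is exactly the quadratic form to which Assumption \ref{assm:K_pos} applies, so $\langle \mathcal{K}_{2}f,f\rangle\geq 0$.

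Second, for the eigenvalue statement, I would apply this identity to $f=\phi_{n}$. Since $\phi_{n}\in L_{0}^{2}\left(-a,a\right)$ by construction (cf. (\ref{eq:phi_n_zeromean})) and $\|\phi_{n}\|=1$, the spectral relation (\ref{eq:K2_spectral_pbm}) yields $\sigma_{n}=\langle \mathcal{K}_{2}\phi_{n},\phi_{n}\rangle\geq 0$.

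No genuine obstacle is anticipated: the proof is essentially an accounting exercise in which the mean-zero constraint defining $L_{0}^{2}\left(-a,a\right)$ is specifically what annihilates the two subtracted rank-one corrections in $K_{2}$, thereby transferring positivity from $\mathcal{K}$ (Assumption \ref{assm:K_pos}) to $\mathcal{K}_{2}$. The only subtlety worth flagging is to confirm that $\phi_{n}\in L_{0}^{2}\left(-a,a\right)$ (rather than merely in $L^{2}$), which is ensured by the fact that $\mathcal{K}_{2}$ was set up as an operator from $L_{0}^{2}\left(-a,a\right)$ into itself and its eigenfunctions are picked from this subspace.
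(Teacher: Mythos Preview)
Your proof is correct and rests on the same key observation as the paper's: for $f\in L_{0}^{2}\left(-a,a\right)$ the quadratic forms of $\mathcal{K}$ and $\mathcal{K}_{2}$ coincide (the paper records this as $\langle\mathcal{K}[f]-\mathcal{K}_{2}[f],f\rangle=0$), so positive semidefiniteness transfers from $\mathcal{K}$ via Assumption~\ref{assm:K_pos}. The only difference is cosmetic: the paper wraps the conclusion in a contradiction argument using the Rayleigh principle (Lemma~\ref{lem:Rayl-Hilb}), whereas you proceed directly---your route is shorter and equally valid.
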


\begin{proof}
First of all, note that according to (\ref{eq:K2K1_def}), for any
$f\in L_{0}^{2}\left(-a,a\right)$, we have
\begin{equation}
\left\langle \mathcal{K}\left[f\right]-\mathcal{K}_{2}\left[f\right],f\right\rangle =0.\label{eq:K2_min_K}
\end{equation}

Let us assume that there is at least one eigenvalue $\sigma_{1}<0$
(if there are few, we assume that $\sigma_{1}$ is the largest in
absolute value) and we shall derive a contradiction.

Observe that, by the variational characterisation of the smallest
negative eigenvalue (Rayleigh principle) given by Lemma \ref{lem:Rayl-Hilb},
we have 
\begin{equation}
\sigma_{1}=\min_{f\in L_{0}^{2}(-a,a)}\frac{\left<\mathcal{K}_{2}[f],f\right>}{\left\Vert f\right\Vert ^{2}}=\min_{f\in L_{0}^{2}(-a,a)}\frac{\left<\mathcal{K}[f],f\right>}{\left\Vert f\right\Vert ^{2}},\label{eq:lmb1_neg}
\end{equation}
where the second equality is due to (\ref{eq:K2_min_K}).

By positive semidefiniteness of $\mathcal{K}$, we have $\left<\mathcal{K}[f],f\right>\geq0$,
and hence equation (\ref{eq:lmb1_neg}) shows that $\sigma_{1}\geq0$
which contradicts the assumption $\sigma_{1}<0$. Therefore, all eigenvalues
$\sigma_{n}$, $n\geq1$, are non-negative, and hence $\mathcal{K}_{2}$
is a positive semidefinite operator, that is, we have 
\[
\left<\mathcal{K}_{2}[f],f\right>\geq0
\]
for any $f\in L_{0}^{2}(-a,a)$.
\end{proof}
\begin{prop}
\label{prop:K2_K_link} Eigenvalues $\lambda_{n}$ and $\sigma_{n}$,
$n\geq1$, defined by (\ref{eq:K_spectral_pbm}) and (\ref{eq:K2_spectral_pbm}),
respectively, satisfy the following inequalities:
\begin{equation}
\lambda_{n+1}\leq\sigma_{n}\leq\lambda_{n},\hspace{1em}\hspace{1em}n\geq1.\label{eq:sigm_bnds}
\end{equation}
\end{prop}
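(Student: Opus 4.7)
The plan is to derive the interlacing as an infinite-dimensional Cauchy interlacing: $\mathcal{K}_2$, viewed on $L_0^2(-a,a)$, is essentially the compression of $\mathcal{K}$ to the codimension-one subspace $L_0^2(-a,a)$. The starting point is the identity
\[
\langle \mathcal{K}_2[f],f\rangle = \langle \mathcal{K}[f],f\rangle, \qquad f\in L_0^2(-a,a),
\]
which is nothing but (\ref{eq:K2_min_K}) rewritten bilinearly. Combined with Assumption \ref{assm:K_pos} and Proposition \ref{prop:K2_spec_pos}, this lets me order both eigenvalue sequences non-increasingly, $\lambda_1\geq\lambda_2\geq\ldots>0$ and $\sigma_1\geq\sigma_2\geq\ldots\geq 0$, and to freely swap one Rayleigh quotient for the other whenever the test vector lies in $L_0^2(-a,a)$.

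To obtain $\sigma_n\leq\lambda_n$, I would invoke the Courant--Fischer max--min characterisation of $\lambda_n$ and use the test subspace $W_n:=\mathrm{span}(\phi_1,\ldots,\phi_n)\subset L_0^2(-a,a)$. For $f=\sum_{k=1}^{n}c_k\phi_k\in W_n$ one has $\langle \mathcal{K}_2[f],f\rangle=\sum_k\sigma_k c_k^2\geq\sigma_n\|f\|^2$, and via the identity the same bound holds with $\mathcal{K}$ replacing $\mathcal{K}_2$. Taking the infimum over $W_n\setminus\{0\}$ and using $\lambda_n\geq\min_{f\in W_n\setminus\{0\}}\langle \mathcal{K}[f],f\rangle/\|f\|^2$ yields $\lambda_n\geq\sigma_n$.

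For $\lambda_{n+1}\leq\sigma_n$, I would start from the $(n+1)$-dimensional subspace $V_{n+1}:=\mathrm{span}(\varphi_1,\ldots,\varphi_{n+1})\subset L^2(-a,a)$. Since $L_0^2(-a,a)$ has codimension one in $L^2(-a,a)$, the intersection $W:=V_{n+1}\cap L_0^2(-a,a)$ has dimension at least $n$; pick any $n$-dimensional subspace $W'\subset W$. For $f=\sum_{k=1}^{n+1}c_k\varphi_k\in W'$, $\langle \mathcal{K}[f],f\rangle=\sum_k\lambda_k c_k^2\geq\lambda_{n+1}\|f\|^2$, so by the identity the Rayleigh quotient of $\mathcal{K}_2$ on $W'$ is also bounded below by $\lambda_{n+1}$. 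The max--min characterisation of $\sigma_n$ over $n$-dimensional subspaces of $L_0^2(-a,a)$ then delivers $\sigma_n\geq\lambda_{n+1}$.

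The main obstacle is not analytic but rhetorical: the excerpt cites only the Rayleigh characterisation of a single extremal eigenvalue (Lemma \ref{lem:Rayl-Hilb}), whereas the argument above calls for the full Courant--Fischer max--min formula for the $n$-th eigenvalue of a compact self-adjoint positive semidefinite operator. The cleanest fix is to add that statement to the appendix; alternatively one can replace the max--min appeal by an induction that applies Lemma \ref{lem:Rayl-Hilb} successively on the orthogonal complements of $\varphi_1,\ldots,\varphi_{n-1}$ and of $\phi_1,\ldots,\phi_{n-1}$. Either route is routine once identity (\ref{eq:K2_min_K}), which carries the entire substantive content, is in hand.
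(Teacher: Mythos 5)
Your proof is correct but takes the dual route to the paper's. The paper invokes the min--max form of the Courant--Fischer principle (stated in the appendix as Lemma~\ref{lem:Cour-Fisch}): for $\sigma_n \le \lambda_n$ it fixes the $n-1$ obstruction vectors $\varphi_1,\dots,\varphi_{n-1}$ and takes the maximum over their orthogonal complement, and for $\lambda_{n+1}\le \sigma_n$ it appends the constant function $\mathbf{1}$ to an arbitrary $(n-1)$-tuple of obstructions, which constrains the maximization to $L_0^2(-a,a)$. You instead use the max--min form, selecting the $n$-dimensional trial subspace $\mathrm{span}(\phi_1,\dots,\phi_n)$ for the upper bound, and the dimension-$n$ slice $V_{n+1}\cap L_0^2(-a,a)$ of $\mathrm{span}(\varphi_1,\dots,\varphi_{n+1})$ for the lower bound. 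In both arguments the whole analytic content is identity~\eqref{eq:K2_min_K} together with the fact that $L_0^2(-a,a)$ has codimension one; the two versions of Courant--Fischer are equivalent, so neither route is substantially harder. If anything, your dimension count $\dim(V_{n+1}\cap L_0^2)\ge n$ makes the lower bound slightly more transparent than the paper's ``append~$\mathbf{1}$'' trick. One correction to your closing remark: the paper's appendix does already contain a Courant--Fischer-type lemma beyond the bare Rayleigh characterization, namely the min--max form in Lemma~\ref{lem:Cour-Fisch}; your argument would nonetheless require stating the dual max--min form (or rephrasing your bounds in the min--max language), since only the min--max version is recorded.
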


\begin{proof}
By Proposition \ref{prop:K2_spec_pos}, we have $\sigma_{n}\geq0$,
$n\geq1$. Furthermore, the Weyl-Courant-Fischer min-max principle
for characterisation of positive eigenvalues (Lemma \ref{lem:Cour-Fisch})
gives, for $n\geq2$, 
\begin{align}
\sigma_{n}= & \min_{q_{1},\ldots,q_{n-1}\in L^{2}(-a,a)}\max_{\substack{f\in L_{0}^{2}(-a,a)\\
f\perp\left(q_{1},\ldots,q_{n-1}\right)
}
}\frac{\left<\mathcal{K}_{2}[f],f\right>}{\left\Vert f\right\Vert ^{2}}\leq\max_{\substack{f\in L_{0}^{2}(-a,a)\\
f\perp\left(\varphi_{1},\ldots,\varphi_{n-1}\right)
}
}\frac{\left<\mathcal{K}_{2}[f],f\right>}{\left\Vert f\right\Vert ^{2}}\label{eq:sigm_n_ubnd}\\
= & \max_{\substack{f\in L_{0}^{2}(-a,a)\\
f\perp\left(\varphi_{1},\ldots,\varphi_{n-1}\right)
}
}\frac{\left<\mathcal{K}[f],f\right>}{\left\Vert f\right\Vert ^{2}}\leq\max_{\substack{f\in L^{2}(-a,a)\\
f\perp\left(\varphi_{1},\ldots,\varphi_{n-1}\right)
}
}\frac{\left<\mathcal{K}[f],f\right>}{\left\Vert f\right\Vert ^{2}}=\lambda_{n},\nonumber 
\end{align}
where we used (\ref{eq:K2_min_K}) and, in the last equality on the
second line, we employed the Rayleigh's variational characterisation
for positive eigenvalues (Lemma \ref{lem:Rayl-Hilb}). For $n=1$,
the situation is even simpler: 
\begin{align}
\sigma_{1}= & \max_{f\in L_{0}^{2}(-a,a)}\frac{\left<\mathcal{K}_{2}[f],f\right>}{\left\Vert f\right\Vert ^{2}}=\max_{f\in L_{0}^{2}(-a,a)}\frac{\left<\mathcal{K}[f],f\right>}{\left\Vert f\right\Vert ^{2}}\leq\max_{f\in L^{2}(-a,a)}\frac{\left<\mathcal{K}[f],f\right>}{\left\Vert f\right\Vert ^{2}}=\lambda_{1}.\label{eq:sigm_1_ubnd}
\end{align}
We can also obtain the lower bound estimate for eigenvalues $\sigma_{n}$.
To this effect, we apply the Weyl-Courant-Fischer min-max principle
twice: for $\lambda_{n}$ and $\sigma_{n-1}$. Namely, we have, for
$n\geq3$, 
\begin{align}
\lambda_{n}= & \min_{q_{1},\ldots,q_{n-1}\in L^{2}(-a,a)}\max_{\substack{f\in L^{2}(-a,a)\\
f\perp\left(q_{1},\ldots,q_{n-1}\right)
}
}\frac{\left<\mathcal{K}[f],f\right>}{\left\Vert f\right\Vert ^{2}}\leq\min_{q_{1},\ldots,q_{n-2}\in L^{2}(-a,a)}\max_{\substack{f\in L^{2}(-a,a)\\
f\perp\left(q_{1},\ldots,q_{n-2},1\right)
}
}\frac{\left<\mathcal{K}[f],f\right>}{\left\Vert f\right\Vert ^{2}}\label{eq:sigm_n_m1_lbnd}\\
= & \min_{q_{1},\ldots,q_{n-2}\in L^{2}(-a,a)}\max_{\substack{f\in L_{0}^{2}(-a,a)\\
f\perp\left(q_{1},\ldots,q_{n-2}\right)
}
}\frac{\left<\mathcal{K}[f],f\right>}{\left\Vert f\right\Vert ^{2}}=\min_{q_{1},\ldots,q_{n-2}\in L^{2}(-a,a)}\max_{\substack{f\in L_{0}^{2}(-a,a)\\
f\perp\left(q_{1},\ldots,q_{n-2}\right)
}
}\frac{\left<\mathcal{K}_{2}[f],f\right>}{\left\Vert f\right\Vert ^{2}}=\sigma_{n-1}.\nonumber 
\end{align}
Similarly, we can also obtain: 
\begin{align}
\lambda_{2}=\min_{q_{1}\in L^{2}(-a,a)}\max_{\substack{f\in L^{2}(-a,a)\\
f\perp q_{1}
}
}\frac{\left<\mathcal{K}[f],f\right>}{\left\Vert f\right\Vert ^{2}}\leq\max_{f\in L_{0}^{2}(-a,a)}\frac{\left<\mathcal{K}[f],f\right>}{\left\Vert f\right\Vert ^{2}}=\max_{f\in L_{0}^{2}(-a,a)}\frac{\left<\mathcal{K}_{2}[f],f\right>}{\left\Vert f\right\Vert ^{2}}=\sigma_{1}.\label{eq:sigm_1_lbnd}
\end{align}
Combining the estimates obtained in (\ref{eq:sigm_n_ubnd})--(\ref{eq:sigm_1_lbnd}),
we arrive at (\ref{eq:sigm_bnds}).
\end{proof}

\subsection{\label{subsec:sol_exist} Solution existence, uniqueness and construction}

Let us set 
\begin{equation}
q\left(x,t\right):=p\left(x,t\right)-\frac{P\left(t\right)}{2a},\hspace{1em}x\in\left(-a,a\right),\hspace{1em}t\geq0,\label{eq:q_def}
\end{equation}
and observe that 
\begin{equation}
\int_{-a}^{a}q\left(x,t\right)\dd x=0,\hspace{1em}t\geq0.\label{eq:q_zeromean}
\end{equation}
Then, equation (\ref{eq:p_eq}) can be equivalently rewritten as
\begin{equation}
\eta q\left(x,t\right)+\mathcal{K}_{2}\left[q\right]\left(x,t\right)-\frac{\nu}{\mu^{1-1/\alpha}}\int_{0}^{t}\mathcal{E}_{\alpha}\left(\mu^{1/\alpha}\left(t-\tau\right)\right)q\left(x,\tau\right)\dd\tau=F\left(x,t\right),\hspace{1em}x\in\left(-a,a\right),\hspace{1em}t>0,\label{eq:q_eq}
\end{equation}
where
\[
\mathcal{K}_{2}\left[q\right]\left(x,t\right):=\int_{-a}^{a}\left[K\left(x-\xi\right)-\frac{1}{2a}K_{1}\left(\xi\right)-\frac{1}{2a}K_{1}\left(x\right)\right]q\left(\xi,t\right)\dd\xi,
\]
\begin{align}
F\left(x,t\right):= & \eta q\left(x,0\right)+\mathcal{K}_{2}\left[q\right]\left(x,0\right)-\frac{P\left(t\right)-P\left(0\right)}{2a}\left(K_{1}\left(x\right)+c_{0}\right),\label{eq:F_def}
\end{align}
and $c_{0}$ is a constant defined in (\ref{eq:c0_def}).

With this preparation, our main results concerning the solution of
the model can be formulated as two theorems below. In their statements
and proofs, we will, according to the aforementioned convention, employ
the simplified notation $\left\langle \cdot,\cdot\right\rangle $,
$\left\Vert \cdot\right\Vert $ to denote the inner product and the
norm in $L^{2}\left(-a,a\right)$, respectively.
\begin{thm}
\label{thm:main_eta_pos} Assume that $\mu\geq0$, $\eta$,
$\nu>0$, $\alpha\in	\left(0,2\right)$, $P\in C_{b}\left(\mathbb{R}_{+}\right)$ and $p\left(\cdot,0\right)\in L^{2}\left(-a,a\right)$.
Suppose that $K$ satisfying Assumptions \ref{assm:K_par}--\ref{assm:K_pos}
is such that the equation
\begin{equation}
\mathcal{K}_{2}\left[\phi\right]\left(x\right):=\int_{-a}^{a}\left[K\left(x-\xi\right)-\frac{1}{2a}\int_{-a}^{a}\left(K\left(\zeta-x\right)+K\left(\zeta-\xi\right)\right)\dd\zeta\right]\phi\left(\xi\right)\dd\xi=0,\hspace{1em}x\in\left(-a,a\right),\label{eq:K_kern_cond}
\end{equation}
has at most one non-zero solution $\phi_{\perp}\in L_{0}^{2}\left(-a,a\right)$
with $\left\Vert \phi_{\perp}\right\Vert =1$, i.e. $\text{dim}\left(\text{Ker }\mathcal{K}_{2}\right)\leq1$.
Then, equations (\ref{eq:p_eq}) and (\ref{eq:q_eq}) have unique
solutions in $C_{b}\left(\mathbb{R}_{+};L^{2}\left(-a,a\right)\right)$
and $C_{b}\left(\mathbb{R}_{+};L_{0}^{2}\left(-a,a\right)\right)$,
respectively. These solutions are given by
\begin{equation}
p\left(x,t\right)=\frac{P\left(t\right)}{2a}+\sum_{k=1}^{\infty}d_{k}\left(t\right)\phi_{k}\left(x\right)+d_{\perp}\left(t\right)\phi_{\perp}\left(x\right),\hspace{1em}x\in\left(-a,a\right),\hspace{1em}t\geq0,\label{eq:p_sol}
\end{equation}
\begin{equation}
q\left(x,t\right)=\sum_{k=1}^{\infty}d_{k}\left(t\right)\phi_{k}\left(x\right)+d_{\perp}\left(t\right)\phi_{\perp}\left(x\right),\hspace{1em}x\in\left(-a,a\right),\hspace{1em}t\geq0,\label{eq:q_sol}
\end{equation}
where $\phi_{k}$, $\sigma_{k}$, $k\geq1$, are eigenfunctions and
eigenvalues of the operator $\mathcal{K}_{2}$ defined as in (\ref{eq:K2_spectral_pbm}),
and
\begin{align}
d_{k}\left(t\right):= & d_{k}^{0}\left[1+\frac{\nu}{\mu\left(\eta+\sigma_{k}\right)+\nu}\left(E_{\alpha}\left(-\left(\mu+\frac{\nu}{\eta+\sigma_{k}}\right)t^{\alpha}\right)-1\right)\right]-\frac{l_{k}}{2a\left(\eta+\sigma_{k}\right)}\left(P\left(t\right)-P\left(0\right)\right)\label{eq:d_k_final}\\
 & -\frac{\nu l_{k}}{2a\left(\eta+\sigma_{k}\right)^{2}}\left(\mu+\frac{\nu}{\eta+\sigma_{k}}\right)^{1/\alpha-1}\int_{0}^{t}\mathcal{E}_{\alpha}\left(\left(\mu+\frac{\nu}{\eta+\sigma_{k}}\right)^{1/\alpha}\left(t-\tau\right)\right)\left[P\left(\tau\right)-P\left(0\right)\right]\dd\tau,\hspace{1em}k\geq1,\nonumber 
\end{align}
\begin{align}
d_{\perp}\left(t\right):= & d_{\perp}^{0}\left[1+\frac{\nu}{\mu\eta+\nu}\left(E_{\alpha}\left(-\left(\mu+\frac{\nu}{\eta}\right)t^{\alpha}\right)-1\right)\right]-\frac{l_{\perp}}{2a\eta}\left(P\left(t\right)-P\left(0\right)\right)\label{eq:d_perp_final}\\
 & -\frac{\nu l_{\perp}}{2a\eta^{2}}\left(\mu+\frac{\nu}{\eta}\right)^{1/\alpha-1}\int_{0}^{t}\mathcal{E}_{\alpha}\left(\left(\mu+\frac{\nu}{\eta}\right)^{1/\alpha}\left(t-\tau\right)\right)\left[P\left(\tau\right)-P\left(0\right)\right]\dd\tau,\nonumber 
\end{align}
with 
\begin{equation}
d_{\perp}^{0}:=\left\langle p\left(\cdot,0\right),\phi_{\perp}\right\rangle ,\hspace{1em}l_{\perp}:=\left\langle K_{1},\phi_{\perp}\right\rangle ,\hspace{1em}d_{k}^{0}:=\left\langle p\left(\cdot,0\right),\phi_{k}\right\rangle ,\hspace{1em}l_{k}:=\left\langle K_{1},\phi_{k}\right\rangle ,\hspace{1em}k\geq1.\label{eq:d_k_0_l_k_def}
\end{equation}
\end{thm}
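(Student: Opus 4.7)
The plan is to use the spectral decomposition of $\mathcal{K}_2$ from Subsection \ref{subsec:prelims} in order to reduce (\ref{eq:q_eq}) to a decoupled family of scalar Volterra equations, one per eigenmode. Since any candidate solution $q(\cdot,t)$ lies in $L_0^2(-a,a)$ for every $t\geq 0$, decomposition (\ref{eq:f_decomp2}) combined with the hypothesis $\dim(\text{Ker }\mathcal{K}_2)\leq 1$ motivates the ansatz $q(x,t)=\sum_{k\geq 1}d_k(t)\phi_k(x)+d_\perp(t)\phi_\perp(x)$, with the convention $d_\perp\equiv 0$ when $\text{Ker }\mathcal{K}_2=\{0\}$.

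I would then substitute this ansatz into (\ref{eq:q_eq}) and pair it in the $L^2$ inner product with each $\phi_k$ (respectively $\phi_\perp$), using self-adjointness of $\mathcal{K}_2$, the eigenrelation (\ref{eq:K2_spectral_pbm}), and the zero-mean property (\ref{eq:phi_n_zeromean}) to simplify the projection of $F$ from (\ref{eq:F_def}). Since $\langle q(\cdot,0),\phi_k\rangle=d_k^0$, $\langle 1,\phi_k\rangle=0$, and $\langle \mathcal{K}_2[q(\cdot,0)],\phi_k\rangle=\sigma_k d_k^0$, this reduces the problem to the scalar Volterra equation
\[
(\eta+\sigma_k)d_k(t)-\frac{\nu}{\mu^{1-1/\alpha}}\int_0^t\mathcal{E}_\alpha(\mu^{1/\alpha}(t-\tau))\,d_k(\tau)\,\dd\tau=(\eta+\sigma_k)d_k^0-\frac{l_k}{2a}(P(t)-P(0)),
\]
together with a structurally identical equation for $d_\perp$ in which $\sigma_k$ is replaced by $0$.

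Solving these scalar equations in closed form is the heart of the argument. The natural route is the Laplace transform: because $\mathcal{E}_\alpha$ appears in Section \ref{sec:model_deriv} precisely as the kernel inverting $D_t^\alpha+\mu\,\mathrm{Id}$ on functions vanishing at the origin, the convolution term dresses the rate $\mu$ into the effective rate $\mu+\nu/(\eta+\sigma_k)$; inverting the resulting algebraic resolvent produces the combination of $E_\alpha(-(\mu+\nu/(\eta+\sigma_k))\,t^\alpha)$ and $\mathcal{E}_\alpha$-convolutions displayed in (\ref{eq:d_k_final})--(\ref{eq:d_perp_final}). A cleaner alternative is to take (\ref{eq:d_k_final})--(\ref{eq:d_perp_final}) as ansatz and verify directly that they satisfy the scalar Volterra equation, relying on Mittag-Leffler identities collected in the Appendix.

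Finally, it remains to show that the series (\ref{eq:p_sol})--(\ref{eq:q_sol}) converges in $C_b(\mathbb{R}_+;L^2(-a,a))$ and defines the unique solution. By Parseval, $\sum_k(d_k^0)^2\leq\|p(\cdot,0)\|^2$ and $\sum_k l_k^2\leq\|K_1\|^2<\infty$; combined with uniform-in-$k$ and uniform-in-$t$ bounds on the time factors in (\ref{eq:d_k_final}) (which follow from boundedness of $E_\alpha$ on the negative real axis for $\alpha\in(0,2)$, boundedness of $P\in C_b(\mathbb{R}_+)$, and integrability/decay of $\mathcal{E}_\alpha$), this yields Cauchy convergence of the partial sums in $C_b(\mathbb{R}_+;L_0^2(-a,a))$. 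Uniqueness follows from the fact that each scalar Volterra equation has a unique bounded continuous solution on $[0,T]$ for every $T>0$ by a standard Gronwall/contraction argument, so any two bounded $L_0^2$-valued solutions of (\ref{eq:q_eq}) must share all spectral coefficients; (\ref{eq:p_sol}) then follows from (\ref{eq:q_sol}) via (\ref{eq:q_def}). \emph{The principal obstacle} lies in the closed-form solution of the scalar Volterra equation: identifying precisely how convolution against $\mathcal{E}_\alpha(\mu^{1/\alpha}\cdot)$ shifts the effective fractional decay rate from $\mu$ to $\mu+\nu/(\eta+\sigma_k)$ is the key algebraic content and requires careful bookkeeping with the Mittag-Leffler functions, whereas the convergence and uniqueness arguments are fairly routine once this has been carried out.
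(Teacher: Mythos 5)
Your proposal follows essentially the same route as the paper's proof: eigenfunction expansion of $q$ with respect to the spectral basis of $\mathcal{K}_2$, projection onto $\phi_k$ and $\phi_\perp$ to obtain the decoupled scalar Volterra equations (\ref{eq:d_k_eq})--(\ref{eq:d_perp_eq}), closed-form solution via the Laplace-transform machinery packaged in Corollary \ref{cor:spec_int_eq_2nd} (which is precisely the "effective rate shift" from $\mu$ to $\mu+\nu/(\eta+\sigma_k)$ you anticipate), and Parseval-based Cauchy convergence of the partial sums $q_m$ in $C_b\left(\mathbb{R}_+;L_0^2(-a,a)\right)$. The explicit uniqueness step you add (scalar Gronwall for each Volterra equation plus matching of spectral coefficients) is left implicit in the paper, and the paper carries out the uniform-in-$t$ bound on the $\mathcal{E}_\alpha$-convolution term somewhat more delicately (splitting the integral at a fixed $t_0$ and invoking both small- and large-argument asymptotics of $\mathcal{E}_\alpha$), but these are refinements of the same argument rather than a different approach.
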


\begin{proof}
Since equations (\ref{eq:p_eq}) and (\ref{eq:q_eq}) are equivalent
and simply related, we only consider the latter.

The assumption $p\left(\cdot,0\right)\in L^{2}\left(-a,a\right)$
and (\ref{eq:q_def})--(\ref{eq:q_zeromean}) imply that $q\left(\cdot,0\right)\in L_{0}^{2}\left(-a,a\right)$.
Therefore, following (\ref{eq:f_decomp2}) and using that $\text{dim}\left(\text{Ker }\mathcal{K}_{2}\right)\leq1$, we can write
\begin{equation}
q\left(x,0\right):=\sum_{k=1}^{\infty}d_{k}^{0}\phi_{k}\left(x\right)+d_{\perp}^{0}\phi_\perp\left(x\right),\hspace{1em}x\in\left(-a,a\right),\label{eq:q0_decomp}
\end{equation}
with $d_{k}^{0}:=\left\langle q\left(\cdot,0\right),\phi_{k}\right\rangle$,
$k\geq1$, and $d_{\perp}^{0}:=\left\langle q\left(\cdot,0\right),\phi_{\perp}\right\rangle$. Note that these definitions coincide with
those in (\ref{eq:d_k_0_l_k_def}) due to the mean-zero property (\ref{eq:phi_n_zeromean}).

For an arbitrary integer $m\geq1$, let us define 
\begin{equation}
q_{m}\left(x,t\right):=\sum_{k=1}^{m}d_{k}\left(t\right)\phi_{k}\left(x\right)+d_{\perp}\left(t\right)\phi_{\perp}\left(x\right),\label{eq:q_m_def}
\end{equation}
where the functions $d_{k}$, $d_{\perp}$ solve the following integral
equations
\begin{equation}
\left(\eta+\sigma_{k}\right)d_{k}\left(t\right)-\frac{\nu}{\mu^{1-1/\alpha}}\int_{0}^{t}\mathcal{E}_{\alpha}\left(\mu^{1/\alpha}\left(t-\tau\right)\right)d_{k}\left(\tau\right)\dd\tau=\left(\eta+\sigma_{k}\right)d_{k}^{0}-\frac{l_{k}}{2a}\left(P\left(t\right)-P\left(0\right)\right),\hspace{1em}t>0,\hspace{1em}k\geq1,\label{eq:d_k_eq}
\end{equation}
\begin{equation}
\eta d_{\perp}\left(t\right)-\frac{\nu}{\mu^{1-1/\alpha}}\int_{0}^{t}\mathcal{E}_{\alpha}\left(\mu^{1/\alpha}\left(t-\tau\right)\right)d_{\perp}\left(\tau\right)\dd\tau=\eta d_{\perp}^{0}-\frac{l_{\perp}}{2a}\left(P\left(t\right)-P\left(0\right)\right),\hspace{1em}t>0,\label{eq:d_perp_eq}
\end{equation}
with
\[
d_{k}\left(0\right)=\frac{1}{\eta+\sigma_{k}}\left\langle \eta q\left(\cdot,0\right)+\mathcal{K}_{2}\left[q\left(\cdot,0\right)\right],\phi_{k}\right\rangle =\left\langle q\left(\cdot,0\right),\phi_{k}\right\rangle =d_{k}^{0},\hspace{1em}l_{k}:=\left\langle K_{1}+c_{0},\phi_{k}\right\rangle =\left\langle K_{1},\phi_{k}\right\rangle ,\hspace{1em}k\geq1,
\]
\[
d_{\perp}\left(0\right)=d_{\perp}^0,\hspace{1em}l_{\perp}:=\left\langle K_{1}+c_{0},\phi_{\perp}\right\rangle =\left\langle K_{1},\phi_{\perp}\right\rangle .
\]
Here, we performed some simplifications of the above expressions due
to decomposition (\ref{eq:q0_decomp}) and used (\ref{eq:K2_null_sp})--(\ref{eq:K2_spectral_pbm})
as well as (\ref{eq:phi_n_zeromean}).

We observe that equations (\ref{eq:d_k_eq})--(\ref{eq:d_perp_eq})
can be solved in a closed form by application of Corollary \ref{cor:spec_int_eq_2nd}.
Using (\ref{eq:int_cal_E_a}), this yields 
\begin{align}
d_{k}\left(t\right)= & d_{k}^{0}-\frac{l_{k}}{2a\left(\eta+\sigma_{k}\right)}\left(P\left(t\right)-P\left(0\right)\right)\label{eq:d_k_sol}\\
 & +\frac{\nu}{\eta+\sigma_{k}}\left(\mu+\frac{\nu}{\eta+\sigma_{k}}\right)^{1/\alpha-1}\left[\left(d_{k}^{0}+\frac{P\left(0\right)}{2a}\frac{l_{k}}{\eta+\sigma_{k}}\right)\int_{0}^{t}\mathcal{E}_{\alpha}\left(\left(\mu+\frac{\nu}{\eta+\sigma_{k}}\right)^{1/\alpha}\left(t-\tau\right)\right)\dd\tau\right.\nonumber \\
 & \left.-\frac{l_{k}}{2a\left(\eta+\sigma_{k}\right)}\int_{0}^{t}\mathcal{E}_{\alpha}\left(\left(\mu+\frac{\nu}{\eta+\sigma_{k}}\right)^{1/\alpha}\left(t-\tau\right)\right)P\left(\tau\right)\dd\tau\right],\hspace{1em}t\geq0,\hspace{1em}k\geq1,\nonumber 
\end{align}
\begin{align}
d_{\perp}\left(t\right)= & d_{\perp}^{0}-\frac{l_{\perp}}{2a\eta}\left(P\left(t\right)-P\left(0\right)\right)\label{eq:d_perp_sol}\\
 & +\frac{\nu}{\eta}\left(\mu+\frac{\nu}{\eta}\right)^{1/\alpha-1}\left[\left(d_{\perp}^{0}+\frac{P\left(0\right)}{2a}\frac{l_{\perp}}{\eta}\right)\int_{0}^{t}\mathcal{E}_{\alpha}\left(\left(\mu+\frac{\nu}{\eta}\right)^{1/\alpha}\left(t-\tau\right)\right)\dd\tau\right.\nonumber \\
 & \left.-\frac{l_{\perp}}{2a\eta}\int_{0}^{t}\mathcal{E}_{\alpha}\left(\left(\mu+\frac{\nu}{\eta}\right)^{1/\alpha}\left(t-\tau\right)\right)P\left(\tau\right)\dd\tau\right],\hspace{1em}t\geq0,\nonumber 
\end{align}
which can alternatively be rewritten as (\ref{eq:d_k_final})--(\ref{eq:d_perp_final}),
respectively.

Statement of the present theorem is essentially tantamount to showing
that 
\begin{equation}
\sup_{t>0}\left\Vert q\left(\cdot,t\right)-q_{m}\left(\cdot,t\right)\right\Vert \underset{m\rightarrow\infty}{\longrightarrow}0.\label{eq:q_m_conv}
\end{equation}
To prove the convergence (\ref{eq:q_m_conv}), we are going to show
that $\left(q_{m}\right)_{m=1}^{\infty}$, with $q_{m}$ given by
(\ref{eq:q_m_def}), is a Cauchy sequence in $C_{b}\left(\mathbb{R}_{+};L_{0}^{2}\left(-a,a\right)\right)$.
To this effect, let us set
\begin{equation}
S_{mn}\left(x,t\right):=q_{n}\left(x,t\right)-q_{m}\left(x,t\right)=\sum_{k=m+1}^{n}d_{k}\left(t\right)\phi_{k}\left(x\right),\hspace{1em}x\in\left(-a,a\right),\hspace{1em}t\in\mathbb{R}_{+},\hspace{1em}n>m,\label{eq:S_mn_def}
\end{equation}
and use the orthonormality of $\phi_{k}$ to write
\[
\left\Vert S_{mn}\left(\cdot,t\right)\right\Vert =\left(\sum_{k=m+1}^{n}\left|d_{k}\left(t\right)\right|^{2}\right)^{1/2}=\left\Vert \left(d_{k}\left(t\right)\right)_{k=m+1}^{n}\right\Vert _{l^{2}},
\]
where in the last expression we used $\left\Vert \cdot\right\Vert _{l^{2}}$
for designating the Euclidean vector norm. Taking into account that
$\sigma_{k}>0$ (recall Proposition \ref{prop:K2_spec_pos}), we can
estimate 
\[
\frac{\nu}{\mu\left(\eta+\sigma_{k}\right)+\nu}\left|E_{\alpha}\left(-\left(\mu+\frac{\nu}{\eta+\sigma_{k}}\right)t^{\alpha}\right)-1\right|\leq\frac{\nu}{\mu\eta+\nu}\left(C_{1}+1\right),
\]
where $C_{1}:=\sup_{\tau>0}\left|E_{\alpha}\left(-\tau\right)\right|$
is finite since the function $E_{\alpha}\left(t\right)$ is continuous
and decaying for large negative values of the argument $t$ (see (\ref{eq:E_a}),
\eqref{eq:E_1}, (\ref{eq:E_a_small}) and (\ref{eq:E_a_large})). 
Also, due to the
continuity and boundedness of $P\left(t\right)$ on $\mathbb{R}_{+}$,
it is immediate to see that
\[
\frac{1}{2a\left(\eta+\sigma_{k}\right)}\left|P\left(t\right)-P\left(0\right)\right|\leq\frac{1}{a\eta}\left\Vert P\right\Vert _{L^{\infty}\left(\mathbb{R}_{+}\right)},\hspace{1em}t\in\mathbb{R}_{+}.
\]
Let us deal with the term on the second line of \eqref{eq:d_k_final}. To this end, we have, for any $t_0>0$,
\begin{align*}
\left(\mu+\frac{\nu}{\eta+\sigma_{k}}\right)^{1/\alpha-1}\left|\int_{0}^{t}\mathcal{E}_{\alpha}\left(\left(\mu+\frac{\nu}{\eta+\sigma_{k}}\right)^{1/\alpha}\left(t-\tau\right)\right)\left[P\left(\tau\right)-P\left(0\right)\right]\dd\tau\right|\\
=\left(\mu+\frac{\nu}{\eta+\sigma_{k}}\right)^{1/\alpha-1}\left|\int_{0}^{t}\mathcal{E}_{\alpha}\left(\left(\mu+\frac{\nu}{\eta+\sigma_{k}}\right)^{1/\alpha}\tau\right)\left[P\left(t-\tau\right)-P\left(0\right)\right]\dd\tau\right|\leq\left|\int_{0}^{t_{0}}\ldots\right|+\left|\int_{t_{0}}^{t}\ldots\right|
\end{align*}
with
\begin{align*}
\left(\mu+\frac{\nu}{\eta+\sigma_{k}}\right)^{1/\alpha-1}\left|\int_{0}^{t_{0}}\mathcal{E}_{\alpha}\left(\left(\mu+\frac{\nu}{\eta+\sigma_{k}}\right)^{1/\alpha}\tau\right)\left[P\left(t-\tau\right)-P\left(0\right)\right]\dd\tau\right|\\ \leq2C_{1,t_{0}}\left\Vert P\right\Vert _{L^{\infty}\left(\mathbb{R}_{+}\right)}\int_{0}^{t_{0}}\frac{\dd\tau}{\tau^{1-\alpha}}=\frac{2C_{1,t_{0}}\left\Vert P\right\Vert _{L^{\infty}\left(\mathbb{R}_{+}\right)}t_{0}^{\alpha}}{\alpha},
\end{align*}
and
\begin{align*}
\left(\mu+\frac{\nu}{\eta+\sigma_{k}}\right)^{1/\alpha-1}\left|\int_{t_0}^{t}\mathcal{E}_{\alpha}\left(\left(\mu+\frac{\nu}{\eta+\sigma_{k}}\right)^{1/\alpha}\tau\right)\left[P\left(t-\tau\right)-P\left(0\right)\right]\dd\tau\right|\\ \leq2C_{2,t_{0}}\left(\mu+\frac{\nu}{\eta+\sigma_{1}}\right)^{-2}\left\Vert P\right\Vert _{L^{\infty}\left(\mathbb{R}_{+}\right)}\int_{t_{0}}^{\infty}\frac{\dd\tau}{\tau^{1+\alpha}}=\frac{2C_{2,t_{0}}\left\Vert P\right\Vert _{L^{\infty}\left(\mathbb{R}_{+}\right)}}{\left(\mu+\nu/\left(\eta+\sigma_1\right)\right)^{2}\alpha t_{0}^{\alpha}}.
\end{align*}
Here, the constants $C_{1,t_{0}}:=\sup_{\tau\in\left(0,t_{0}\right)}\left|\tau^{1-\alpha}\mathcal{E}_{\alpha}\left(\tau\right)\right|$ and $C_{2,t_{0}}:=\sup_{\tau\in\left(t_{0},\infty\right)}\left|\tau^{1+\alpha}\mathcal{E}_{\alpha}\left(\tau\right)\right|$
are finite due to (\ref{eq:cal_E_a}), (\ref{eq:cal_E_a_small}) and \eqref{eq:cal_E_a_large} since $\alpha\in\left(0,2\right)$.
Consequently, applying the triangle inequality to (\ref{eq:d_k_sol})--(\ref{eq:d_perp_sol}), we estimate
\begin{align}\label{eq:S_mn_estim}
\left\Vert S_{mn}\left(\cdot,t\right)\right\Vert \leq&\left(1+\frac{\nu}{\mu\eta+\nu}\left(C_{1}+1\right)\right)\left\Vert \left(d_{k}^{0}\right)_{k=m+1}^{n}\right\Vert _{l^{2}}\\
&+\left(1+\frac{\nu C_{1,t_0}t_0^{\alpha}}{\eta\alpha}+\frac{\nu C_{2,t_0}}{\eta\left(\mu+\nu/\left(\eta+\sigma_1\right)\right)^{2}\alpha t_0^{\alpha}}\right)\frac{1}{a\eta}\left\Vert P\right\Vert _{L^{\infty}\left(\mathbb{R}_{+}\right)}\left\Vert \left(l_{k}\right)_{k=m+1}^{n}\right\Vert _{l^{2}},\hspace{1em}t\in\mathbb{R}_{+}.\nonumber
\end{align}
Now, recalling (\ref{eq:d_k_0_l_k_def}), we have, by the Bessel's
inequality, 
\[
\sum_{k=1}^{\infty}\left|d_{k}^{0}\right|^{2}\leq\left\Vert q\left(\cdot,0\right)\right\Vert ^{2}<\infty,\hspace{1em}\hspace{1em}\sum_{k=1}^{\infty}\left|l_{k}\right|^{2}\leq\left\Vert K_{1}\right\Vert ^{2}<\infty,
\]
i.e. both series $\sum_{k=1}^{\infty}\left|d_{k}^{0}\right|^{2}$
and $\sum_{k=1}^{\infty}\left|l_{k}\right|^{2}$ converge, and hence
the quantities $\left\Vert \left(d_{k}^{0}\right)_{k=m+1}^{n}\right\Vert _{l^{2}}$,
$\left\Vert \left(l_{k}\right)_{k=m+1}^{n}\right\Vert _{l^{2}}$ can
be made arbitrary small for large $m$. Therefore, we deduce from
(\ref{eq:S_mn_estim}) that $\left\Vert S_{mn}\left(\cdot,t\right)\right\Vert $
is guaranteed to be arbitrary small, uniformly for all $t\in\mathbb{R}_{+}$,
once sufficiently large value of $m$ is chosen. In other words, recalling
(\ref{eq:S_mn_def}), we have obtained that $\left(q_{m}\right)_{m=1}^{\infty}$
is a Cauchy sequence in $C_{b}\left(\mathbb{R}_{+};L_{0}^{2}\left(-a,a\right)\right)$.
Since $C_{b}\left(\mathbb{R}_{+};L_{0}^{2}\left(-a,a\right)\right)$
is a closed subspace of a Banach space (see e.g. \cite[Thm 6.28]{Hunt}
for the standard fact that $L^{\infty}\left(\mathbb{R}_{+};L^{2}\left(-a,a\right)\right)$
is a Banach space), it is also a Banach space, and hence complete.
This implies the desired convergence (\ref{eq:q_m_conv}).
\end{proof}
\begin{rem}
The condition $\text{dim}\left(\text{Ker }\text{\ensuremath{\mathcal{K}_{2}}}\right)\leq1$
imposed in the formulation of Theorem \ref{thm:main_eta_pos} is essential
for the uniqueness of the solution. Otherwise, one could, without changing
the validity of (\ref{eq:p_eq}), (\ref{eq:q_eq}), add to the solution
(\ref{eq:p_sol})--(\ref{eq:q_sol}) the term $\widetilde{d}_{\perp}\left(t\right)\widetilde{\phi}_{\perp}\left(x\right)$
with any $\widetilde{\phi}_{\perp}\in\text{Ker }\text{\ensuremath{\mathcal{K}_{2}}}$,
$\widetilde{\phi}_{\perp}\perp q_{\perp}^{0}$, and $\widetilde{d}_{\perp}$
solving the integral equation
\[
\eta\widetilde{d}_{\perp}\left(t\right)-\frac{\nu}{\mu^{1-1/\alpha}}\int_{0}^{t}\mathcal{E}_{\alpha}\left(\mu^{1/\alpha}\left(t-\tau\right)\right)\widetilde{d}_{\perp}\left(\tau\right)\dd\tau=-\frac{l_{\perp}}{2a}\left(P\left(t\right)-P\left(0\right)\right),\hspace{1em}t>0.
\]
If, on the other hand, $K$ is such that (\ref{eq:K_kern_cond}) admits
only the zero solution, i.e. $\text{Ker }\text{\ensuremath{\mathcal{K}_{2}}}=0$,
then $\phi_{\perp}\equiv0$.
\end{rem}

\begin{cor}
\label{cor:eta_pos} Assume that $\mu\geq0$, $\alpha$, $\eta$,
$\nu>0$, $P\in C_{b}\left(\mathbb{R}_{+}\right)$ and $\Delta\in L^{2}\left(-a,a\right)$.
Then, the problem given by (\ref{eq:p_delta_eq})--(\ref{eq:p0_int_eq}),
with $K$ satisfying Assumptions \ref{assm:K_par}--\ref{assm:K_pos}
and such that $\text{dim}\left(\text{Ker }\mathcal{K}_{2}\right)\leq1$,
and $w\left[p\right]$ defined as in (\ref{eq:w_p_term}), is uniquely
solvable and the solution $p\in C_{b}\left(\mathbb{R}_{+},L^{2}\left(-a,a\right)\right)$
is furnished by (\ref{eq:p_sol}).
\end{cor}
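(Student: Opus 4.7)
The plan is to reduce the original system (\ref{eq:p_delta_eq})--(\ref{eq:p0_int_eq}) to the integral equation (\ref{eq:p_eq}), whose well-posedness is settled by Theorem \ref{thm:main_eta_pos}, and then check that the reduction is reversible. The preparatory step is to extract $\delta(0)$ and the initial profile $p(\cdot,0)$ uniquely from (\ref{eq:p0_int_eq}) combined with the equilibrium condition (\ref{eq:equil_eq}) imposed at $t = 0$.

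First, I observe that when $\eta > 0$ the operator $\eta I + \mathcal{K}$ is invertible on $L^{2}(-a,a)$. Indeed, by Assumption \ref{assm:K_pos}, $\mathcal{K}$ is positive semidefinite, so all its eigenvalues $\lambda_{n}$ are non-negative and the self-adjoint operator $\eta I + \mathcal{K}$ has spectrum contained in $[\eta, \eta + \lambda_{1}]$, hence bounded away from zero. Solving (\ref{eq:p0_int_eq}) then yields $p(\cdot,0) = \delta(0)\, g_{0} - g_{1}$ with $g_{0} := (\eta I + \mathcal{K})^{-1}[1]$ and $g_{1} := (\eta I + \mathcal{K})^{-1}[\Delta]$, both well-defined in $L^{2}(-a,a)$ since $1, \Delta \in L^{2}(-a,a)$. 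Imposing (\ref{eq:equil_eq}) at $t = 0$ yields the scalar relation $\delta(0)\, \langle 1, g_{0}\rangle = P(0) + \langle 1, g_{1}\rangle$. Testing $(\eta I + \mathcal{K}) g_{0} = 1$ against $g_{0}$ gives $\langle 1, g_{0}\rangle = \eta \| g_{0}\|^{2} + \langle \mathcal{K} g_{0}, g_{0}\rangle > 0$, since $g_{0} \neq 0$ and $\eta > 0$. Consequently $\delta(0)$, and therefore $p(\cdot,0) \in L^{2}(-a,a)$, are uniquely determined.

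With $p(\cdot,0) \in L^{2}(-a,a)$ in hand, Theorem \ref{thm:main_eta_pos} applies directly, producing a unique $p \in C_{b}(\mathbb{R}_{+}; L^{2}(-a,a))$ satisfying (\ref{eq:p_eq}), explicitly of the form (\ref{eq:p_sol}). The zero-mean property (\ref{eq:phi_n_zeromean}) of $\phi_{k}$ and $\phi_{\perp}$ makes (\ref{eq:equil_eq}) automatic for every $t \geq 0$ directly from the representation (\ref{eq:p_sol}). Since (\ref{eq:p_eq}) was obtained in Section \ref{sec:model_deriv} by eliminating $\delta(t) - \delta(0)$ between (\ref{eq:p_delta_eq}) and (\ref{eq:p_delta_eq_int}), defining $\delta(t)$ for $t > 0$ via (\ref{eq:p_delta_eq_int}) reconstructs (\ref{eq:p_delta_eq}); equation (\ref{eq:p0_int_eq}) holds by construction. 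Uniqueness for the full system reduces to uniqueness of $\delta(0)$ shown above, together with the uniqueness part of Theorem \ref{thm:main_eta_pos}. The only genuinely new technical point is the invertibility of $\eta I + \mathcal{K}$ and the positivity of $\langle 1, g_{0}\rangle$; the rest is simply reversing the reduction already carried out in Section \ref{sec:model_deriv}.
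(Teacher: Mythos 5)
Your proof is correct and follows essentially the same approach as the paper: establish invertibility of $\eta I + \mathcal{K}$ (which the paper calls ``standard Fredholm theory'' applied to the positive compact self-adjoint $\mathcal{K}$) to determine $p(\cdot,0)$, then invoke Theorem \ref{thm:main_eta_pos}. You fill in two details the paper leaves implicit—the explicit determination of $\delta(0)$ from (\ref{eq:equil_eq}) at $t=0$ with the sign check $\langle 1,g_0\rangle>0$, and the reversibility of the reduction to (\ref{eq:p_eq}) via (\ref{eq:p_delta_eq_int})—which the paper only alludes to in the remark after (\ref{eq:p_eq}); these additions are welcome but do not change the underlying argument.
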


\begin{proof}
By the standard Fredholm theory applied to the positive compact self-adjoint
operator $\mathcal{K}$ defined in (\ref{eq:K_op_def}), the assumptions
$\Delta\in L^{2}\left(-a,a\right)$ and $\eta>0$ entail the existence
of a unique $p\left(\cdot,0\right)\in L^{2}\left(-a,a\right)$ satisfying
(\ref{eq:p0_int_eq}). This makes Theorem \ref{thm:main_eta_pos}
applicable yielding the claim of this Corollary. 
\end{proof}
\begin{thm}
\label{thm:main_eta_0} Assume that $\mu\geq0$, $\alpha\in\left(0,2\right)$,
$\eta=0$, $\nu>0$, $P\left(t\right)\equiv P\left(0\right)=:P_{0}$
for $t\in\left[0,T\right]$ with some $T>0$, and $p\left(\cdot,0\right)\in L^{2}\left(-a,a\right)$.
Suppose that $K$ satisfies Assumptions \ref{assm:K_par}--\ref{assm:K_pos}
and $q\left(\cdot,0\right):=p\left(\cdot,0\right)-\frac{P_{0}}{2a}$
is orthogonal to any $\psi\in\text{Ker }\mathcal{K}_{2}$. Then, equations
(\ref{eq:p_eq}) and (\ref{eq:q_eq}) have unique solutions in $C\left(\left[0,T\right];L^{2}\left(-a,a\right)\right)$
and $C\left(\left[0,T\right];L_{0}^{2}\left(-a,a\right)\right)$,
respectively. These solutions are given by 
\begin{equation}
p\left(x,t\right)=\frac{P_{0}}{2a}+\sum_{k=1}^{\infty}d_{k}\left(t\right)\phi_{k}\left(x\right),\hspace{1em}q\left(x,t\right)=\sum_{k=1}^{\infty}d_{k}\left(t\right)\phi_{k}\left(x\right),\hspace{1em}x\in\left(-a,a\right),\hspace{1em}t\in\left(0,T\right),\label{eq:p_q_sol_eta_0}
\end{equation}
where $\phi_{k}$, $\sigma_{k}$, $k\geq1$, are eigenfunctions and
eigenvalues of the operator $\mathcal{K}_{2}$ defined as in (\ref{eq:K2_spectral_pbm}),
and
\begin{align}
d_{k}\left(t\right):= & d_{k}^{0}\left[1+\frac{\nu}{\mu\sigma_{k}+\nu}\left(E_{\alpha}\left(-\left(\mu+\frac{\nu}{\sigma_{k}}\right)t^{\alpha}\right)-1\right)\right],\hspace{1em}k\geq1,\label{eq:d_k_final_eta_0}
\end{align}
with $d_{k}^{0}$, $k\geq1$, as in (\ref{eq:d_k_0_l_k_def}).
\end{thm}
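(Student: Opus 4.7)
My plan is to mirror the proof of Theorem \ref{thm:main_eta_pos} but exploit the two simplifications available here: first, $P(t)\equiv P_0$ makes $P(t)-P(0)\equiv 0$ so the terms proportional to $l_k$ in \eqref{eq:d_k_final} vanish; second, $\eta=0$ removes the $\eta$ contribution in \eqref{eq:d_k_eq}, so the non-zero coefficient in front of $d_k(t)$ is just $\sigma_k$. The hypothesis $q(\cdot,0)\perp \operatorname{Ker}\mathcal{K}_2$ compensates for the loss of the $\eta$-term: it guarantees that the expansion \eqref{eq:q0_decomp} of the initial datum has no $\phi_\perp$ component, so the ansatz \eqref{eq:p_q_sol_eta_0} is actually the natural spectral expansion of $q(\cdot,0)$ continued in time.

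Concretely, I would first rewrite \eqref{eq:q_eq} with $\eta=0$ and constant load, reducing the right-hand side to $F(x,t)=\mathcal{K}_2[q(\cdot,0)](x)$ (the $K_1+c_0$ contribution disappears). Setting $d_k(t):=\langle q(\cdot,t),\phi_k\rangle$ and projecting \eqref{eq:q_eq} onto each $\phi_k$ yields, using \eqref{eq:K2_spectral_pbm} and the self-adjointness of $\mathcal{K}_2$,
\[
\sigma_k d_k(t)-\frac{\nu}{\mu^{1-1/\alpha}}\int_0^t\mathcal{E}_\alpha\bigl(\mu^{1/\alpha}(t-\tau)\bigr)d_k(\tau)\dd\tau=\sigma_k d_k^0.
\]
Proposition \ref{prop:K2_spec_pos} gives $\sigma_k>0$ for every $k\ge 1$, so division by $\sigma_k$ is legitimate and Corollary \ref{cor:spec_int_eq_2nd} applies verbatim (with the parameter $\eta+\sigma_k$ replaced by $\sigma_k$) producing exactly \eqref{eq:d_k_final_eta_0}.

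For existence in $C([0,T];L_0^2(-a,a))$, I would define $q_m$ as in \eqref{eq:q_m_def} (without the $d_\perp\phi_\perp$ term, which is now absent by hypothesis) and show it is Cauchy. The key uniform bound is much simpler than in Theorem \ref{thm:main_eta_pos}: since $\mu\sigma_k\ge 0$, the factor $\nu/(\mu\sigma_k+\nu)\in(0,1]$, and with $C_1:=\sup_{\tau>0}|E_\alpha(-\tau)|<\infty$ we get $|d_k(t)|\le (2+C_1)|d_k^0|$ uniformly in $k$ and $t\in[0,T]$. Bessel's inequality $\sum_k |d_k^0|^2\le\|q(\cdot,0)\|^2<\infty$ then yields $\sup_{t\in[0,T]}\|q_n-q_m\|^2\le (2+C_1)^2\sum_{k=m+1}^n |d_k^0|^2\to 0$, and completeness of $C([0,T];L_0^2(-a,a))$ gives the limit.

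The main obstacle is uniqueness, because the familiar algebraic coercivity $\eta>0$ is gone. For any two solutions, the difference $\tilde q$ solves the homogeneous version of \eqref{eq:q_eq} with $\eta=0$. Projecting onto a non-kernel eigenfunction $\phi_k$ gives a homogeneous Volterra equation of the second kind (since $\sigma_k>0$) whose unique solution is zero by Corollary \ref{cor:spec_int_eq_2nd}. Projecting onto an arbitrary $\psi\in\operatorname{Ker}\mathcal{K}_2$, self-adjointness of $\mathcal{K}_2$ and $F\in\operatorname{Range}\mathcal{K}_2$ reduce the equation to a Volterra equation of the first kind
\[
\int_0^t\mathcal{E}_\alpha\bigl(\mu^{1/\alpha}(t-\tau)\bigr)\langle\tilde q(\cdot,\tau),\psi\rangle\dd\tau=0,\qquad t\in[0,T].
\]
Because $\mathcal{E}_\alpha(\tau)\sim c\,\tau^{\alpha-1}$ near $\tau=0^+$ by \eqref{eq:cal_E_a_small} is nonvanishing, this convolution equation admits only the trivial continuous solution: either by a Titchmarsh/Laplace-transform argument (the Laplace image of the kernel has no zero in a right half-plane), or more elementarily by applying the Riemann--Liouville integral of order $1-\alpha$ to reduce to an Abel-type inversion. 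This forces $\langle\tilde q(\cdot,t),\psi\rangle\equiv 0$ for every $\psi\in\operatorname{Ker}\mathcal{K}_2$, completing uniqueness.
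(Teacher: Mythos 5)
Your proof is correct and follows essentially the same route as the paper: projection onto the eigenfunctions $\phi_k$, reduction to scalar Volterra equations, Corollary \ref{cor:spec_int_eq_2nd} for the second-kind equations, and the uniform bound plus Bessel's inequality for the Cauchy argument. The only cosmetic difference is in how the kernel component is handled: the paper retains a $d_\perp(t)\widetilde{\phi}_\perp(x)$ term in the ansatz and kills it by observing that the homogeneous first-kind equation $\int_0^t\mathcal{E}_\alpha(\mu^{1/\alpha}(t-\tau))d_\perp(\tau)\,\dd\tau=0$ forces $d_\perp\equiv 0$ (via its Corollary \ref{cor:spec_int_eq_1st_mu} or Lemma \ref{lem:spec_int_eq_1st_mu_alph12}, both Laplace-transform based), then uses the orthogonality hypothesis to ensure continuity at $t=0$; you instead drop that term from the start by invoking orthogonality directly and then dispose of the kernel direction in the uniqueness step with the same first-kind Volterra argument (your Titchmarsh/Laplace reasoning is the same mechanism the paper packages in its appendix lemmas).
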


\begin{proof}
The proof generally goes along the same lines as that for Theorem
\ref{thm:main_eta_pos}. Similarly to (\ref{eq:q_m_def}), we introduce
\begin{equation}
q_{m}\left(x,t\right):=\sum_{k=1}^{m}d_{k}\left(t\right)\phi_{k}\left(x\right)+d_{\perp}\left(t\right)\widetilde{\phi}_{\perp}\left(x\right),\hspace{1em}m\ge1,\label{eq:q_m_def_eta_0}
\end{equation}
where $\widetilde{\phi}_{\perp}\in L_{0}^{2}\left(-a,a\right)$ denotes
any solution of the equation (\ref{eq:K_kern_cond}).

By substitution of (\ref{eq:q_m_def_eta_0}) in (\ref{eq:q_eq}) and
using orthogonality of $\phi_{k}$, $k\geq1$, and $\widetilde{\phi}_{\perp}$,
we deduce that $d_{k}$, $d_{\perp}$ are the functions solving the
following integral equations
\begin{equation}
\sigma_{k}d_{k}\left(t\right)-\frac{\nu}{\mu^{1-1/\alpha}}\int_{0}^{t}\mathcal{E}_{\alpha}\left(\mu^{1/\alpha}\left(t-\tau\right)\right)d_{k}\left(\tau\right)\dd\tau=\sigma_{k}d_{k}^{0},\hspace{1em}t\in\left(0,T\right),\hspace{1em}k\geq1,\label{eq:eta_0_d_k_eq}
\end{equation}
\begin{equation}
-\frac{\nu}{\mu^{1-1/\alpha}}\int_{0}^{t}\mathcal{E}_{\alpha}\left(\mu^{1/\alpha}\left(t-\tau\right)\right)d_{\perp}\left(\tau\right)\dd\tau=0,\hspace{1em}t\in\left(0,T\right).\label{eq:eta_0_d_perp_eq}
\end{equation}
We note immediately that application of Corollary \ref{cor:spec_int_eq_1st_mu}
or Lemma \ref{lem:spec_int_eq_1st_mu_alph12} to (\ref{eq:eta_0_d_perp_eq}),
depending on whether $\alpha\in\left(0,1\right)$ or $\alpha\in\left[1,2\right)$,
yields $d_{\perp}\left(t\right)\equiv0$, $t\in\left(0,T\right)$.
For the solution of (\ref{eq:eta_0_d_k_eq}), we invoke Corollary
\ref{cor:spec_int_eq_2nd} which gives
\begin{align}
d_{k}\left(t\right)= & d_{k}^{0}\left[1+\frac{\nu/\sigma_{k}}{\left(\mu+\nu/\sigma_{k}\right)^{1-1/\alpha}}\int_{0}^{t}\mathcal{E}_{\alpha}\left(\left(\mu+\frac{\nu}{\sigma_{k}}\right)^{1/\alpha}\tau\right)\dd\tau\right],\hspace{1em}t\in\left(0,T\right).\label{eq:d_k_eta_0_sol}
\end{align}
Making use of (\ref{eq:int_cal_E_a}), equation (\ref{eq:d_k_eta_0_sol})
simplifies into (\ref{eq:d_k_final_eta_0}). The proof of the convergence
of (\ref{eq:q_m_def_eta_0}) as $m\rightarrow\infty$ is identical
to that for Theorem \ref{thm:main_eta_pos} with an additional simplification
that there is now no necessity to deal with the $l_{k}$ terms. 

Finally, we note that the orthogonality condition $q\left(\cdot,0\right)\perp\psi$
for any $\psi\in\text{Ker }\mathcal{K}_{2}$ imposed in the formulation
of the Theorem is due to the requirement of the continuity of the
solution (\ref{eq:q_sol}). Indeed, if this condition was violated,
we would have
\[
q\left(x,0\right)=\sum_{k=1}^{\infty}d_{k}\left(t\right)\phi_{k}\left(x\right)+\psi_{0}\left(x\right),\hspace{1em}x\in\left(-a,a\right),
\]
for some $\psi_{0}\in\text{Ker }\mathcal{K}_{2}$. This would not
be consistent with the limit of (\ref{eq:q_sol}) as $t\rightarrow0^{+}$
(which itself is a consequence of our conclusion that $d_{\perp}\left(t\right)\equiv0$,
$t\in\left(0,T\right)$, in (\ref{eq:q_m_def_eta_0})).
\end{proof}
\begin{rem}
It is noteworthy that, in case $\eta=0$, the regularity requirement
$p\left(\cdot,0\right)\in L^{2}\left(-a,a\right)$ may be a rather
strong one, when viewed in the context of the entire problem (\ref{eq:p_delta_eq})--(\ref{eq:p0_int_eq}).
As we shall see on an example of a particular kernel function $K$
in Proposition \ref{prop:K0_sol_p_0}, this may restrict $\Delta$
in (\ref{eq:p0_int_eq}) and $P_{0}$. On the other hand, the same
example shows that the orthogonality condition in the statement of
Theorem \ref{thm:main_eta_0} may be trivially satisfied. 
\end{rem}

\subsection{\label{subsec:concr_K} A concrete form of the kernel function}

We now make results of the previous subsection more precise by focussing
on a concrete kernel function that is often used for contact mechanical
problems under consideration, namely, problems with a half-space geometry
or for an elastic layer of a large thickness (see e.g. \cite{AleksKov1,ArgChai1,Shtaer}).
Namely, we consider the kernel function given by
\begin{equation}
K_{0}\left(x\right):=-\log\left|x\right|+C_{K},\hspace{1em}x\in\left(-2a,2a\right),\label{eq:K0_def}
\end{equation}
where $C_{K}>\log a$ is a constant.

It is straightforward to see that $K=K_{0}$ satisfies Assumptions
\ref{assm:K_par}--\ref{assm:K_reg}. Verification of Assumption \ref{assm:K_pos} requires a change of variable $x=a\tilde{x}$, the additive property of logarithms and the positive-definiteness result of \cite{Reade1} valid for the integral operator with purely logarithmic kernel (i.e. $C_K=0$) on the interval $\left(-1,1\right)$.  Here, we have also made use of the assumption $C_K>\log a$.

We now claim that the eigenvalues of the corresponding $\mathcal{K}_{2}$ operator can be characterised as follows.
\begin{prop}
\label{prop:K0_sigma_estims} Let $K=K_{0}$ with $K_{0}$ defined
in (\ref{eq:K0_def}). Then, the eigenvalues of the operator $\mathcal{K}_{2}$
defined in (\ref{eq:K2_op_def}) satisfy the following estimates
\begin{equation}
0<\sigma_{1}\leq a\pi\log2+2a\left(C_{K}+\left|\log a\right|\right),\hspace{1em}\hspace{1em}0<\sigma_{n}=\mathcal{O}\left(\frac{1}{n}\right),\hspace{1em}n\gg1.\label{eq:sigma_1_sigma_n_asympt}
\end{equation}
\end{prop}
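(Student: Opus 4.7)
The strategy is to reduce estimates on $\sigma_n$ to corresponding estimates on the eigenvalues $\lambda_n$ of the operator $\mathcal{K}$ with kernel $K_0$, via Proposition \ref{prop:K2_K_link} which yields $\sigma_n\leq\lambda_n$. The paper already verifies that $K=K_0$ satisfies Assumptions \ref{assm:K_par}--\ref{assm:K_pos} in the paragraph following (\ref{eq:K0_def}). Strict positivity $\sigma_n>0$ is then immediate: Proposition \ref{prop:K2_spec_pos} gives $\sigma_n\geq 0$ while non-vanishing is built into the convention of (\ref{eq:K2_spectral_pbm}) of listing only nonzero eigenvalues.

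For the explicit upper bound on $\sigma_1$, I would use $\lambda_1=\|\mathcal{K}\|_{L^2\to L^2}$, valid since $\mathcal{K}$ is positive self-adjoint, and split $K_0(x-\xi)=-\log|x-\xi|+C_K$. The additive constant produces a rank-one operator $f\mapsto C_K\langle 1,f\rangle\mathbf{1}$ of norm at most $2aC_K$ by Cauchy--Schwarz on $(-a,a)$. For the logarithmic part, the rescaling $x=a\tilde{x}$ (an $L^2$-isometry up to a factor $\sqrt{a}$) separates out the scaling-induced constant $-\log a$ whose contribution to the operator norm is at most $2a|\log a|$, leaving $a$ times the pure log operator $\widetilde{\mathcal{L}}$ on $L^2(-1,1)$. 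The remaining inequality $\|\widetilde{\mathcal{L}}\|\leq\pi\log 2$ I would obtain from the classical equilibrium-potential identity $-\int_{-1}^{1}\log|x-y|\,\mathrm{d}y/(\pi\sqrt{1-y^2})=\log 2$ combined with a Cauchy--Schwarz argument relative to the arcsine weight. Summing the three contributions yields the announced bound $a\pi\log 2+2a(C_K+|\log a|)$.

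For the asymptotic $\sigma_n=\mathcal{O}(1/n)$, the inequality $\sigma_n\leq\lambda_n$ again reduces matters to $\lambda_n=\mathcal{O}(1/n)$. Since $\mathcal{K}$ is a rank-one perturbation of the pure logarithmic operator $\mathcal{L}$ on $(-a,a)$ (the $C_K$ term being rank one), Weyl's monotonicity inequality, a consequence of Lemma \ref{lem:Cour-Fisch}, gives $\lambda_{n+1}(\mathcal{K})\leq\mu_n(\mathcal{L})$, so it suffices to show $\mu_n(\mathcal{L})=\mathcal{O}(1/n)$. This is a classical eigenvalue decay for the logarithmic integral operator: on the arcsine-weighted space the eigenfunctions are Chebyshev polynomials of the first kind with eigenvalues exactly $\pi/n$, and a standard Weyl-type asymptotic argument for weakly singular kernels transfers this $1/n$ rate to the unweighted $L^2(-1,1)$. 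The main obstacle I anticipate lies precisely in these two logarithmic-kernel computations --- the norm bound $\pi\log 2$ and the eigenvalue decay $\pi/n$ --- both requiring some bookkeeping with the arcsine weight but amounting to standard Chebyshev or Fourier calculations once the setup is fixed.
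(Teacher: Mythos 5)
Your high-level structure matches the paper exactly: invoke Proposition \ref{prop:K2_K_link} to reduce the estimates on $\sigma_n$ to estimates on $\lambda_n$, then split the kernel $K_0$ into its pure-logarithm part plus constants, and treat the three contributions separately. The paper does the same, with minor cosmetic differences (it puts everything into a single Rayleigh-quotient estimate on the rescaled interval rather than adding operator norms, and it handles the $O(1/n)$ persistence under the added constant by a regularity remark rather than by Weyl's rank-one perturbation inequality --- your Weyl argument is actually the cleaner of the two here).

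Where your proposal has a genuine gap is precisely the two logarithmic-kernel facts that you flag as "standard Chebyshev or Fourier calculations": (i) $\lambda_1$ for the pure-log operator on $L^2(-1,1)$ is at most $\pi\log 2$, and (ii) its eigenvalues decay as $\mathcal{O}(1/n)$. The paper does \emph{not} prove these; it cites \cite{Reade1} for both, and that is the intended route. Your sketch does not close them. For (ii): the Chebyshev polynomials $T_n$ diagonalise the logarithmic operator only in the arcsine-\emph{weighted} space $L^2\left(\,\mathrm{d}x/\sqrt{1-x^2}\,\right)$, where the eigenvalues are indeed $\pi\log 2,\pi,\pi/2,\dots$; but the operator in the statement acts on unweighted $L^2(-1,1)$, where the $T_n$ are not orthogonal and there is no "standard transfer" of the $1/n$ rate --- establishing that the unweighted eigenvalues still decay like $1/n$ is exactly the content of Reade's paper, not a routine step. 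For (i): the equilibrium-potential identity $-\int_{-1}^{1}\log|x-y|\,\mathrm{d}y/(\pi\sqrt{1-y^{2}})=\log 2$ does not combine with a Cauchy--Schwarz/Schur-type argument in the naive way, because the kernel $-\log|x-y|$ changes sign on $(-1,1)^{2}$ (so one cannot apply a positivity-based Schur test) and the arcsine density fails to be in $L^{2}(-1,1)$ (so it is not an admissible test function for the unweighted Rayleigh quotient). If you instead try the Chebyshev expansion $-\log|x-y|=\log 2 + 2\sum_{n\ge 1}T_{n}(x)T_{n}(y)/n$ and bound each term by Cauchy--Schwarz in $L^{2}(-1,1)$, the resulting series $\sum 1/n$ diverges, again because the $T_{n}$ are not an orthonormal system without the weight. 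The fix is simply to cite \cite{Reade1} for both facts, as the paper does; everything else in your argument then goes through.
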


\begin{proof}
First, we claim that $\lambda_{n}$, the eigenvalues of the operator
$\mathcal{K}$, defined in (\ref{eq:K_op_def}), decrease to zero
as $\mathcal{O}\left(1/n\right)$ for large $n$. This follows from
the corresponding result of \cite{Reade1} obtained for the case of
purely logarithmic kernel (i.e. (\ref{eq:K0_def}) with $C_{K}=0$).
Indeed, since the asymptotic decrease of the eigenvalues of a positive
integral operator is related to the regularity of the kernel function
(see, in addition to \cite{Reade1}, also \cite{Reade2}), the presence
of an extra constant $C_{K}>0$ does not affect this asymptotic behaviour.
The final asymptotic result given in (\ref{eq:sigma_1_sigma_n_asympt})
now follows from $\lambda_{n}=\mathcal{O}\left(1/n\right)$, $n\gg1$,
by employing Proposition \ref{prop:K2_K_link}.

To deduce the upper bound for $\sigma_{1}$, we shall first get the
one for $\lambda_{1}$. To this effect, we transform (\ref{eq:K_spectral_pbm})
to an equivalent problem on the interval $\left(-1,1\right)$. Namely,
setting $\psi\left(x\right):=\phi\left(ax\right)$, we have
\begin{equation}
\int_{-1}^{1}\left[-\log\left|x-\xi\right|+\log a+C_{K}\right]\psi\left(\xi\right)\dd\xi=\frac{\lambda}{a}\psi\left(x\right),\hspace{1em}x\in\left(-1,1\right).\label{eq:K0_spectr_pbm_scaled}
\end{equation}
Rayleigh's variational characterisation for positive eigenvalues (Lemma
\ref{lem:Rayl-Hilb}) now yields
\begin{align*}
\frac{\lambda_{1}}{a} & \leq\max_{f\in L^{2}\left(-1,1\right)}\frac{\int_{-1}^{1}-\log\left|x-\xi\right|f\left(\xi\right)f\left(x\right)\dd\xi\dd x}{\left\Vert f\right\Vert _{L^{2}\left(-1,1\right)}}+\left(C_{K}+\left|\log a\right|\right)\max_{f\in L^{2}\left(-1,1\right)}\frac{\int_{-1}^{1}\left|f\left(\xi\right)\right|\left|f\left(x\right)\right|\dd\xi\dd x}{\left\Vert f\right\Vert _{L^{2}\left(-1,1\right)}}\\
 & \leq\pi\log2+2\left(C_{K}+\left|\log a\right|\right),
\end{align*}
where we used $\pi\log2$ as an upper bound for the first eigenvalue
of the logarithmic kernel due to \cite{Reade1}, and we used the Cauchy-Schwarz
inequality to trivially estimate the second quotient. Finally, using
Proposition \ref{prop:K2_K_link}, we obtain the bound for $\sigma_{1}$
in (\ref{eq:sigma_1_sigma_n_asympt}).
\end{proof}
When $\eta=0$, the particular form of the kernel function $K_{0}$
makes it possible to work with an explicit form of the initial data
$p\left(x,0\right)$. Indeed, we have the following constructive result.
\begin{prop}
\label{prop:K0_sol_p_0} Let $\eta=0$, $a\neq2$ and $K=K_{0}$ with
$K_{0}$ defined in (\ref{eq:K0_def}). Suppose that $\Delta\in C^{2}\left(\left[-a,a\right]\right)$.
Then, a solution of (\ref{eq:p0_int_eq}) subject to (\ref{eq:equil_eq})
with $t=0$ is given by
\begin{equation}
p\left(x,0\right)=\frac{1}{\left(a^{2}-x^{2}\right)^{1/2}}\left[\fint_{-a}^{a}\frac{\left(a^{2}-\xi^{2}\right)^{1/2}\Delta^{\prime}\left(\xi\right)}{\xi-x}\dd\xi+\frac{1}{\log\left(a/2\right)}\left(\int_{-a}^{a}\frac{\Delta\left(\xi\right)}{\left(a^{2}-\xi^{2}\right)^{1/2}}\dd\xi+\pi\left(C_{K}P\left(0\right)-\delta\left(0\right)\right)\right)\right],\label{eq:p_0_sol}
\end{equation}
and
\begin{equation}
\delta\left(0\right)=\frac{1}{\pi}\int_{-a}^{a}\frac{\Delta\left(\xi\right)}{\left(a^{2}-\xi^{2}\right)^{1/2}}\dd\xi-P\left(0\right)\left(\frac{1}{\pi^{2}}\log\left(\frac{a}{2}\right)-C_{K}\right)+\frac{1}{\pi^{2}}\log\left(\frac{a}{2}\right)\int_{-a}^{a}\frac{1}{\left(a^{2}-x^{2}\right)^{1/2}}\fint_{-a}^{a}\frac{\left(a^{2}-\xi^{2}\right)^{1/2}\Delta^{\prime}\left(\xi\right)}{\xi-x}\dd\xi\dd x.\label{eq:delta_0_sol}
\end{equation}
\end{prop}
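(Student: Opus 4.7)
The plan is to convert the given first-kind logarithmic integral equation into a Cauchy singular integral equation of the first kind (an airfoil equation), invert it via the classical S\"{o}hngen--Tricomi formula, and then fix the two unknown scalars ($\delta(0)$ and the constant of inversion) by combining the equilibrium condition with the undifferentiated logarithmic equation.

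First I would set $\eta=0$ and $K=K_{0}$ in (\ref{eq:p0_int_eq}) and use the equilibrium condition (\ref{eq:equil_eq}) at $t=0$ to collapse the $C_{K}$ contribution into an $x$-independent right-hand side, reducing the problem to
\[
-\int_{-a}^{a}\log|x-\xi|\,p(\xi,0)\,\dd\xi=\delta(0)-C_{K}P(0)-\Delta(x),\qquad x\in(-a,a).
\]
The $C^{2}$ regularity of $\Delta$ then lets me differentiate in $x$ (in the principal-value sense) and arrive at the airfoil equation
\[
\fint_{-a}^{a}\frac{p(\xi,0)}{\xi-x}\,\dd\xi=-\Delta'(x),\qquad x\in(-a,a).
\]

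Next I would invert this by the S\"{o}hngen--Tricomi formula in the class of solutions admitting $(a^{2}-x^{2})^{-1/2}$ singularities at the endpoints (which is the physically relevant class for the contact pressure). This produces $p(x,0)$ as a fixed particular integral involving $\fint_{-a}^{a}(a^{2}-\xi^{2})^{1/2}\Delta'(\xi)/(\xi-x)\,\dd\xi$, plus $C_{1}(a^{2}-x^{2})^{-1/2}$ for an undetermined real constant $C_{1}$; the homogeneous kernel is one-dimensional because $\fint_{-a}^{a}(a^{2}-\xi^{2})^{-1/2}(\xi-x)^{-1}\,\dd\xi=0$ for $x\in(-a,a)$.

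To pin down $C_{1}$ and $\delta(0)$ I impose two scalar conditions. The first is equilibrium $\int_{-a}^{a}p(\xi,0)\,\dd\xi=P(0)$ which, combined with $\int_{-a}^{a}(a^{2}-x^{2})^{-1/2}\,\dd x=\pi$, delivers one linear relation between $C_{1}$ and $\delta(0)$. The second comes from substituting the obtained $p(x,0)$ back into the undifferentiated logarithmic equation and exploiting the classical potential identity
\[
\int_{-a}^{a}\frac{\log|x-\xi|}{\sqrt{a^{2}-\xi^{2}}}\,\dd\xi=\pi\log(a/2),\qquad x\in(-a,a),
\]
which is constant in $x$; this yields a second independent linear relation. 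The hypothesis $a\neq 2$ ensures $\log(a/2)\neq 0$, making the resulting $2\times 2$ linear system invertible and producing (\ref{eq:p_0_sol}) and (\ref{eq:delta_0_sol}).

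The most delicate parts I expect are (i) choosing and justifying the S\"{o}hngen--Tricomi inversion in the correct function class (two unbounded endpoints, matching the one-dimensional homogeneous kernel identified above), and (ii) the careful book-keeping of factors when double integrals involving the weight $(a^{2}-\xi^{2})^{-1/2}$ and the Cauchy kernel are interchanged --- a step which may require a Poincar\'{e}--Bertrand commutation --- in order to bring the result into precisely the form displayed in (\ref{eq:p_0_sol})--(\ref{eq:delta_0_sol}).
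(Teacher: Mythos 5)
Your proof reaches the same endpoint, but through a different, more hands-on route than the paper takes. The paper's proof is extremely short: after substituting $K_{0}$ and collapsing the $C_{K}$ term via the equilibrium condition, it invokes a pre-packaged inversion formula for the first-kind logarithmic integral equation (the paper's Lemma~\ref{lem:log_int_eq}, cited to Carrier et al.\ and Gakhov together with the elementary identity $\int_{-a}^{a}(a^{2}-x^{2})^{-1/2}\,\dd x=\pi$), obtaining (\ref{eq:p_0_sol}) directly with $\delta(0)$ still symbolic; a single integration over $(-a,a)$ and equilibrium then yield (\ref{eq:delta_0_sol}). What you do instead is reconstruct that lemma from scratch: you differentiate the logarithmic equation to an airfoil equation, invert it via S\"{o}hngen--Tricomi in the class unbounded at both endpoints (picking up the one-dimensional homogeneous piece $C_{1}(a^{2}-x^{2})^{-1/2}$), and then close a $2\times 2$ linear system in $(C_{1},\delta(0))$ from equilibrium plus back-substitution into the undifferentiated equation via the constant potential $\int_{-a}^{a}\log|x-\xi|\,(a^{2}-\xi^{2})^{-1/2}\dd\xi=\pi\log(a/2)$. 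Both approaches are sound, and you correctly identify where the delicacies lie; what the paper's route buys is precisely the avoidance of those delicacies --- the Poincar\'{e}--Bertrand interchange and the verification of the inversion class are absorbed into the cited lemma, so the proof reduces to a substitution and one integration. Your version is more self-contained (it effectively proves Lemma~\ref{lem:log_int_eq}) and also makes the role of $a\neq 2$ more transparent, since $\log(a/2)\neq 0$ appears as the non-degeneracy condition of your $2\times 2$ system, but it is longer and, for the purposes of this paper, duplicates a classical result that is cleaner to cite.
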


\begin{proof}
Inserting (\ref{eq:K0_def}) and taking into account (\ref{eq:equil_eq})
with $t=0$, (\ref{eq:p0_int_eq}) rewrites as
\[
-\int_{-a}^{a}\log\left|x-\xi\right|p\left(\xi,0\right)\dd\xi=-C_{K}P\left(0\right)+\delta\left(0\right)-\Delta\left(x\right),\hspace{1em}x\in\left(-a,a\right).
\]
Application of Lemma \ref{lem:log_int_eq} using the elementary identity
\begin{equation}
\int_{-a}^{a}\frac{\dd x}{\left(a^{2}-x^{2}\right)^{1/2}}=\int_{-1}^{1}\frac{\dd x}{\left(1-x^{2}\right)^{1/2}}=\pi,\label{eq:elem_integr}
\end{equation}
now yields
\begin{align}
p\left(x,0\right)= & \frac{1}{\left(a^{2}-x^{2}\right)^{1/2}}\fint_{-a}^{a}\frac{\left(a^{2}-\xi^{2}\right)^{1/2}\Delta^{\prime}\left(\xi\right)}{\xi-x}\dd\xi\label{eq:p_0_sol_prelim}\\
 & +\frac{1}{\left(a^{2}-x^{2}\right)^{1/2}}\frac{1}{\log\left(a/2\right)}\left(\int_{-a}^{a}\frac{\Delta\left(\xi\right)}{\left(a^{2}-\xi^{2}\right)^{1/2}}\dd\xi+\pi\left(C_{K}P\left(0\right)-\delta\left(0\right)\right)\right),\nonumber 
\end{align}
where $\fint_{-a}^{a}$ stands for the Cauchy principal value integral.

Integrating (\ref{eq:p_0_sol_prelim}) on $\left(-a,a\right)$, we
reuse, in the left-hand side, (\ref{eq:equil_eq}) with $t=0$. Rearranging
the terms, we obtain (\ref{eq:delta_0_sol}).
\end{proof}
\begin{rem}
We see, from a form of the solution (\ref{eq:p_0_sol}), that, in
general, $p\left(\cdot,0\right)\notin L^{2}\left(-a,a\right)$. However,
the square-integrability condition can be achieved for some particular
profiles $\Delta$ and values $P\left(0\right)$, namely, those that
annihilate the square bracket in (\ref{eq:p_0_sol}) at $x=\pm a$,
see e.g. \cite[pp. 47--48]{Gal} for examples of finite pressure distributions
for quadratic and quartic symmetric shapes of $\Delta$. 
\end{rem}

Finally, by focussing on the concrete kernel function $K_{0}$, we
will show that the auxiliary condition $\text{Ker }\mathcal{K}_{2}\leq1$
in Theorem \ref{thm:main_eta_pos} and the orthogonality condition
in Theorem \ref{thm:main_eta_0} are not difficult to verify. 
\begin{prop}
\label{prop:K0_zero_kern} Assume that $a\neq2$. Let $K=K_{0}$ with
$K_{0}$ defined in (\ref{eq:K0_def}). Then, the only solution of
(\ref{eq:K_kern_cond}) in $L_{0}^{2}\left(-a,a\right)$ is $\phi\equiv0$.
\end{prop}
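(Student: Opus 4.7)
The plan is to reduce equation (\ref{eq:K_kern_cond}) to a classical logarithmic singular integral equation and then invoke the inversion formula already exploited in the proof of Proposition \ref{prop:K0_sol_p_0}, combined with the zero-mean constraint inherent in $L_{0}^{2}(-a,a)$.

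First, I would take $\phi\in L_{0}^{2}(-a,a)$ satisfying $\mathcal{K}_{2}[\phi]=0$ and expand the kernel according to (\ref{eq:K2K1_def}). The zero-mean property $\int_{-a}^{a}\phi(\xi)\,\dd\xi=0$ annihilates the term containing $K_{1}(x)\int\phi$, so (\ref{eq:K_kern_cond}) collapses to $\mathcal{K}[\phi](x)=c$, where $c:=\tfrac{1}{2a}\int_{-a}^{a}K_{1}(\xi)\phi(\xi)\,\dd\xi$ is a scalar independent of $x$. Substituting the explicit kernel $K_{0}$ from (\ref{eq:K0_def}) and once again using the zero-mean property to discard the additive term $C_{K}\int\phi$, the equation becomes the purely logarithmic integral equation
\[
-\int_{-a}^{a}\log|x-\xi|\,\phi(\xi)\,\dd\xi=c,\qquad x\in(-a,a).
\]

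Next, I would invoke Lemma \ref{lem:log_int_eq} with this constant right-hand side, exactly as in the proof of Proposition \ref{prop:K0_sol_p_0}. With the corresponding datum constant, its derivative vanishes and the Cauchy principal value term in the inversion formula drops out. The hypothesis $a\neq 2$ guarantees $\log(a/2)\neq 0$, keeping the remaining prefactor finite, and the inversion produces
\[
\phi(x)=\frac{B}{\sqrt{a^{2}-x^{2}}},\qquad x\in(-a,a),
\]
for a single scalar $B$ determined from $c$ and the geometric data.

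Finally, I would enforce the zero-mean constraint of $L_{0}^{2}(-a,a)$. By the elementary identity (\ref{eq:elem_integr}),
\[
0=\int_{-a}^{a}\phi(\xi)\,\dd\xi=B\pi,
\]
so $B=0$ and hence $\phi\equiv 0$. There is no real obstacle here: once the equation has been diagonalised into logarithmic form, everything follows from Lemma \ref{lem:log_int_eq}. The only conceptual points worth flagging are that the assumption $a\neq 2$ is essential in order to apply the inversion formula, while the restriction to $L_{0}^{2}(-a,a)$ is precisely what is needed to eliminate the residual one-dimensional family $B/\sqrt{a^{2}-x^{2}}$ left over from the logarithmic inversion.
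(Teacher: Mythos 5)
Your proof is correct and follows essentially the same route as the paper: use the zero-mean property to reduce $\mathcal{K}_2[\phi]=0$ to a logarithmic integral equation with constant right-hand side, invoke Lemma~\ref{lem:log_int_eq} to get the explicit form $\phi(x)=B/\sqrt{a^2-x^2}$, and then enforce zero mean via (\ref{eq:elem_integr}) to kill $B$. The only difference is cosmetic: once the constant multiplying $1/\sqrt{a^2-x^2}$ is forced to vanish, you read off $\phi\equiv 0$ directly from the explicit representation, whereas the paper feeds the vanishing constant back into the original equation and applies Lemma~\ref{lem:log_int_eq} a second time; your shortcut is harmless and a touch cleaner.
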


\begin{proof}
Since we look for the solution in $L_{0}^{2}\left(-a,a\right)$, we
use zero-mean condition (\ref{eq:q_zeromean}) to rewrite (\ref{eq:K_kern_cond})
as 
\[
\int_{-a}^{a}K_{0}\left(x-\xi\right)\phi\left(\xi\right)\dd\xi=\frac{1}{2a}\int_{-a}^{a}K_{0}\left(\zeta-\xi\right)\phi\left(\xi\right)\dd\zeta\dd\xi,\hspace{1em}x\in\left(-a,a\right),
\]
and, furthermore,
\begin{equation}
-\int_{-a}^{a}\log\left|x-\xi\right|\phi\left(\xi\right)\dd\xi=-\frac{1}{2a}\int_{-a}^{a}\int_{-a}^{a}\log\left|\zeta-\xi\right|\phi\left(\xi\right)\dd\zeta\dd\xi,\hspace{1em}x\in\left(-a,a\right).\label{eq:log_eq_RHS_const}
\end{equation}
Since the right-hand side of (\ref{eq:log_eq_RHS_const}) is just
a constant, application of Lemma \ref{lem:log_int_eq} yields a particularly
simple result
\[
\phi\left(x\right)=\frac{1}{\left(a^{2}-x^{2}\right)^{1/2}}\frac{\pi}{2a\log\left(a/2\right)}\int_{-a}^{a}\int_{-a}^{a}\log\left|\zeta-\xi\right|\phi\left(\xi\right)\dd\xi\dd\zeta,\hspace{1em}x\in\left(-a,a\right),
\]
where we used (\ref{eq:elem_integr}). Upon further integration over
$\left(-a,a\right)$ and use of (\ref{eq:q_zeromean}), we conclude
that we must have
\[
\int_{-a}^{a}\int_{-a}^{a}\log\left|\zeta-\xi\right|\phi\left(\xi\right)\dd\zeta\dd\xi=0.
\]
Getting back to (\ref{eq:log_eq_RHS_const}), we see that this condition
entails that
\[
\int_{-a}^{a}\log\left|x-\xi\right|\phi\left(\xi\right)\dd\xi=0,\hspace{1em}x\in\left(-a,a\right).
\]
Hence, by applying Lemma \ref{lem:log_int_eq} again, we conclude
that $\phi\equiv0$.
\end{proof}

\section{Analysis of the solution \label{sec:sol_analys}}

We are going to show that the solution to the proposed model has features
reflecting the expected physical behaviour. In particular, consider
two settings: a constant load and a transitional load (i.e. the one
which stabilises to a constant value after a finite time). We show
that, in both cases, for large times, the solution stabilises to a
stationary pressure distribution that can be found explicitly. Moreover,
in some cases, this pressure distribution is simply constant (uniform),
an aspect which is consistent with previous models \cite{ArgChai2,ArgFad}
but is certainly not a general feature (see e.g. \cite[Ch. 6]{Goryach}).

\subsection{Constant load}

First, let us consider the most commonly investigated case of a constant
load, i.e. where $P\left(t\right)\equiv P\left(0\right)=:P_{0}$ for
$t\geq0$. 
\begin{prop}
\label{prop:stat_regime1} Assume that $\mu$, $\eta\geq0$, $\alpha\in\left(0,2\right)$,
$\nu>0$, $P\left(t\right)\equiv P\left(0\right)=:P_{0}$, $t\geq0$,
and $p\left(\cdot,0\right)\in L^{2}\left(-a,a\right)$. Suppose that
$K$ satisfying Assumptions \ref{assm:K_par}--\ref{assm:K_pos} is
such that equation (\ref{eq:K_kern_cond}) has at most one solution
$\phi_{\perp}\in L_{0}^{2}\left(-a,a\right)$ with $\left\Vert \phi_{\perp}\right\Vert =1$.
Moreover, if $\eta=0$, we additionally assume that $\left\langle p\left(\cdot,0\right)-\frac{P_{0}}{2a},\phi_{\perp}\right\rangle =0$.
Then, for the solution of (\ref{eq:p_eq}), we have the following
asymptotic results
\begin{equation}
\left\Vert p\left(\cdot,t\right)-p_{\infty}^{\left(1\right)}\right\Vert =\mathcal{O}\left(\exp\left(-\left(\mu+\frac{\nu}{\eta+\sigma_{1}}\right)t\right)\right),\hspace{1em}t\gg1,\hspace{1em}\alpha=1,\label{eq:p-p_infty_exp}
\end{equation}

\begin{equation}
\left\Vert p\left(\cdot,t\right)-p_{\infty}^{\left(1\right)}\right\Vert =\mathcal{O}\left(\frac{1}{t^{\alpha}}\right),\hspace{1em}t\gg1,\hspace{1em}\alpha\in\left(0,1\right)\cup\left(1,2\right),\label{eq:p-p_infty_alg}
\end{equation}
with
\begin{equation}
p_{\infty}^{\left(1\right)}\left(x\right):=\frac{P_{0}}{2a}+\sum_{k=1}^{\infty}\frac{\mu\left(\eta+\sigma_{k}\right)}{\mu\left(\eta+\sigma_{k}\right)+\nu}d_{k}^{0}\phi_{k}\left(x\right)+\frac{\mu\eta}{\mu\eta+\nu}d_{\perp}^{0}\phi_{\perp}\left(x\right),\hspace{1em}x\in\left(-a,a\right),\label{eq:p_infty_1_def}
\end{equation}
and $d_{k}^{0}$, $k\geq1$, $d_{\perp}^{0}$, as in (\ref{eq:d_k_0_l_k_def}).
\end{prop}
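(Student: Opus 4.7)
The plan is to start from the explicit representations of $d_k(t)$ and $d_\perp(t)$ furnished by Theorems \ref{thm:main_eta_pos} (for $\eta>0$) and \ref{thm:main_eta_0} (for $\eta=0$), exploit the drastic simplifications caused by $P(t)\equiv P_0$, and then read off the large-$t$ decay from the asymptotic behaviour of the Mittag--Leffler function. Parseval's identity in the orthonormal basis $\{\phi_k\}_{k\ge1}\cup\{\phi_\perp\}$ (the last element being present only when $\eta>0$, since the orthogonality hypothesis forces $d_\perp^0=0$ in the $\eta=0$ regime) will then convert the coefficient-wise bounds into an $L^2$ estimate.

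Under $P(t)-P(0)\equiv0$, the entire second line of (\ref{eq:d_k_final}) and of (\ref{eq:d_perp_final}) vanishes, so upon rearrangement one can write
\[
p(x,t)-p_\infty^{(1)}(x)=\sum_{k=1}^{\infty}B_k\,d_k^0\,E_\alpha(-z_k(t))\,\phi_k(x)+B_\perp\,d_\perp^0\,E_\alpha(-z_\perp(t))\,\phi_\perp(x),
\]
with $B_k:=\nu/[\mu(\eta+\sigma_k)+\nu]\in(0,1]$, $z_k(t):=(\mu+\nu/(\eta+\sigma_k))\,t^\alpha$, and analogous quantities for the $\phi_\perp$ mode. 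Since $E_\alpha(-z)\to0$ as $z\to+\infty$ for $\alpha\in(0,2)$, this simultaneously confirms that $p_\infty^{(1)}$ is the pointwise long-time limit.

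For $\alpha=1$, using $E_1(-z)=e^{-z}$ and noting that $z_k(t)\ge(\mu+\nu/(\eta+\sigma_1))\,t$ uniformly in $k$ (because $\sigma_k\le\sigma_1$), Parseval combined with $B_k\le1$ and Bessel's inequality applied to $p(\cdot,0)-P_0/(2a)$ immediately yields (\ref{eq:p-p_infty_exp}). For $\alpha\in(0,1)\cup(1,2)$, the large-argument Mittag--Leffler asymptotic (\ref{eq:E_a_large}) provides $|E_\alpha(-z)|\le C/(1+z)$ for $z\ge0$; invoking the algebraic identity $B_k/(\mu+\nu/(\eta+\sigma_k))=\nu(\eta+\sigma_k)/[\mu(\eta+\sigma_k)+\nu]^2$ produces a prefactor uniformly bounded in $k$ for every admissible parameter combination, so every residual coefficient is controlled by $C|d_k^0|/t^\alpha$ with a $k$-independent constant. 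Parseval then delivers (\ref{eq:p-p_infty_alg}).

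The only delicate point is extracting a uniform-in-$k$ prefactor in the algebraic case: treating the two $k$-dependent denominators $\mu(\eta+\sigma_k)+\nu$ and $\mu+\nu/(\eta+\sigma_k)$ separately loses control as $\sigma_k\to0$ when $\eta=0$ (since $B_k\to1$ while $\mu+\nu/\sigma_k\to\infty$), but keeping them paired through the identity above sidesteps the issue cleanly. Beyond this bookkeeping, the argument is routine Hilbert-space arithmetic already used in the proofs of Theorems \ref{thm:main_eta_pos}--\ref{thm:main_eta_0}.
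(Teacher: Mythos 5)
Your proof is correct and follows the paper's argument essentially step for step: the same solution formulas from Theorems \ref{thm:main_eta_pos} and \ref{thm:main_eta_0}, the same rearrangement isolating $p_\infty^{(1)}$ from the $E_\alpha$-decaying residual modes, and the same combination of Parseval's identity with the Mittag--Leffler asymptotics (\ref{eq:E_a_large}). The ``delicate point'' you flag does not actually arise in the paper's version, which bounds $\sup_{n}\left|E_\alpha\left(-\left(\mu+\frac{\nu}{\eta+\sigma_n}\right)t^\alpha\right)\right|$ by the $n=1$ term alone (since $\sigma_n\le\sigma_1$ makes the argument smallest at $n=1$ and $\left|E_\alpha(-\tau)\right|$ is eventually monotone decreasing), so the two $k$-dependent factors are controlled separately without any loss of control --- your pairing identity is a valid alternative but is not needed.
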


\begin{proof}
First, let us consider $\eta>0$. Application of Theorem \ref{thm:main_eta_pos},
yields 
\[
p\left(x,t\right)-\frac{P_{0}}{2a}=\sum_{k=1}^{\infty}d_{k}\left(t\right)\phi_{k}\left(x\right)+d_{\perp}\left(t\right)\phi_{\perp}\left(x\right),\hspace{1em}x\in\left(-a,a\right),\hspace{1em}t\geq0,
\]
or equivalently, rearranging the terms so that the right-hand side
contains only those proportional to $E_{\alpha}$,
\[
p\left(x,t\right)-p_{\infty}^{\left(1\right)}\left(x\right)=\sum_{k=1}^{\infty}\widetilde{d}_{k}\left(t\right)\phi_{k}\left(x\right)+\widetilde{d}_{\perp}\left(t\right)\phi_{\perp}\left(x\right),\hspace{1em}x\in\left(-a,a\right),\hspace{1em}t\geq0,
\]
with $p_{\infty}^{\left(1\right)}$ defined as in (\ref{eq:p_infty_1_def}),
and 
\begin{equation}
\widetilde{d}_{k}\left(t\right):=d_{k}\left(t\right)-\frac{\mu\left(\eta+\sigma_{k}\right)}{\mu\left(\eta+\sigma_{k}\right)+\nu}d_{k}^{0}=\frac{\nu}{\mu\left(\eta+\sigma_{k}\right)+\nu}d_{k}^{0}E_{\alpha}\left(-\left(\mu+\frac{\nu}{\eta+\sigma_{k}}\right)t^{\alpha}\right),\hspace{1em}t\geq0,\hspace{1em}k\geq1,\label{eq:d_k_tilde_def}
\end{equation}
\begin{equation}
\widetilde{d}_{\perp}\left(t\right):=d_{\perp}\left(t\right)-\frac{\mu\eta}{\mu\eta+\nu}d_{\perp}^{0}=\frac{\nu}{\mu\eta+\nu}d_{\perp}^{0}E_{\alpha}\left(-\left(\mu+\frac{\nu}{\eta}\right)t^{\alpha}\right),\hspace{1em}t\geq0.\label{eq:d_perp_tilde_def}
\end{equation}
Note that the series in (\ref{eq:p_infty_1_def}) converges in $L^{2}\left(-a,a\right)$
due to the Parseval's identity, since 
\begin{align*}
\left\Vert \sum_{k=1}^{\infty}\frac{\mu\left(\eta+\sigma_{k}\right)}{\mu\left(\eta+\sigma_{k}\right)+\nu}d_{k}^{0}\phi_{k}\right\Vert ^{2}=\sum_{k=1}^{\infty}\left|\frac{\eta+\sigma_{k}}{\mu\left(\eta+\sigma_{k}\right)+\nu}d_{k}^{0}\right|^{2} & \leq\sup_{n\geq1}\left(\frac{\mu\left(\eta+\sigma_{n}\right)}{\mu\left(\eta+\sigma_{n}\right)+\nu}\right)^{2}\sum_{k=1}^{\infty}\left|d_{k}^{0}\right|^{2}\\
 & <\sum_{k=1}^{\infty}\left|d_{k}^{0}\right|^{2}\leq\left\Vert q\left(\cdot,0\right)\right\Vert ^{2}<\infty,
\end{align*}
and hence we have $p_{\infty}^{\left(1\right)}\in L^{2}\left(-a,a\right)$.

By the orthonormality of $\phi_{k}$, $k\geq1$, and $\phi_{\perp}$,
we have
\begin{equation}
\left\Vert p\left(\cdot,t\right)-p_{\infty}^{\left(1\right)}\right\Vert =\left(\sum_{k=1}^{\infty}\left|\widetilde{d}_{k}\left(t\right)\right|^{2}+\left|\widetilde{d}_{\perp}\left(t\right)\right|^{2}\right)^{1/2}.\label{eq:p-p_infty}
\end{equation}
We can estimate
\begin{align*}
\left|\widetilde{d}_{k}\left(t\right)\right|\leq & \left|d_{k}^{0}\right|\left(\sup_{n\geq1}\frac{\nu}{\mu\left(\eta+\sigma_{n}\right)+\nu}\right)\left(\sup_{n\geq1}\left|E_{\alpha}\left(-\left(\mu+\frac{\nu}{\eta+\sigma_{n}}\right)t^{\alpha}\right)\right|\right)\\
\leq & \left|d_{k}^{0}\right|\left|E_{\alpha}\left(-\left(\mu+\frac{\nu}{\eta+\sigma_{1}}\right)t^{\alpha}\right)\right|,\hspace{1em}t\gg1,\hspace{1em}k\geq1,
\end{align*}
where, in the second line, we used $0<\sigma_{n}\leq\sigma_{1}$,
$n\geq1$, together with the fact that $\left|E_{\alpha}\left(-\tau\right)\right|$
is monotonically decreasing for sufficiently large $\tau$ (as evident
from asymptotic expansion (\ref{eq:E_a_large})). Consequently, (\ref{eq:p-p_infty})
implies
\begin{equation}
\left\Vert p\left(\cdot,t\right)-p_{\infty}^{\left(1\right)}\right\Vert ^{2}\leq\left[\left(\sum_{k=1}^{\infty}\left|d_{k}^{0}\right|^{2}\right)\left|E_{\alpha}\left(-\left(\mu+\frac{\nu}{\eta+\sigma_{1}}\right)t^{\alpha}\right)\right|^{2}+\left|d_{\perp}^{0}\right|^{2}\left|E_{\alpha}\left(-\left(\mu+\frac{\nu}{\eta}\right)t^{\alpha}\right)\right|^{2}\right],\hspace{1em}t\gg1.\label{eq:p-p_infty_estim}
\end{equation}
When $\alpha\in\left(0,1\right)\cup\left(1,2\right)$, the use of
(\ref{eq:E_a_large}) in (\ref{eq:p-p_infty_estim}) immediately gives
(\ref{eq:p-p_infty_alg}). When $\alpha=1$, we have $E_{1}\left(-\tau\right)=\exp\left(-t\right)$
(see (\ref{eq:E_1})), and we observe that the first term in the square
bracket of (\ref{eq:p-p_infty_estim}) is dominant as it decays slower
for $t\gg1$. This yields (\ref{eq:p-p_infty_exp}).

Now, when $\eta=0$, we use Theorem \ref{thm:main_eta_0} which gives
\[
p\left(x,t\right)-p_{\infty}^{\left(1\right)}\left(x\right)=\sum_{k=1}^{\infty}\widetilde{d}_{k}\left(t\right)\phi_{k}\left(x\right),\hspace{1em}x\in\left(-a,a\right),\hspace{1em}t\geq0,
\]
with $p_{\infty}^{\left(1\right)}$ and $\widetilde{d}_{k}$ defined
as before, according to (\ref{eq:p_infty_1_def}) and (\ref{eq:d_k_tilde_def}),
respectively. This leads to an analog of (\ref{eq:p-p_infty_estim}),
namely,
\[
\left\Vert p\left(\cdot,t\right)-p_{\infty}^{\left(1\right)}\right\Vert ^{2}\leq\frac{\nu^{2}}{\left(\mu\eta+\nu\right)^{2}}\left(\sum_{k=1}^{\infty}\left|d_{k}^{0}\right|^{2}\right)\left|E_{\alpha}\left(-\left(\mu+\frac{\nu}{\eta+\sigma_{1}}\right)t^{\alpha}\right)\right|^{2},\hspace{1em}t\gg1,
\]
and hence estimate (\ref{eq:p-p_infty_exp}) or (\ref{eq:p-p_infty_alg})
follows depending on the value of $\alpha$.
\end{proof}

\subsection{Transitional load\label{subsec:trans_load}}

We now consider the second scenario, when the load is of transitional
type, i.e. $P$ is taken to be a continuous function on $\mathbb{R}_{+}$
with $P\left(0\right)=:P_{0}$ and such that $P\left(t\right)\equiv P_{1}$
for $t\geq t_{0}$ with some $t_{0}>0$.
\begin{prop}
\label{prop:stat_regime2} Assume that $\mu\geq0$, $\alpha\in\left(0,2\right)$,
$\eta$, $\nu>0$, $P\in C_{b}\left(\mathbb{R}_{+}\right)$ with $P\left(0\right)=:P_{0}$
and such that $P\left(t\right)\equiv P_{1}$, $t\geq t_{0}$, for
some $t_{0}>0$, and $p\left(\cdot,0\right)\in L^{2}\left(-a,a\right)$.
Suppose that $K$ satisfying Assumptions \ref{assm:K_par}--\ref{assm:K_pos}
is such that equation (\ref{eq:K_kern_cond}) has at most one solution
$\phi_{\perp}\in L_{0}^{2}\left(-a,a\right)$ with $\left\Vert \phi_{\perp}\right\Vert =1$.
Then, for the solution of (\ref{eq:p_eq}), we have the following
asymptotic results
\begin{equation}
\left\Vert p\left(\cdot,t\right)-p_{\infty}^{\left(2\right)}\right\Vert =\mathcal{O}\left(\exp\left(-\left(\mu+\frac{\nu}{\eta+\sigma_{1}}\right)t\right)\right),\hspace{1em}t\gg1,\hspace{1em}\alpha=1,\label{eq:p-p_infty_exp_regime2}
\end{equation}

\begin{equation}
\left\Vert p\left(\cdot,t\right)-p_{\infty}^{(2)}\right\Vert =\mathcal{O}\left(\frac{1}{t^{\alpha}}\right),\hspace{1em}t\gg1,\hspace{1em}\alpha\in\left(0,1\right)\cup\left(1,2\right),\label{eq:p-p_infty_alg_regime2}
\end{equation}
with
\begin{equation}
p_{\infty}^{\left(2\right)}\left(x\right):=\frac{P_{1}}{2a}+\sum_{k=1}^{\infty}c_{k}^{\left(2\right)}\phi_{k}\left(x\right)+c_{\perp}^{\left(2\right)}\phi_{\perp}\left(x\right),\hspace{1em}x\in\left(-a,a\right),\label{eq:p_infty_2_def}
\end{equation}
where
\begin{equation}
c_{k}^{\left(2\right)}:=\frac{\mu\left(\eta+\sigma_{k}\right)}{\mu\left(\eta+\sigma_{k}\right)+\nu}d_{k}^{0}+\left(\frac{\nu}{\left(\eta+\sigma_{k}\right)\left(\mu\left(\eta+\sigma_{k}\right)+\nu\right)}-\frac{1}{\eta+\sigma_{k}}\right)\left(P_{1}-P_{0}\right)\frac{l_{k}}{2a},\hspace{1em}k\geq1,\label{eq:c_k_2}
\end{equation}
\begin{equation}
c_{\perp}^{\left(2\right)}:=\frac{\mu\eta}{\mu\eta+\nu}d_{\perp}^{0}+\left(\frac{\nu}{\mu\eta+\nu}-1\right)\left(P_{1}-P_{0}\right)\frac{l_{\perp}}{2a\eta},\label{eq:c_perp_2}
\end{equation}
and $d_{k}^{0}$, $k\geq1$, $d_{\perp}^{0}$ are as in (\ref{eq:d_k_0_l_k_def}).
\end{prop}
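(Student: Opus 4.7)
The plan is to adapt the argument of Proposition \ref{prop:stat_regime1} to the non-constant load. Since $\eta>0$, I invoke Theorem \ref{thm:main_eta_pos} to represent $p(\cdot,t)$ by (\ref{eq:p_sol}) with coefficients (\ref{eq:d_k_final})--(\ref{eq:d_perp_final}). The novelty compared to the constant-load case lies in the convolution term $\int_0^t \mathcal{E}_\alpha(\mu_k^{1/\alpha}(t-\tau))[P(\tau)-P_0]\,\dd\tau$ appearing in (\ref{eq:d_k_final}), whose behaviour as $t\to\infty$ is precisely what generates the extra load-dependent correction in $c_k^{(2)}$ and $c_\perp^{(2)}$.

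Fix $t\geq t_0$ and set $\mu_k:=\mu+\nu/(\eta+\sigma_k)$. Using that $P(\tau)\equiv P_1$ for $\tau\geq t_0$, I split
\begin{equation*}
\int_0^t\mathcal{E}_\alpha\!\left(\mu_k^{1/\alpha}(t-\tau)\right)[P(\tau)-P_0]\,\dd\tau=I_{1,k}(t)+(P_1-P_0)\int_0^{t-t_0}\mathcal{E}_\alpha\!\left(\mu_k^{1/\alpha}s\right)\dd s,
\end{equation*}
where $I_{1,k}(t):=\int_0^{t_0}\mathcal{E}_\alpha(\mu_k^{1/\alpha}(t-\tau))[P(\tau)-P_0]\,\dd\tau$. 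The integral identity (\ref{eq:int_cal_E_a}) already used to pass from (\ref{eq:d_k_sol}) to (\ref{eq:d_k_final}) evaluates the second piece as a multiple of $E_\alpha(-\mu_k(t-t_0)^\alpha)-1$. Inserting this back into (\ref{eq:d_k_final}), collecting $t$-independent contributions, and simplifying via $1/\mu_k=(\eta+\sigma_k)/(\mu(\eta+\sigma_k)+\nu)$, one verifies that the stationary terms assemble exactly into the coefficient $c_k^{(2)}$ of (\ref{eq:c_k_2}); the same calculation on (\ref{eq:d_perp_final}) yields $c_\perp^{(2)}$.

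What then remains is to estimate $d_k(t)-c_k^{(2)}$ and $d_\perp(t)-c_\perp^{(2)}$. Each splits into three pieces: a $d_k^0$-multiple of $E_\alpha(-\mu_k t^\alpha)$, handled exactly as in Proposition \ref{prop:stat_regime1} using $\sigma_k\in(0,\sigma_1]$ and monotonicity of $|E_\alpha(-\cdot)|$ via (\ref{eq:E_a_large}); a $(P_1-P_0)l_k$-multiple of $E_\alpha(-\mu_k(t-t_0)^\alpha)$, treated identically after a time shift by $t_0$; and the tail $\mu_k^{1/\alpha-1}I_{1,k}(t)$, bounded by $t_0\|P-P_0\|_{L^\infty(0,t_0)}\sup_{\tau\in[0,t_0]}\mu_k^{1/\alpha-1}|\mathcal{E}_\alpha(\mu_k^{1/\alpha}(t-\tau))|$. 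For $t\gg 1$, the asymptotics (\ref{eq:cal_E_a_large}) render this last piece $O(t^{-\alpha-1})$ when $\alpha\in(0,1)\cup(1,2)$ and exponentially decaying when $\alpha=1$, hence subdominant compared to the first two. Summation over $k$ by orthonormality together with Bessel's inequality applied to $\{d_k^0\},\{l_k\}\in\ell^2$, exactly in the pattern of (\ref{eq:p-p_infty_estim}), then produces the rates (\ref{eq:p-p_infty_exp_regime2})--(\ref{eq:p-p_infty_alg_regime2}). The principal bookkeeping hurdle is checking that the three stationary contributions indeed combine to yield the precise formulas (\ref{eq:c_k_2}) and (\ref{eq:c_perp_2}); once this algebraic consistency is secured, all further estimates are direct adaptations of those established in the proof of Proposition \ref{prop:stat_regime1}.
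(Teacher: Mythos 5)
Your proposal is correct and follows essentially the same route as the paper: invoke Theorem~\ref{thm:main_eta_pos}, split the convolution at $t_0$, evaluate the outer piece via (\ref{eq:int_cal_E_a}), identify the stationary part with $c_k^{(2)}$, $c_\perp^{(2)}$, and estimate the three decaying remainders using $\sigma_k\in(0,\sigma_1]$, the large-argument asymptotics (\ref{eq:E_a_large}), (\ref{eq:cal_E_a_large}), and Bessel's inequality. The only cosmetic difference is that you split $\int_0^t\mathcal{E}_\alpha(\cdot)[P(\tau)-P_0]\,\dd\tau$ directly, whereas the paper first separates the $P_0$-contribution and keeps the $[0,t_0]$-integral with $P(\tau)$ (and bounds it via $\|P\|_{L^1(0,t_0)}$ rather than $t_0\|P-P_0\|_{L^\infty}$); and when $\alpha=1$ the tail $I_{1,k}$ decays at the \emph{same} exponential rate $e^{-\mu_1 t}$ as the leading pieces rather than being strictly subdominant, which affects only the constant in (\ref{eq:p-p_infty_exp_regime2}), not the conclusion.
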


\begin{proof}
The proof is very similar to the one of Proposition \ref{prop:stat_regime1}
with the main difference that the separation of terms in (\ref{eq:d_k_final})--(\ref{eq:d_perp_final})
into the constant and the time-decaying components is now slightly
more complicated. Namely, for the present choice of the load function
$P$, from Theorem \ref{thm:main_eta_pos}, we have
\begin{align}
d_{k}\left(t\right)= & d_{k}^{0}\left[1+\frac{\nu}{\mu\left(\eta+\sigma_{k}\right)+\nu}\left(E_{\alpha}\left(-\left(\mu+\frac{\nu}{\eta+\sigma_{k}}\right)t^{\alpha}\right)-1\right)\right]-\frac{l_{k}}{2a\left(\eta+\sigma_{k}\right)}\left(P\left(t\right)-P_{0}\right)\label{eq:d_k_regime2}\\
 & -\frac{\nu l_{k}}{2a\left(\eta+\sigma_{k}\right)^{2}}\left(\mu+\frac{\nu}{\eta+\sigma_{k}}\right)^{1/\alpha-1}\int_{0}^{t_{0}}\mathcal{E}_{\alpha}\left(\left(\mu+\frac{\nu}{\eta+\sigma_{k}}\right)^{1/\alpha}\left(t-\tau\right)\right)P\left(\tau\right)\dd\tau\nonumber \\
 & +\frac{P_{0}}{2a}\frac{\nu l_{k}}{\left(\eta+\sigma_{k}\right)\left(\mu\left(\eta+\sigma_{k}\right)+\nu\right)}\left(E_{\alpha}\left(-\left(\mu+\frac{\nu}{\eta+\sigma_{k}}\right)t^{\alpha}\right)-1\right)\nonumber \\
 & -\frac{P_{1}}{2a}\frac{\nu l_{k}}{\left(\eta+\sigma_{k}\right)\left(\mu\left(\eta+\sigma_{k}\right)+\nu\right)}\left(E_{\alpha}\left(-\left(\mu+\frac{\nu}{\eta+\sigma_{k}}\right)\left(t-t_{0}\right)^{\alpha}\right)-1\right),\hspace{1em}k\geq1,\nonumber 
\end{align}
\begin{align}
d_{\perp}\left(t\right)= & d_{\perp}^{0}\left[1+\frac{\nu}{\mu\eta+\nu}\left(E_{\alpha}\left(-\left(\mu+\frac{\nu}{\eta}\right)t^{\alpha}\right)-1\right)\right]-\frac{l_{\perp}}{2a\eta}\left(P\left(t\right)-P_{0}\right)\label{eq:d_perp_regime2}\\
 & -\frac{\nu l_{\perp}}{2a\eta^{2}}\left(\mu+\frac{\nu}{\eta}\right)^{1/\alpha-1}\int_{0}^{t_{0}}\mathcal{E}_{\alpha}\left(\left(\mu+\frac{\nu}{\eta}\right)^{1/\alpha}\left(t-\tau\right)\right)P\left(\tau\right)\dd\tau\nonumber \\
 & +\frac{P_{0}}{2a}\frac{\nu l_{\perp}}{\eta\left(\mu\eta+\nu\right)}\left(E_{\alpha}\left(-\left(\mu+\frac{\nu}{\eta}\right)t^{\alpha}\right)-1\right)\nonumber \\
 & -\frac{P_{1}}{2a}\frac{\nu l_{\perp}}{\eta\left(\mu\eta+\nu\right)}\left(E_{\alpha}\left(-\left(\mu+\frac{\nu}{\eta}\right)\left(t-t_{0}\right)^{\alpha}\right)-1\right).\nonumber 
\end{align}
Adding and subtracting $\frac{l_{k}}{2a\left(\eta+\sigma_{k}\right)}P_{1}$
and $\frac{l_{\perp}}{2a\eta}P_{1}$ in (\ref{eq:d_k_regime2}) and
(\ref{eq:d_perp_regime2}), respectively, we rearrange the terms to
arrive at
\[
p\left(x,t\right)-p_{\infty}^{\left(2\right)}\left(x\right)=\frac{P\left(t\right)-P_{1}}{2a}+\sum_{k=1}^{\infty}\breve{d}_{k}\left(t\right)\phi_{k}\left(x\right)+\breve{d}_{\perp}\left(t\right)\phi_{\perp}\left(x\right),\hspace{1em}x\in\left(-a,a\right),\hspace{1em}t\geq0,
\]
with $p_{\infty}^{\left(2\right)}$ defined as in (\ref{eq:p_infty_2_def}),
and 
\begin{align}
\breve{d}_{k}\left(t\right):= & \left[d_{k}^{0}+\frac{P_{0}}{2a}\frac{l_{k}}{\left(\eta+\sigma_{k}\right)}\right]\frac{\nu}{\mu\left(\eta+\sigma_{k}\right)+\nu}E_{\alpha}\left(-\left(\mu+\frac{\nu}{\eta+\sigma_{k}}\right)t^{\alpha}\right)\label{eq:d_k_breve_def}\\
 & -\frac{\nu}{\left(\eta+\sigma_{k}\right)\left(\mu\left(\eta+\sigma_{k}\right)+\nu\right)}E_{\alpha}\left(-\left(\mu+\frac{\nu}{\eta+\sigma_{k}}\right)\left(t-t_{0}\right)^{\alpha}\right)\frac{P_{1}l_{k}}{2a}-\frac{\left(P\left(t\right)-P_{1}\right)l_{k}}{2a\left(\eta+\sigma_{k}\right)}\nonumber \\
 & -\frac{\nu l_{k}}{2a\left(\eta+\sigma_{k}\right)^{2}}\left(\mu+\frac{\nu}{\eta+\sigma_{k}}\right)^{1/\alpha-1}\int_{0}^{t_{0}}\mathcal{E}_{\alpha}\left(\left(\mu+\frac{\nu}{\eta+\sigma_{k}}\right)^{1/\alpha}\left(t-\tau\right)\right)P\left(\tau\right)\dd\tau,\hspace{1em}t\geq0,\hspace{1em}k\geq1,\nonumber 
\end{align}
\begin{align}
\breve{d}_{\perp}\left(t\right):= & \left[d_{\perp}^{0}+\frac{P_{0}}{2a\eta}l_{\perp}\right]\frac{\nu}{\mu\eta+\nu}E_{\alpha}\left(-\left(\mu+\frac{\nu}{\eta}\right)t^{\alpha}\right)\label{eq:d_perp_breve_def}\\
 & -\frac{\nu}{\mu\eta+\nu}E_{\alpha}\left(-\left(\mu+\frac{\nu}{\eta}\right)\left(t-t_{0}\right)^{\alpha}\right)\frac{P_{1}l_{\perp}}{2a\eta}-\frac{\left(P\left(t\right)-P_{1}\right)l_{\perp}}{2a\eta}\nonumber \\
 & -\frac{\nu l_{\perp}}{2a\eta^{2}}\left(\mu+\frac{\nu}{\eta}\right)^{1/\alpha-1}\int_{0}^{t_{0}}\mathcal{E}_{\alpha}\left(\left(\mu+\frac{\nu}{\eta}\right)^{1/\alpha}\left(t-\tau\right)\right)P\left(\tau\right)\dd\tau,\hspace{1em}t\geq0.\nonumber 
\end{align}
We note that here again the series in (\ref{eq:p_infty_2_def}) converges
in $L^{2}\left(-a,a\right)$ because of $\sum_{k=1}^{\infty}\left|c_{k}^{\left(2\right)}\right|^{2}<\infty$
which, in turn, follows from $\sum_{k=1}^{\infty}\left|d_{k}^{0}\right|^{2}<\infty$,
$\sum_{k=1}^{\infty}\left|l_{k}\right|^{2}<\infty$, since $\frac{\mu\left(\eta+\sigma_{k}\right)}{\mu\left(\eta+\sigma_{k}\right)+\nu}<1$
and the square-bracketed term in (\ref{eq:c_k_2}) is uniformly bounded
for all $k\geq1$ and $t\geq0$. 

To deduce estimates (\ref{eq:p-p_infty_exp_regime2})--(\ref{eq:p-p_infty_alg_regime2}),
we consider
\begin{align}
\left\Vert p\left(\cdot,t\right)-p_{\infty}^{(2)}-\frac{P\left(t\right)-P_{1}}{2a}\right\Vert ^{2} & =\sum_{k=1}^{\infty}\left|\breve{d}_{k}\left(t\right)\right|^{2}+\left|\breve{d}_{\perp}\left(t\right)\right|^{2}\label{eq:p-p_inf2_prelim}\\
 & \leq2\sum_{k=1}^{\infty}\left(\left|D_{k}\left(t\right)\right|^{2}\left|d_{k}^{0}\right|^{2}+\left|L_{k}\left(t\right)\right|^{2}\left|l_{k}\right|^{2}\right)+2\left|D_{0}\left(t\right)\right|^{2}\left|d_{\perp}^{0}\right|^{2}+2\left|L_{0}\left(t\right)\right|^{2}\left|l_{\perp}\right|^{2},\nonumber 
\end{align}
where we used the elementary inequality $\left(a+b\right)^{2}\leq2\left(a^{2}+b^{2}\right)$,
$a$, $b\in\mathbb{R}$, and we introduced
\begin{equation}
D_{0}\left(t\right):=\frac{\nu}{\mu\eta+\nu}E_{\alpha}\left(-\left(\mu+\frac{\nu}{\eta}\right)t^{\alpha}\right),\hspace{1em}D_{k}\left(t\right):=\frac{\nu}{\mu\left(\eta+\sigma_{k}\right)+\nu}E_{\alpha}\left(-\left(\mu+\frac{\nu}{\eta+\sigma_{k}}\right)t^{\alpha}\right),\hspace{1em}k\geq1,\label{eq:D_0_D_k_def}
\end{equation}
\begin{align}
L_{0}\left(t\right):= & \frac{P_{0}}{2a\eta}\frac{\nu}{\mu\eta+\nu}E_{\alpha}\left(-\left(\mu+\frac{\nu}{\eta}\right)t^{\alpha}\right)-\frac{P_{1}}{2a\eta}\frac{\nu}{\mu\eta+\nu}E_{\alpha}\left(-\left(\mu+\frac{\nu}{\eta}\right)\left(t-t_{0}\right)^{\alpha}\right)\label{eq:L_0_def}\\
 & -\frac{P\left(t\right)-P_{1}}{2a\eta}-\frac{\nu}{2a\eta^{2}}\left(\mu+\frac{\nu}{\eta}\right)^{1/\alpha-1}\int_{0}^{t_{0}}\mathcal{E}_{\alpha}\left(\left(\mu+\frac{\nu}{\eta}\right)^{1/\alpha}\left(t-\tau\right)\right)P\left(\tau\right)\dd\tau,\nonumber 
\end{align}
\begin{align}
L_{k}\left(t\right):= & \frac{P_{0}}{2a}\frac{\nu}{\left(\eta+\sigma_{k}\right)\left(\mu\left(\eta+\sigma_{k}\right)+\nu\right)}E_{\alpha}\left(-\left(\mu+\frac{\nu}{\eta+\sigma_{k}}\right)t^{\alpha}\right)\label{eq:L_k_def}\\
 & -\frac{P_{1}}{2a}\frac{\nu}{\left(\eta+\sigma_{k}\right)\left(\mu\left(\eta+\sigma_{k}\right)+\nu\right)}E_{\alpha}\left(-\left(\mu+\frac{\nu}{\eta+\sigma_{k}}\right)\left(t-t_{0}\right)^{\alpha}\right)\nonumber \\
 & -\frac{P\left(t\right)-P_{1}}{2a\left(\eta+\sigma_{k}\right)}-\frac{\nu}{2a\left(\eta+\sigma_{k}\right)^{2}}\left(\mu+\frac{\nu}{\eta+\sigma_{k}}\right)^{1/\alpha-1}\int_{0}^{t_{0}}\mathcal{E}_{\alpha}\left(\left(\mu+\frac{\nu}{\eta+\sigma_{k}}\right)^{1/\alpha}\left(t-\tau\right)\right)P\left(\tau\right)\dd\tau,\hspace{1em}k\geq1.\nonumber 
\end{align}
From asymptotics (\ref{eq:E_a_large}), (\ref{eq:cal_E_a_large}),
it follows that $\left|E_{\alpha}\left(-\tau\right)\right|$ and $\left|\mathcal{E}_{\alpha}\left(\tau\right)\right|$
are monotonically decreasing functions for sufficiently large $\tau$.
Since $0<\sigma_{k}\leq\sigma_{1}$, $k\geq1$, and $P\left(t\right)\equiv P_{1}$
for $t\geq t_{0}$, we can estimate from (\ref{eq:D_0_D_k_def})--(\ref{eq:L_k_def})
\[
\left|D_{0}\left(t\right)\right|\leq E_{\alpha}\left(-\left(\mu+\frac{\nu}{\eta}\right)t^{\alpha}\right),\hspace{1em}\left|D_{k}\left(t\right)\right|\leq E_{\alpha}\left(-\left(\mu+\frac{\nu}{\eta+\sigma_{1}}\right)t^{\alpha}\right),\hspace{1em}t\gg1,\hspace{1em}k\geq1,
\]
\begin{align*}
\left|L_{0}\left(t\right)\right|\leq & \frac{P_{0}}{2a\eta}\left|E_{\alpha}\left(-\left(\mu+\frac{\nu}{\eta}\right)t^{\alpha}\right)\right|+\frac{P_{1}}{2a\eta}\left|E_{\alpha}\left(-\left(\mu+\frac{\nu}{\eta}\right)\left(t-t_{0}\right)^{\alpha}\right)\right|\\
 & +\frac{\nu}{2a\eta^{2}}\left(\mu+\frac{\nu}{\eta}\right)^{1/\alpha-1}\left|\mathcal{E}_{\alpha}\left(\left(\mu+\frac{\nu}{\eta}\right)^{1/\alpha}\left(t-t_{0}\right)\right)\right|\left\Vert P\right\Vert _{L^{1}\left(0,t_{0}\right)},\hspace{1em}t\gg1,
\end{align*}
\begin{align*}
\left|L_{k}\left(t\right)\right|\leq & \frac{P_{0}}{2a\left(\eta+\sigma_{1}\right)}\left|E_{\alpha}\left(-\left(\mu+\frac{\nu}{\eta+\sigma_{1}}\right)t^{\alpha}\right)\right|+\frac{P_{1}}{2a\left(\eta+\sigma_{1}\right)}\left|E_{\alpha}\left(-\left(\mu+\frac{\nu}{\eta+\sigma_{1}}\right)\left(t-t_{0}\right)^{\alpha}\right)\right|\\
 & +\frac{\nu}{2a\eta\left(\eta+\sigma_{1}\right)}\sup_{n\geq1}\left(\mu+\frac{\nu}{\eta+\sigma_{n}}\right)^{1/\alpha-1}\left|\mathcal{E}_{\alpha}\left(\left(\mu+\frac{\nu}{\eta+\sigma_{1}}\right)^{1/\alpha}\left(t-t_{0}\right)\right)\right|\left\Vert P\right\Vert _{L^{1}\left(0,t_{0}\right)},\hspace{1em}t\gg1,\hspace{1em}k\geq1,
\end{align*}
For $\alpha\in\left(0,1\right)\cup\left(1,2\right)$, we reuse (\ref{eq:E_a_large}),
(\ref{eq:cal_E_a_large}) to get (\ref{eq:p-p_infty_alg_regime2}).
For $\alpha=1$, employing (\ref{eq:e_a})--(\ref{eq:E_1}) which
entails that $E_{1}\left(-\tau\right)=\left|\mathcal{E}_{1}\left(\tau\right)\right|=\exp\left(-\tau\right)$,
$\tau>0$, we deduce (\ref{eq:p-p_infty_exp_regime2}). In both cases,
we again used that $P\left(t\right)\equiv P_{1}$ for $t\geq t_{0}$
which entailed identical vanishing of the third term in the left-hand
side of (\ref{eq:p-p_inf2_prelim}).
\end{proof}

\section{Numerical illustrations\label{sec:numerics}}

To verify and illustrate some of the obtained results numerically,
we consider the particular kernel function $K_{0}$ as given by (\ref{eq:K0_def}),
with $C_{K}=1.6$. We use a specially written MATLAB code which employs
an external function \cite{Podl} for computing $E_{\alpha}$.

\subsection{Verification of the solution for a variable load\label{subsec:num_verif}}

Let us fix the following set of parameters $a=1$, $\nu=2$, $\eta=1$,
$\mu=1.2$, $\alpha=0.6$. Consider the load profile given by 
\begin{equation}
P\left(t\right)=P_{1}\chi_{\left(t_{0},\infty\right)}\left(t\right)+\left[\frac{P_{0}+P_{1}}{2}-\frac{P_{1}-P_{0}}{2}\cos\left(\frac{\pi t}{t_{0}}\right)\right]\chi_{\left(0,t_{0}\right)}\left(t\right),\hspace{1em}t\geq0,\label{eq:P_ext_cos}
\end{equation}
$P_{0}=6$, $P_{1}=10$, $t_{0}=0.5$ consistent with Subsection \ref{subsec:trans_load}.
This describes a smooth load switch from $P_{0}$ to $P_{1}$ occurring
over time $t_{0}$ and remaining constant afterwards. For the sake
of simplicity, let us suppose that the function $\Delta$ in (\ref{eq:p0_int_eq})
is chosen such that 
\begin{equation}
p_{0}\left(x\right)=\frac{2P_{0}}{a^{2}\pi}\sqrt{a^{2}-x^{2}},\hspace{1em}x\in\left(-a,a\right),\label{eq:p0_sqrt}
\end{equation}
and note that, for such a choice, (\ref{eq:equil_eq}) is automatically
satisfied.

We verify the approach presented in Section \ref{sec:sol} by comparing
the solution formula given in Theorem \ref{thm:main_eta_pos}, with
$d_{\perp}\left(t\right)\equiv0$, $d_{\perp}^{0}=l_{\perp}=0$ (according
to Proposition \ref{prop:K0_zero_kern}), to a numerical finite-difference
method for solving integral equation (\ref{eq:q_eq}) in time. In
particular, we truncate the series in (\ref{eq:q_sol}) at $60$ terms
(however, much less would already be sufficient). For the numerical
approach, we combine Nystr{\"o}m collocation method with a finite-difference
scheme in time using singularity subtraction. In Figure \ref{fig:alph06_p_x},
we see that both solutions almost coincide for all shown instances
of time. The solutions have an oscillatory mismatch, especially in
small regions close to the endpoints $x=\pm1$, which is a typical
phenomenon for a spectral approach. 

\begin{figure}
\begin{centering}
\includegraphics[scale=0.6]{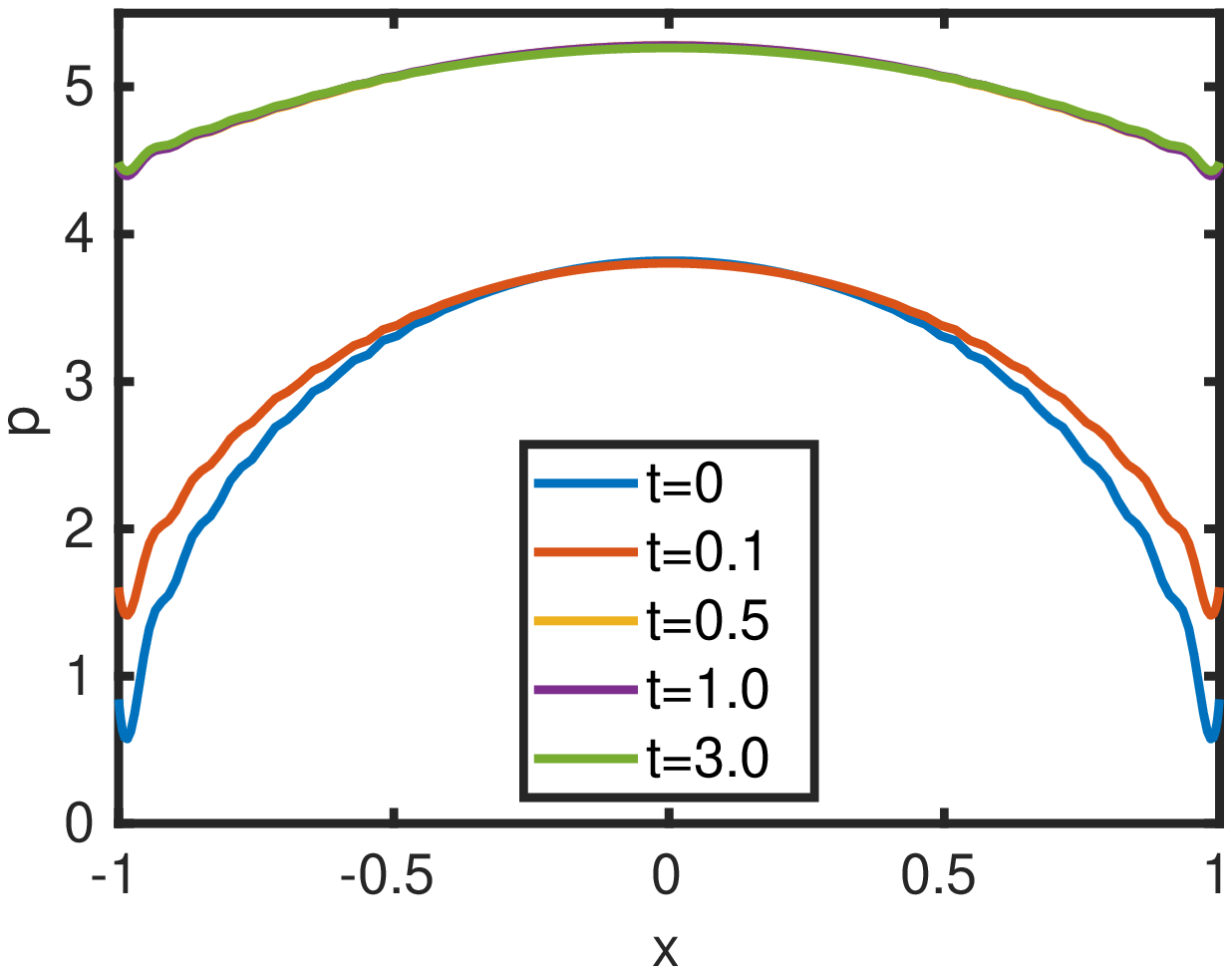}\includegraphics[scale=0.6]{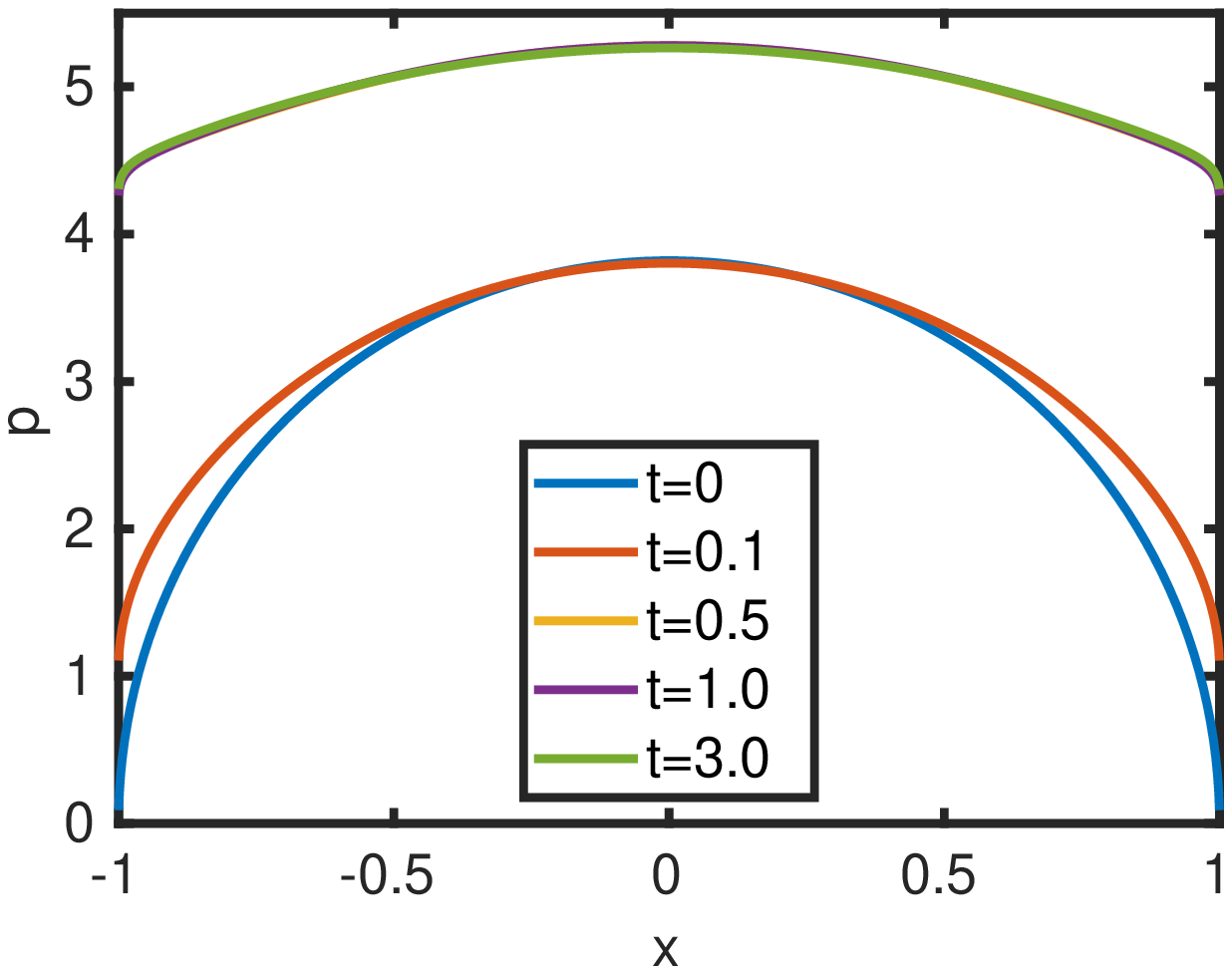}
\par\end{centering}
\caption{\label{fig:alph06_p_x} Time evolution of the pressure profile computed
from equation (\ref{eq:p_sol}) (left) and by a numerical finite-difference
method (right) }
\end{figure}

\subsection{Dependence of the stationary state on the model parameter $\mu$\label{subsec:mu_depend}}

Let us consider the same setup as in Subsection \ref{subsec:num_verif},
but instead of a fixed value of $\mu$, we explore the range of this
model parameter starting from the degenerate case $\mu=0$ (purely
fractional order model for the wear term (\ref{eq:w_p_term}) given
by (\ref{eq:RL_int})) up to $\mu=6$. This is done in order to investigate
the effect of such a parameter variation on an essential output of
the model: the stationary pressure distribution $p_{\infty}^{\left(2\right)}\left(x\right)$
given by (\ref{eq:p_infty_2_def}). As Figure \ref{fig:stat_st_var_mu}
shows, the increase of $\mu$ amounts to steepening of the curve $p_{\infty}^{\left(2\right)}\left(x\right)$
making a deviation from the uniform pressure distribution more pronounced.
Note that, according to (\ref{eq:p_infty_2_def}), the stationary
pressure distribution is independent of the model parameter $\alpha$.

\begin{figure}
\begin{centering}
\includegraphics[scale=0.6]{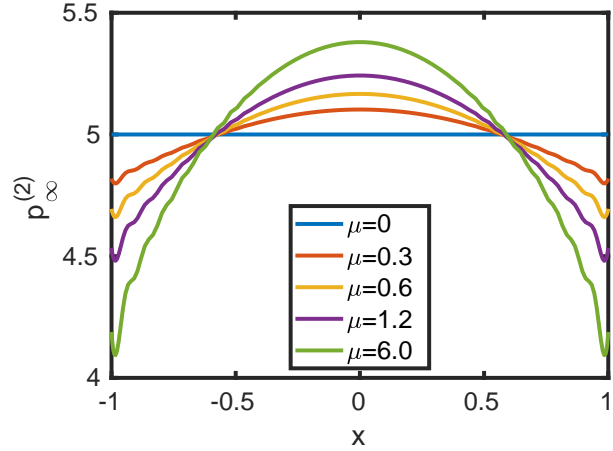}
\par\end{centering}
\caption{\label{fig:stat_st_var_mu} Effect of the model parameter $\mu$ on
the stationary state $p_{\infty}^{\left(2\right)}\left(x\right)$}
\end{figure}

\subsection{Illustration of the convergence under a constant load\label{subsec:t_conv}}

We now consider a simple setting which is more classical for analysis:
the constant load case, i.e. we replace (\ref{eq:P_ext_cos}) with
$P\left(t\right)\equiv P_{0}$, $t\geq0$. We take (\ref{eq:p0_sqrt})
and numerical values of $P_{0}$ and other parameters as in Subsection
\ref{subsec:num_verif}, except for the model ``order'' parameter
$\alpha$ which we now vary. The goal here is to demonstrate qualitatively
different behaviour of the model depending on this parameter. In particular,
we investigate three different values of $\alpha$, by looking at
the pressure evolution at two particular points. For $\alpha=1$,
the convergence to the stationary value occurs fast, and hence Figure
\ref{fig:alph1_p_t} corroborates exponential bound (\ref{eq:p-p_infty_exp}).
For all other values of $\alpha$ inside the interval $\left(0,2\right)$,
the stabilisation occurs only at an algebraic rate. In particular,
this is illustrated in Figures \ref{fig:alph06_p_t} and \ref{fig:alph12_p_t}
where values $\alpha=0.6$ and $\alpha=1.2$ are taken, respectively.
We see that in the first case convergence happens monotonically whereas
in the second one it is accompanied by oscillations. Figure \ref{fig:alph18_p_t}
shows, on extended time interval, that for a larger value ($\alpha=1.8$),
oscillations increase even more.

\begin{figure}
\begin{centering}
\includegraphics[scale=0.6]{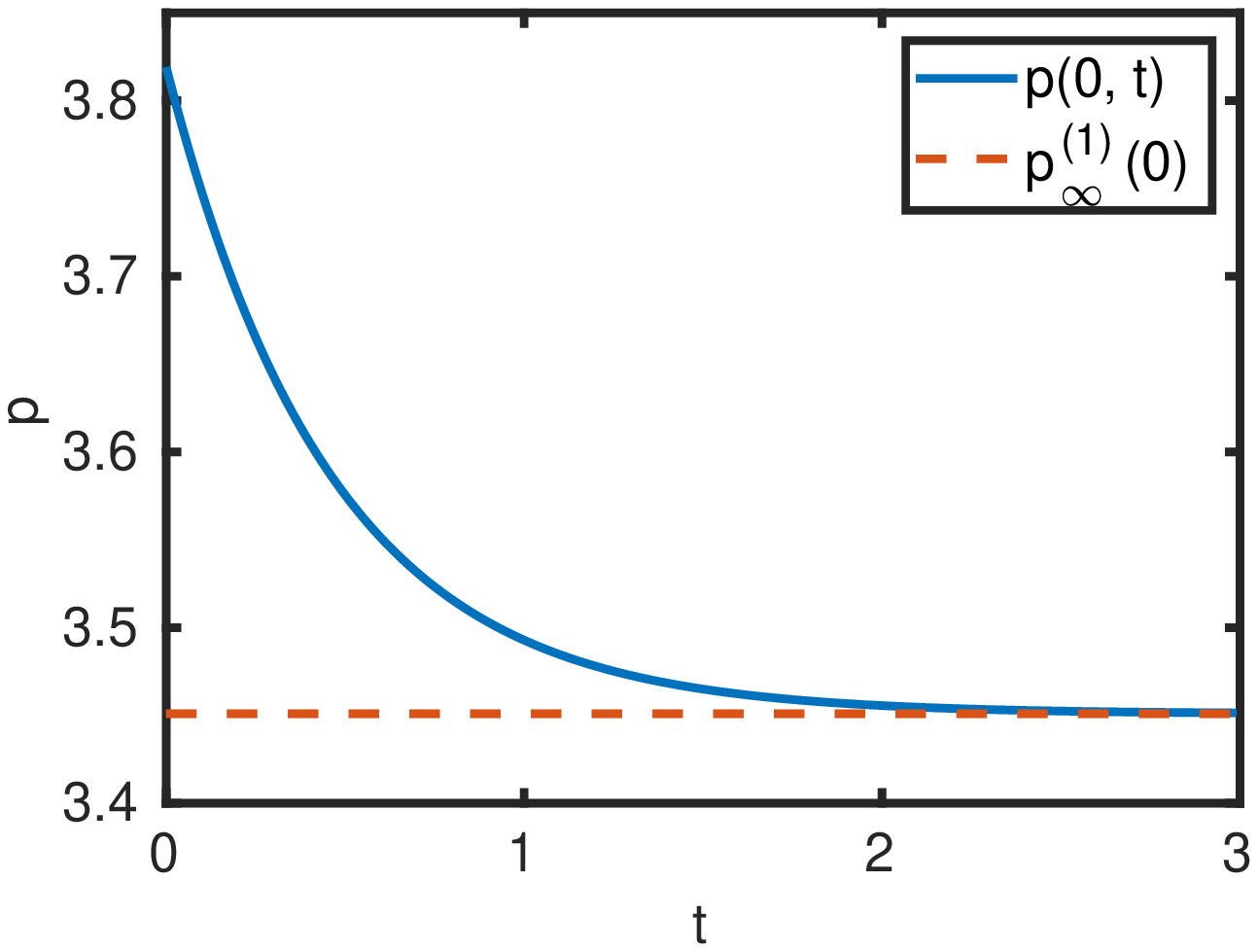}\includegraphics[scale=0.6]{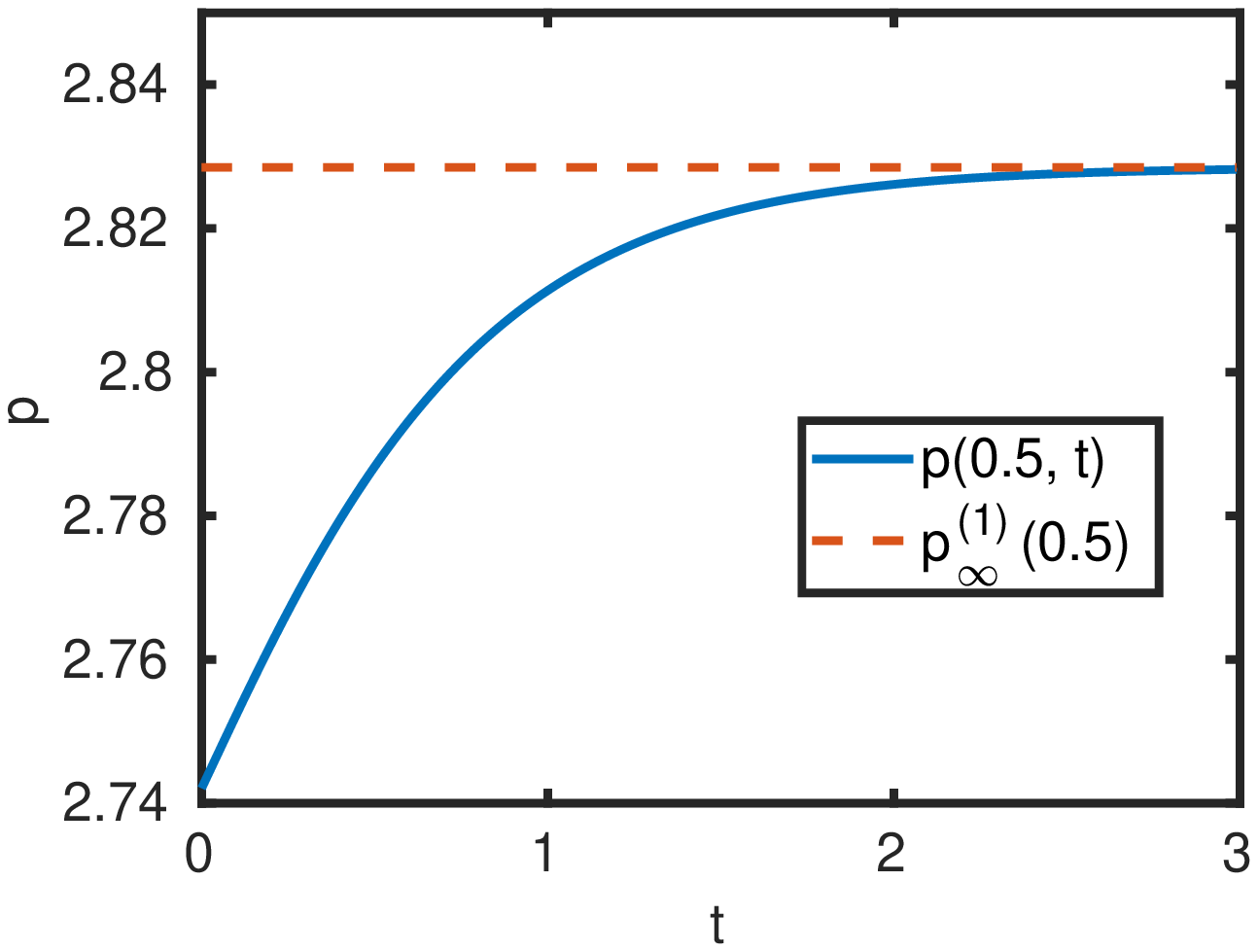}
\par\end{centering}
\caption{\label{fig:alph1_p_t} $\alpha=1$: Time evolution of the pressure
at $x=0$ (left) and $x=0.5$ (right)}
\end{figure}

\begin{figure}
\begin{centering}
\includegraphics[scale=0.6]{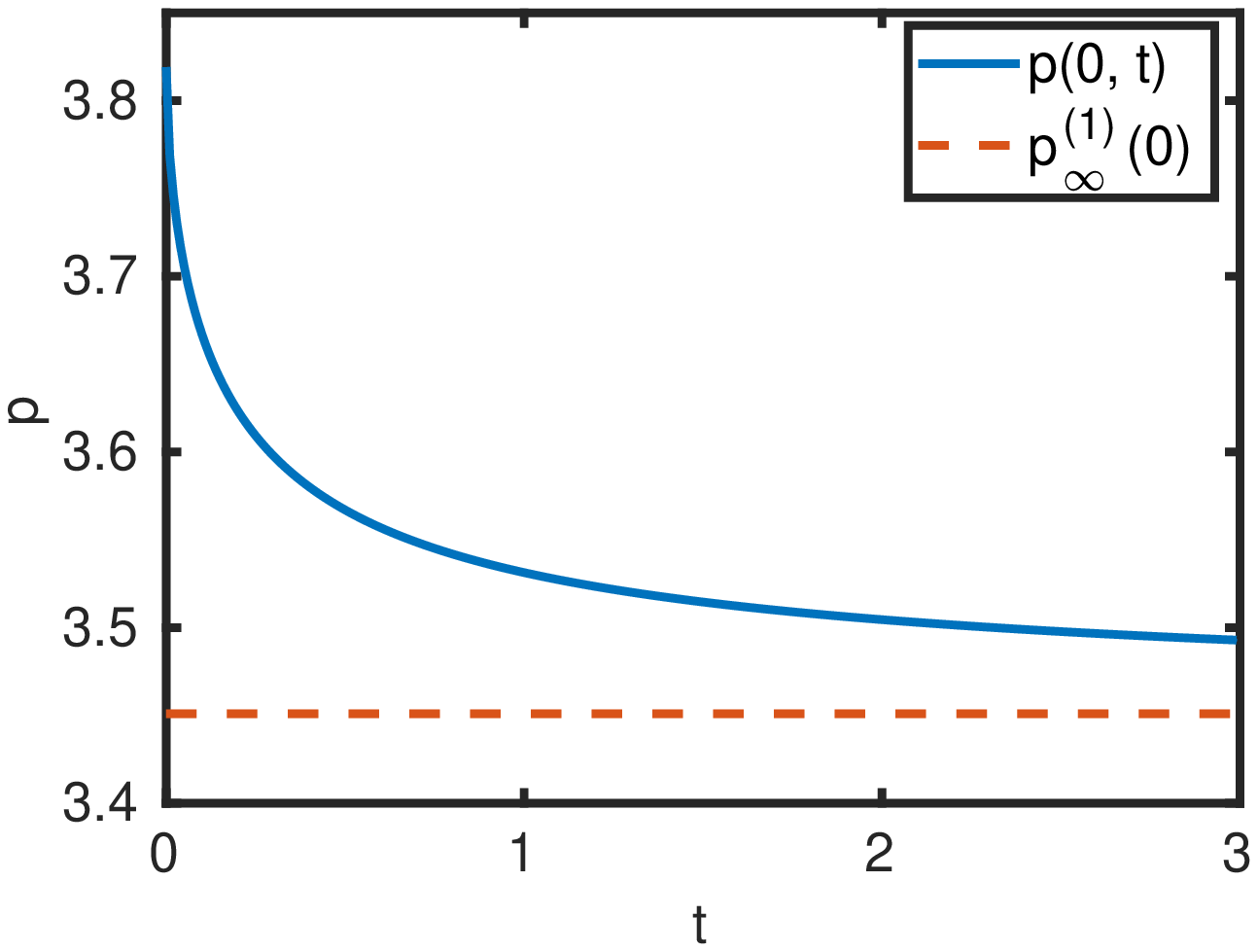}\includegraphics[scale=0.6]{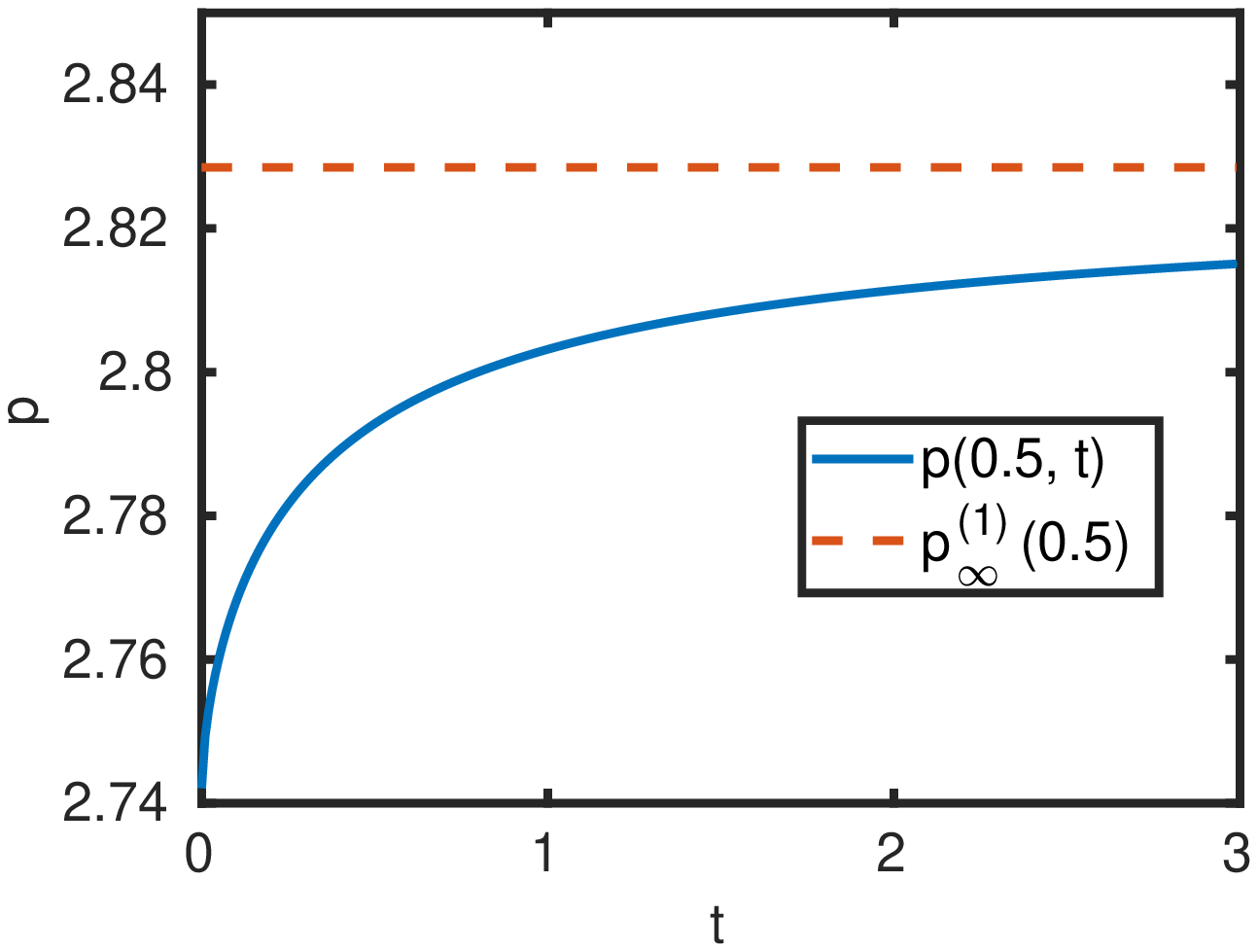}
\par\end{centering}
\caption{\label{fig:alph06_p_t} $\alpha=0.6$: Time evolution of the pressure
at $x=0$ (left) and $x=0.5$ (right)}
\end{figure}
\begin{figure}
\begin{centering}
\includegraphics[scale=0.6]{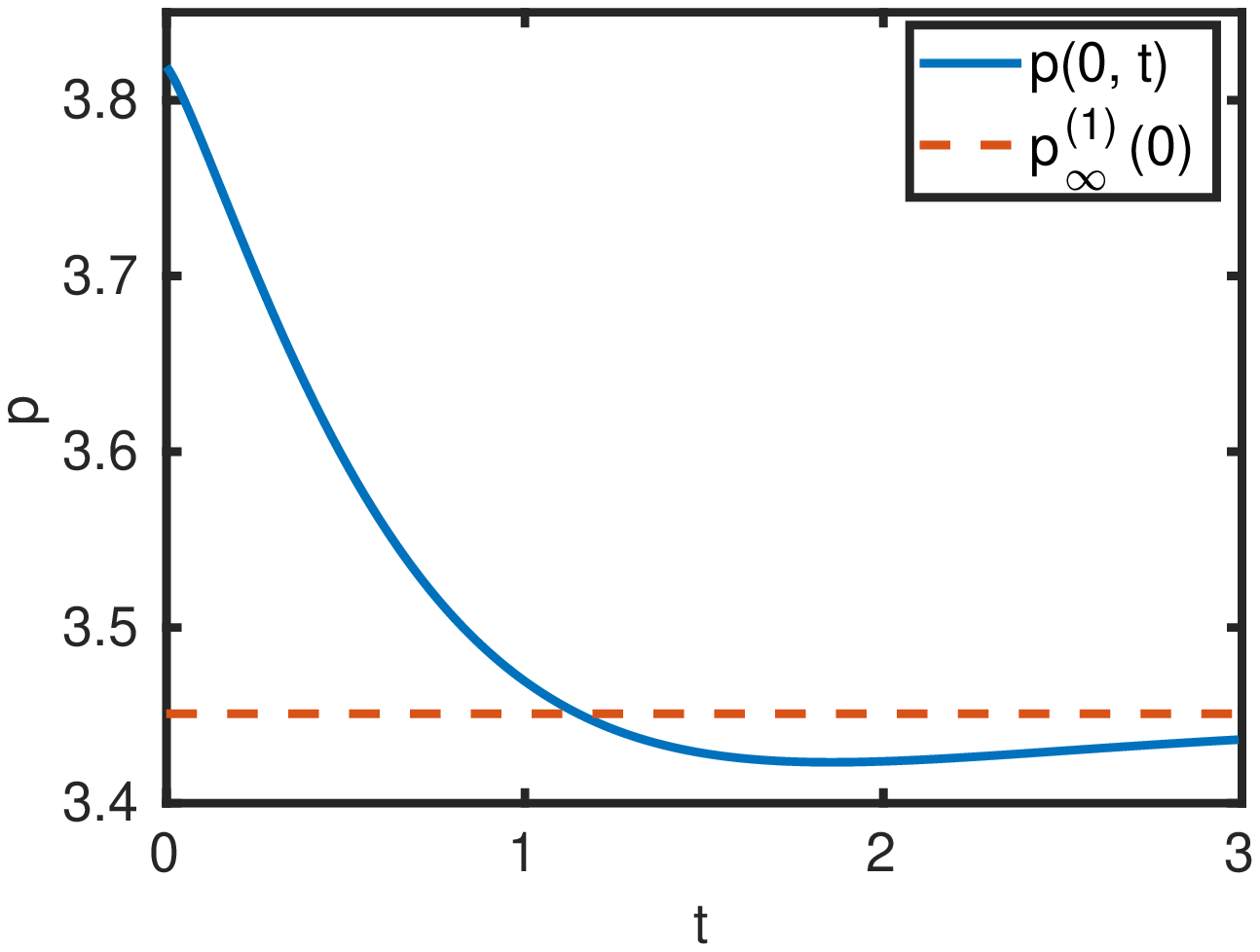}\includegraphics[scale=0.6]{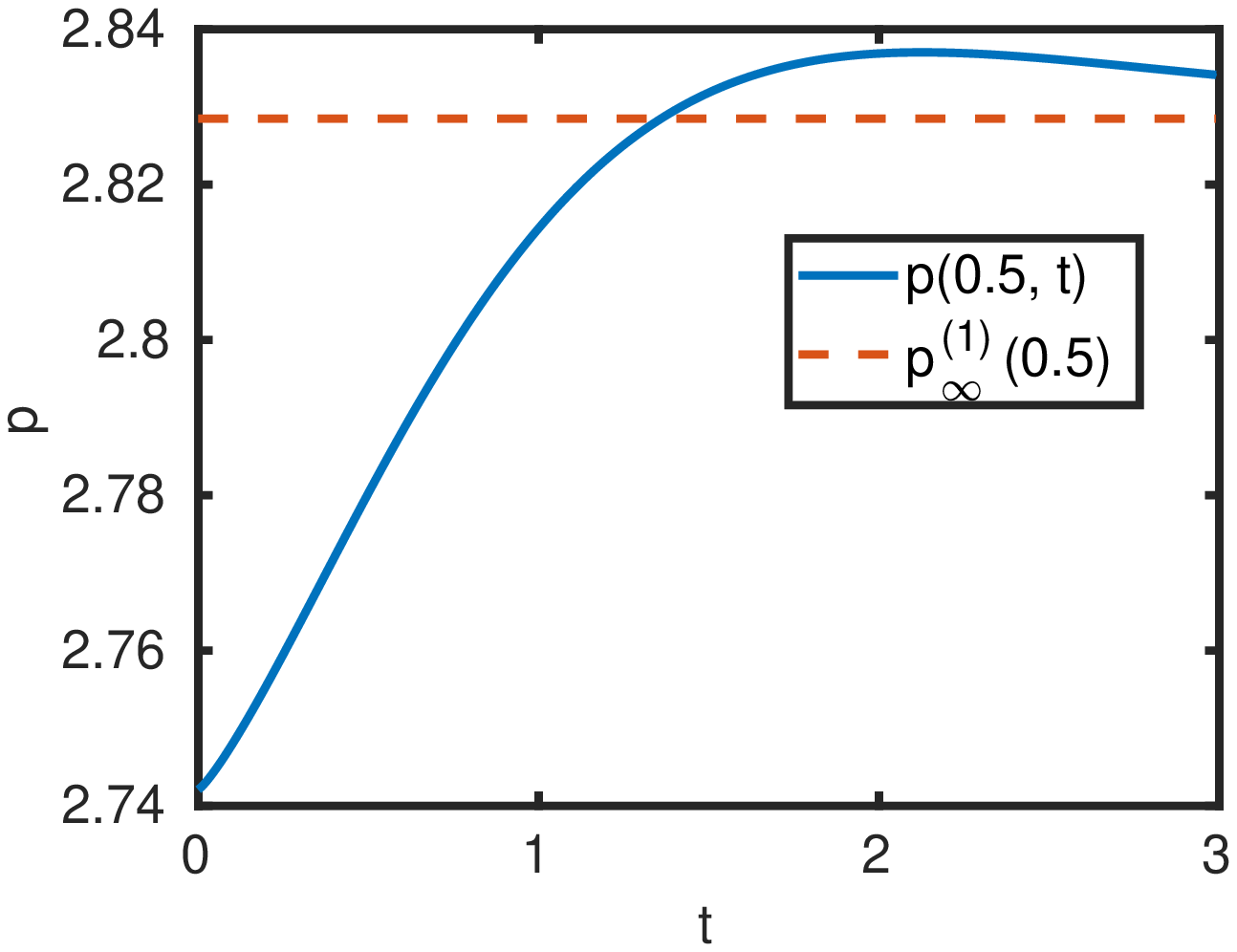}
\par\end{centering}
\caption{\label{fig:alph12_p_t} $\alpha=1.2$: Time evolution of the pressure
at $x=0$ (left) and $x=0.5$ (right)}
\end{figure}
\begin{figure}
\begin{centering}
\includegraphics[scale=0.6]{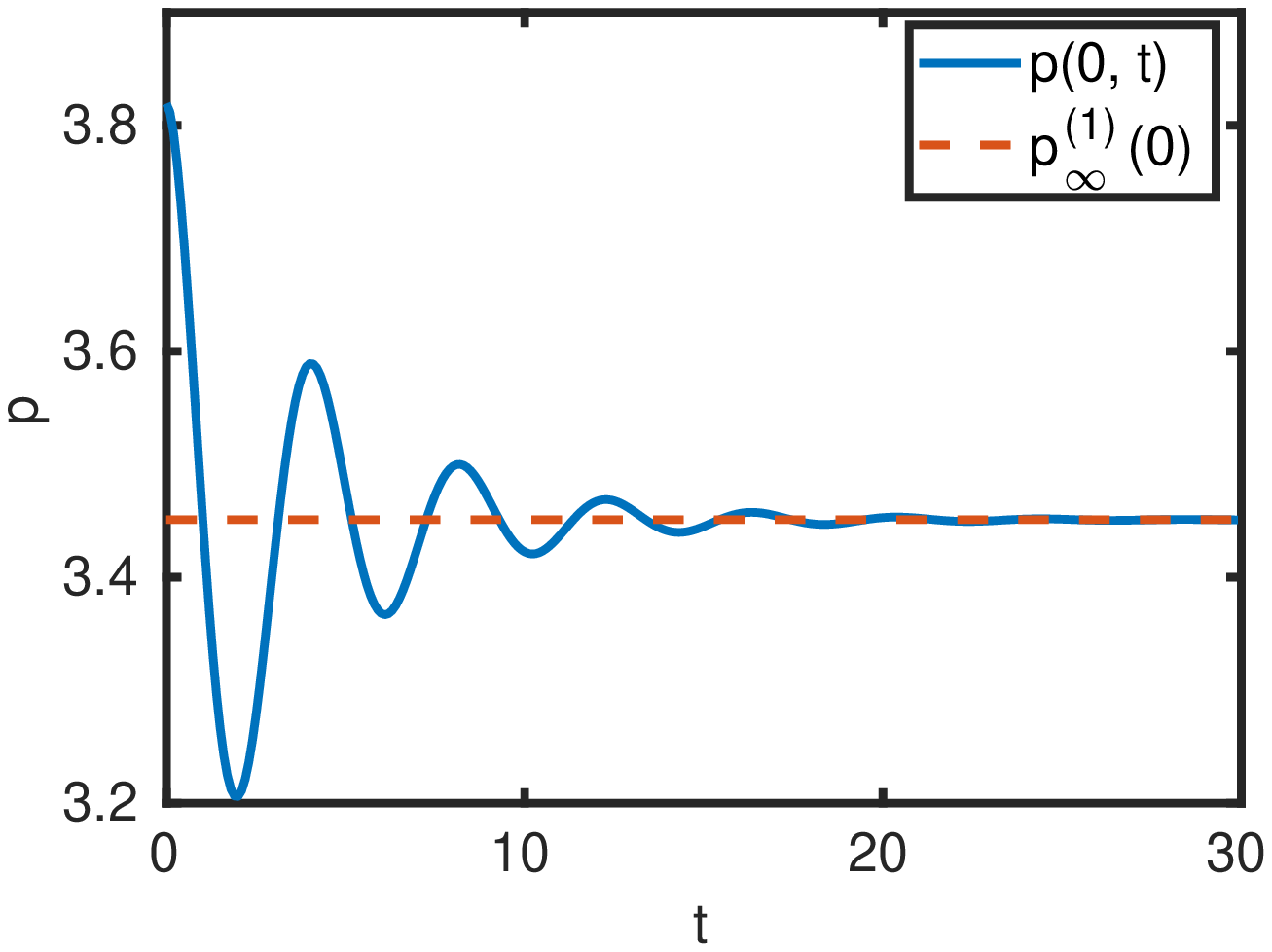}\includegraphics[scale=0.6]{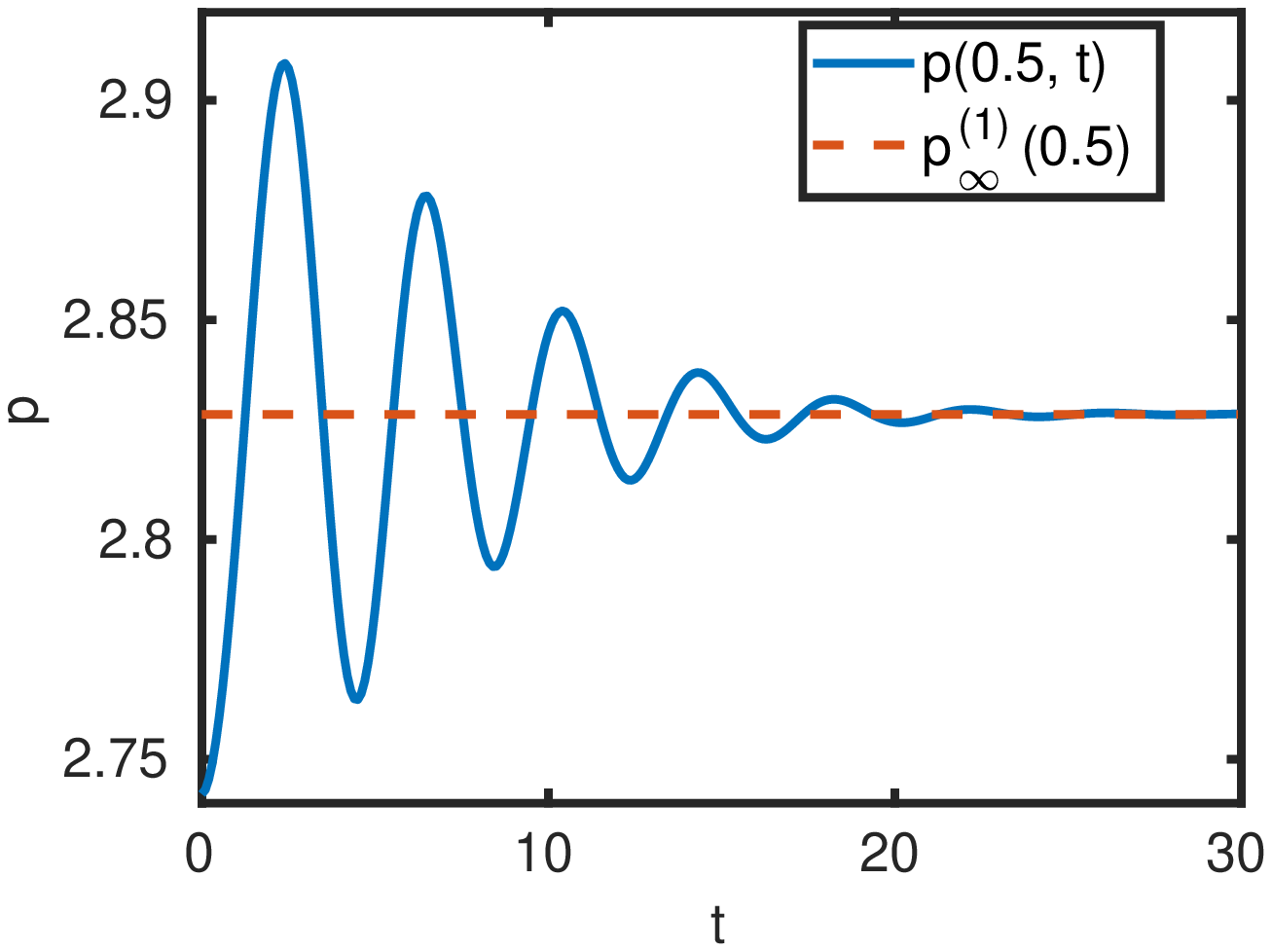}
\par\end{centering}
\caption{\label{fig:alph18_p_t} $\alpha=1.8$: Time evolution of the pressure
at $x=0$ (left) and $x=0.5$ (right)}
\end{figure}

\section{Discussion and conclusion\label{sec:discuss}}

Several formulations of the punch-sliding problem have been considered
at once. These included rather general relation between contact pressure and displacement
generalising the half-space geometry, a
potential presence of coating (or additional surface microstructure),
time-varying load and a linear material wear. In particular, the wear
was modelled through a novel non-local relation between the contact
pressure and the wear generalising the classical one. One advantage
of transient models with such a wear relation is that they admit a
closed-form solution as long as the spatial part of the formulation
can be easily resolved and, most importantly, analysed. This possibility
should not be underestimated in contexts of inverse design of mechanical
materials.

Another relevant advantage is that the solution may exhibit
predictably different transient behavior and the stationary pressure
distribution depending on newly introduced parameters entering the
formulation. As it was shown numerically, these parameters can be
chosen to ensure a slow algebraic stabilisation towards the stationary
distribution whether its monotone or oscillatory. This complements
the fast exponential stabilisation (which happens in the particular
case of the model parameter $\alpha=1$), a typically occuring phenomenon
for such problems, and paves a way for description of new materials
such as polymers within the same framework.

The non-uniform stationary
pressure distributions are also possible and correspond to non-zero
values of the model parameter $\mu$. At the next stage, a practical
validation of the obtained results and a fit of the model to experimental
data is highly desirable. In doing this, the parameter $\mu$ should
be chosen from an observation of the stationary pressure distribution
whereas the parameter $\alpha$ should be determined from temporal
observations (the convergence speed), according to what was described
above.

We thus conclude that the proposed model has a potential to
describe materials whose temporal convergence to a stationary state is slow
as well as those for which the stationary pressure distribution is not uniform.  

As a continuation of this work, analysis of the long-time behavior
under a periodic load $P$ has already been considered in \cite{Pon}. More challenging would be to deal rigorously
with more general kernel functions $K$ such as those that do not
satisfy the parity assumption. However, it seems that the
positivity assumption is not essential, and, in view of that, some
technical results in Appendix are already provided for a slightly
more general setting than the one being considered here. 

\section*{Acknowledgements}

The author is grateful for the support of the bi-national FWF-project
I3538-N32 of the Austrian Science Fund (FWF) used for his employment
at Vienna University of Technology. The paper has benefited from fruitful
discussions with Ivan Argatov on mechanical aspects of the matter. 

\appendix
\section{}
\renewcommand{\thesection}{\Alph{section}}We collect here some auxiliary results that are needed in the paper.

\subsection{Basic theoretical facts about compact linear integral operators}
\begin{lem}
\label{lem:HS_kern} Let $K\in L^{2}\left(\left(-a,a\right)\times\left(-a,a\right)\right)$,
i.e. such that 
\begin{equation}
\int_{-a}^{a}\int_{-a}^{a}\left[K\left(x,\xi\right)\right]^{2}\dd x\dd\xi<\infty.\label{eq:k_HS_cond}
\end{equation}
Then, the integral transformation $f\mapsto\mathfrak{K}\left[f\right]\left(x\right):=\int_{-a}^{a}K\left(x,\xi\right)f\left(\xi\right)\dd\xi$
is a compact linear operator from $L^{2}\left(-a,a\right)$ to $L^{2}\left(-a,a\right)$.
\end{lem}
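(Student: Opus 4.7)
The plan is to establish the two claimed properties separately: first that $\mathfrak{K}$ is a well-defined bounded linear operator from $L^{2}\left(-a,a\right)$ into itself, and second that it is compact. Linearity of $\mathfrak{K}$ is evident from the linearity of the integral, so the bulk of the work lies in the analytic estimates.

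For boundedness, I would start by fixing $f\in L^{2}\left(-a,a\right)$ and applying the Cauchy-Schwarz inequality pointwise in $x$ to the integrand defining $\mathfrak{K}\left[f\right]\left(x\right)$, yielding the bound $\left|\mathfrak{K}\left[f\right]\left(x\right)\right|^{2}\leq\left\Vert f\right\Vert ^{2}\int_{-a}^{a}\left[K\left(x,\xi\right)\right]^{2}\dd\xi$ for almost every $x\in\left(-a,a\right)$. Integrating this inequality over $\left(-a,a\right)$ in the $x$ variable and invoking Fubini's theorem (justified by (\ref{eq:k_HS_cond})) gives
\[
\left\Vert \mathfrak{K}\left[f\right]\right\Vert ^{2}\leq\left\Vert K\right\Vert _{L^{2}\left(\left(-a,a\right)\times\left(-a,a\right)\right)}^{2}\left\Vert f\right\Vert ^{2},
\]
which shows that $\mathfrak{K}$ maps $L^{2}\left(-a,a\right)$ into itself and is a bounded operator with $\left\Vert \mathfrak{K}\right\Vert _{\mathrm{op}}\leq\left\Vert K\right\Vert _{L^{2}\left(\left(-a,a\right)\times\left(-a,a\right)\right)}$. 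As a by-product, this same estimate, applied to an arbitrary $L^{2}$ kernel, shows that the map sending a kernel to its integral operator is continuous from $L^{2}\left(\left(-a,a\right)\times\left(-a,a\right)\right)$ to the operator norm topology.

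For compactness, my strategy is to realise $\mathfrak{K}$ as an operator-norm limit of finite-rank operators. I would pick any orthonormal basis $\left(e_{n}\right)_{n\geq1}$ of $L^{2}\left(-a,a\right)$ and use the standard fact that the tensor products $\left\{ e_{i}\left(x\right)e_{j}\left(\xi\right)\right\} _{i,j\geq1}$ form an orthonormal basis of $L^{2}\left(\left(-a,a\right)\times\left(-a,a\right)\right)$. By (\ref{eq:k_HS_cond}), the kernel $K$ admits an expansion $K\left(x,\xi\right)=\sum_{i,j\geq1}c_{ij}e_{i}\left(x\right)e_{j}\left(\xi\right)$ with $\sum_{i,j\geq1}\left|c_{ij}\right|^{2}<\infty$. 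Truncating the sum to indices $i,j\leq N$ yields kernels $K_{N}$ whose associated operators $\mathfrak{K}_{N}$ have finite-dimensional range (spanned by $e_{1},\ldots,e_{N}$) and are therefore compact. Applying the boundedness estimate to the kernel $K-K_{N}$ gives
\[
\left\Vert \mathfrak{K}-\mathfrak{K}_{N}\right\Vert _{\mathrm{op}}\leq\left\Vert K-K_{N}\right\Vert _{L^{2}\left(\left(-a,a\right)\times\left(-a,a\right)\right)}=\Bigl(\sum_{\max\left(i,j\right)>N}\left|c_{ij}\right|^{2}\Bigr)^{1/2}\underset{N\rightarrow\infty}{\longrightarrow}0.
\]
Since the space of compact operators on a Hilbert space is closed in the operator norm topology, $\mathfrak{K}$ is compact, completing the proof.

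I do not foresee a serious obstacle here, as the argument is entirely standard. The only point requiring a little care is the justification that $\left\{ e_{i}\otimes e_{j}\right\} $ is indeed an orthonormal basis of the $L^{2}$ space on the product domain, but this is immediate from Fubini applied to characteristic functions of rectangles (which span a dense subspace) and the density of simple functions in $L^{2}$.
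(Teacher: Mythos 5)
Your argument is correct, and both the boundedness step via Cauchy-Schwarz and the overall strategy of exhibiting $\mathfrak{K}$ as an operator-norm limit of finite-rank operators coincide with the route the paper follows (the paper cites Hackbusch, Theorems 3.2.3 and 3.2.7, for exactly these two facts). The only point of divergence is the approximation scheme used to produce the finite-rank operators: the paper (following Hackbusch) invokes the Weierstrass approximation theorem, which requires one first to approximate the $L^{2}$ kernel by a continuous kernel and then approximate that continuous kernel uniformly by polynomials, each polynomial yielding a degenerate (finite-rank) kernel; you instead expand $K$ directly in the tensor-product orthonormal basis $\{e_{i}\otimes e_{j}\}$ of $L^{2}\left(\left(-a,a\right)\times\left(-a,a\right)\right)$ and truncate. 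Your variant is slightly cleaner in the $L^{2}$ setting, since it needs no continuity and no two-step density argument, whereas the Weierstrass route is perhaps more familiar and also shows in passing that operators with continuous kernels on a compact rectangle are compact. Both are entirely standard and lead to the same conclusion with the same quantitative bound $\left\Vert \mathfrak{K}-\mathfrak{K}_{N}\right\Vert _{\mathrm{op}}\leq\left\Vert K-K_{N}\right\Vert _{L^{2}}$.
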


\begin{proof}
Under assumption of the validity of (\ref{eq:k_HS_cond}), it follows
that $\mathfrak{K}\left[f\right]\in L^{2}\left(-a,a\right)$ by an
application of the Cauchy-Schwarz inequality (see also \cite[Lem 3.2.3]{Hack}).
The fact that the operator $\mathfrak{K}$ is compact can be shown
\cite[Thm 3.2.7]{Hack} using Weierstrass approximation theorem and
the characterisation of a compact operator as the limit of finite
rank operators.
\end{proof}
\begin{lem}
\label{lem:A_spec} Let $X$ be a Hilbert space and $\mathcal{A}:\,X\rightarrow X$
be a compact self-adjoint operator. Then, the eigenvalues of $\mathcal{A}$
form a real non-increasing in absolute value sequence, and every eigenvalue
different from zero has finite multiplicity. Moreover, the eigenfunctions
of $\mathcal{A}$ form an orthonormal basis in $\overline{\text{Ran }\mathcal{A}}$,
the closure of the range of $\mathcal{A}$.
\end{lem}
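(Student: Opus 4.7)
The plan is to follow the classical construction of the spectral theorem for compact self-adjoint operators on a Hilbert space, in four stages: two short preliminaries, one crucial existence step, and a concluding completeness argument.

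I would first dispose of the two standard consequences of self-adjointness. If $\mathcal{A}\varphi=\lambda\varphi$ with $\varphi\neq 0$, then $\lambda\|\varphi\|^{2}=\langle\mathcal{A}\varphi,\varphi\rangle\in\mathbb{R}$, which forces $\lambda\in\mathbb{R}$. Similarly, if $\mathcal{A}\varphi_{j}=\lambda_{j}\varphi_{j}$ for $j=1,2$ with $\lambda_{1}\neq\lambda_{2}$, then $(\lambda_{1}-\lambda_{2})\langle\varphi_{1},\varphi_{2}\rangle=0$, so eigenspaces belonging to distinct eigenvalues are mutually orthogonal; within a single eigenspace we apply Gram--Schmidt to obtain an orthonormal family.

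The heart of the proof is the existence of at least one eigenvalue $\lambda_{1}$ with $|\lambda_{1}|=\|\mathcal{A}\|$, the nontrivial case being $\mathcal{A}\neq 0$. I would invoke the identity $\|\mathcal{A}\|=\sup_{\|x\|=1}|\langle\mathcal{A}x,x\rangle|$ valid for self-adjoint operators, pick a maximising sequence $(x_{n})$ of unit vectors, and (after passing to a subsequence) assume $\langle\mathcal{A}x_{n},x_{n}\rangle\to\lambda_{1}\in\{\pm\|\mathcal{A}\|\}$. The inequality
\[
\|\mathcal{A}x_{n}-\lambda_{1}x_{n}\|^{2}\leq\|\mathcal{A}\|^{2}+\lambda_{1}^{2}-2\lambda_{1}\langle\mathcal{A}x_{n},x_{n}\rangle\longrightarrow 0
\]
combined with compactness of $\mathcal{A}$ (which extracts a convergent subsequence of $\mathcal{A}x_{n}$, hence of $\lambda_{1}x_{n}$ since $\lambda_{1}\neq 0$) produces a unit eigenvector $\varphi_{1}$. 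This interplay between self-adjointness (for the Rayleigh-type norm identity) and compactness (for the subsequence extraction) is the main obstacle: it is the only place where both hypotheses are genuinely needed simultaneously and where a purely formal argument does not suffice.

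With $(\lambda_{1},\varphi_{1})$ in hand, I would iterate on $E_{1}:=\{\varphi_{1}\}^{\perp}$, which is $\mathcal{A}$-invariant by self-adjointness; the restriction $\mathcal{A}|_{E_{1}}$ is again compact and self-adjoint, with norm at most $|\lambda_{1}|$, and the previous step yields $\lambda_{2}$ with $|\lambda_{2}|\leq|\lambda_{1}|$ and $\varphi_{2}\perp\varphi_{1}$. Continuing inductively produces the non-increasing sequence $(|\lambda_{n}|)$. Finite multiplicity of every nonzero eigenvalue and the accumulation of eigenvalues only at $0$ follow from the standard compactness contradiction: if an orthonormal family $(\psi_{n})$ satisfied $\mathcal{A}\psi_{n}=\mu_{n}\psi_{n}$ with $|\mu_{n}|\geq\varepsilon>0$, then $\|\mathcal{A}\psi_{n}-\mathcal{A}\psi_{m}\|^{2}=\mu_{n}^{2}+\mu_{m}^{2}\geq 2\varepsilon^{2}$, preventing any Cauchy subsequence of $(\mathcal{A}\psi_{n})$ and contradicting compactness.

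Finally, setting $V:=\overline{\mathrm{span}}\{\varphi_{n}:n\geq 1\}$, the restriction of $\mathcal{A}$ to $V^{\perp}$ is compact, self-adjoint, and has operator norm bounded by $|\lambda_{n}|$ for every $n$, hence zero; therefore $V^{\perp}\subset\ker\mathcal{A}$. Since $\ker\mathcal{A}=(\text{Ran }\mathcal{A})^{\perp}$ for any bounded self-adjoint operator, one has the orthogonal decomposition $X=\ker\mathcal{A}\oplus\overline{\text{Ran }\mathcal{A}}$. Each $\varphi_{n}=\lambda_{n}^{-1}\mathcal{A}\varphi_{n}$ lies in $\overline{\text{Ran }\mathcal{A}}$, so $V\subset\overline{\text{Ran }\mathcal{A}}$, and combining the two inclusions yields $V=\overline{\text{Ran }\mathcal{A}}$. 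The orthonormal family $\{\varphi_{n}\}$ is therefore a basis of $\overline{\text{Ran }\mathcal{A}}$, completing the proof.
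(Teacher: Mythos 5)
Your proof is the standard textbook argument for the spectral theorem for compact self-adjoint operators and is correct in all its steps: reality of eigenvalues and mutual orthogonality of eigenspaces, existence of an extremal eigenvalue $|\lambda_1|=\|\mathcal{A}\|$ via the Rayleigh-quotient norm identity together with compactness, iteration on orthogonal complements to produce the non-increasing sequence $(|\lambda_n|)$, finite multiplicity via the Cauchy-subsequence contradiction, and finally $V=\overline{\text{Ran }\mathcal{A}}$ via $V^{\perp}\subset\ker\mathcal{A}=(\text{Ran }\mathcal{A})^{\perp}$. The paper itself does not prove the lemma: its ``proof'' consists of citations to \cite{AtkHan}, \cite{Rie-Sz} and \cite{Teschl}, so there is no independent argument in the paper against which to compare your route; your proposal is precisely the argument those references give.

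One minor polish worth making: in the final step you argue $\|\mathcal{A}|_{V^{\perp}}\|\leq|\lambda_n|$ for every $n$, ``hence zero''. This reasoning implicitly assumes the sequence is infinite so that $|\lambda_n|\to 0$; in the finite-rank case the iteration terminates precisely because the restriction to the remaining orthogonal complement already vanishes, which gives the same conclusion directly. Stating the dichotomy explicitly would make the argument watertight.
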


\begin{proof}
The statement of the lemma is a collection of standard results of
the spectral theory of compact self-adjoint operators. A
close result in the form of a single theorem is given in \cite[Thm 2.8.15]{AtkHan}.
See also \cite[Sect. 97 Thm p.242]{Rie-Sz} and \cite[Thm 5.6]{Teschl}.
\end{proof}
\begin{lem}
\label{lem:Rayl-Hilb}\cite[Sect. 95 Thm p. 237]{Rie-Sz} Let $X$
be a Hilbert space with the inner product $\left\langle \cdot,\cdot\right\rangle _{X}$,
and $\mathcal{A}:\,X\rightarrow X$ is a compact self-adjoint operator.
Then, $\lambda_{n}^{+}$, the $n$-th largest positive eigenvalue
of $\mathcal{A}$ can be characterised as
\[
\lambda_{1}^{+}=\max_{\substack{f\in X}
}\frac{\left\langle \mathcal{A}\left[f\right],f\right\rangle _{X}}{\left\langle f,f\right\rangle _{X}},\hspace{1em}\hspace{1em}\lambda_{n}^{+}=\max_{\substack{f\in X\\
f\perp\left(\varphi_{1}^{+},\ldots\varphi_{n-1}^{+}\right)
}
}\frac{\left\langle \mathcal{A}\left[f\right],f\right\rangle _{X}}{\left\langle f,f\right\rangle _{X}},\hspace{1em}n\geq2,
\]
where $\varphi_{1}^{+},\dots\varphi_{n-1}^{+}$ are the eigenfunctions
corresponding to the $n-1$ largest positive eigenvalues, and the
orthogonality condition $f\perp\left(\varphi_{1}^{+},\ldots\varphi_{n-1}^{+}\right)$
means that $\left\langle f,\varphi_{j}^{+}\right\rangle _{X}=0$,
$j=1$, $\ldots$, $n-1$. Similarly, $\lambda_{n}^{-}$, the $n$-th
smallest negative eigenvalue of $\mathcal{A}$ can be characterised
as
\[
\lambda_{1}^{-}=\min_{\substack{f\in X}
}\frac{\left\langle \mathcal{A}\left[f\right],f\right\rangle _{X}}{\left\langle f,f\right\rangle _{X}},\hspace{1em}\hspace{1em}\lambda_{n}^{-}=\min_{\substack{f\in X\\
f\perp\left(\varphi_{1}^{-},\ldots\varphi_{n-1}^{-}\right)
}
}\frac{\left\langle \mathcal{A}\left[f\right],f\right\rangle _{X}}{\left\langle f,f\right\rangle _{X}},\hspace{1em}n\geq2,
\]
where $\varphi_{1}^{-},\dots\varphi_{n-1}^{-}$ are the eigenfunctions
corresponding to the $n-1$ smallest negative eigenvalues.
\end{lem}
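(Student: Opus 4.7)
The plan is to deduce the variational characterisation directly from the spectral decomposition of compact self-adjoint operators furnished by Lemma \ref{lem:A_spec}. Since $\mathcal{A}$ is compact and self-adjoint, its non-zero eigenvalues form a real sequence accumulating only at zero, and the corresponding normalised eigenfunctions constitute an orthonormal basis of $\overline{\text{Ran}\,\mathcal{A}}$. Ordering the positive eigenvalues as $\lambda_{1}^{+}\geq\lambda_{2}^{+}\geq\ldots>0$ with eigenfunctions $\varphi_{j}^{+}$, and the negative ones as $\lambda_{1}^{-}\leq\lambda_{2}^{-}\leq\ldots<0$ with eigenfunctions $\varphi_{j}^{-}$, every $f\in X$ admits a unique decomposition
\[
f=\sum_{j}c_{j}^{+}\varphi_{j}^{+}+\sum_{j}c_{j}^{-}\varphi_{j}^{-}+f_{0},\qquad c_{j}^{\pm}:=\left\langle f,\varphi_{j}^{\pm}\right\rangle_{X},\quad f_{0}\in\text{Ker}\,\mathcal{A}.
\]

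First, I would substitute this decomposition into the Rayleigh quotient and invoke orthonormality to obtain
\[
\left\langle \mathcal{A}\left[f\right],f\right\rangle_{X}=\sum_{j}\lambda_{j}^{+}\left|c_{j}^{+}\right|^{2}+\sum_{j}\lambda_{j}^{-}\left|c_{j}^{-}\right|^{2},\qquad \left\langle f,f\right\rangle_{X}=\sum_{j}\left|c_{j}^{+}\right|^{2}+\sum_{j}\left|c_{j}^{-}\right|^{2}+\left\langle f_{0},f_{0}\right\rangle_{X}.
\]
For the maximum characterisation of $\lambda_{1}^{+}$, I would observe that the $\lambda_{j}^{-}<0$ terms only decrease the numerator, whereas the $c_{j}^{-}$ and $f_{0}$ contributions only enlarge the denominator, so that $\left\langle \mathcal{A}[f],f\right\rangle_{X}\leq\lambda_{1}^{+}\sum_{j}|c_{j}^{+}|^{2}\leq\lambda_{1}^{+}\left\langle f,f\right\rangle_{X}$. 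Equality is attained at $f=\varphi_{1}^{+}$, yielding the claimed identity.

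To pass to $\lambda_{n}^{+}$ with $n\geq 2$, I would impose the orthogonality constraint $f\perp\varphi_{1}^{+},\ldots,\varphi_{n-1}^{+}$. This annihilates the coefficients $c_{1}^{+},\ldots,c_{n-1}^{+}$, so the same chain of inequalities restricted to $j\geq n$ gives $\left\langle \mathcal{A}[f],f\right\rangle_{X}\leq\lambda_{n}^{+}\left\langle f,f\right\rangle_{X}$, with equality at $f=\varphi_{n}^{+}$. The minimum characterisations of the negative eigenvalues $\lambda_{n}^{-}$ would then follow by applying the positive case to the compact self-adjoint operator $-\mathcal{A}$, whose positive eigenvalues are precisely $-\lambda_{n}^{-}$ with eigenfunctions $\varphi_{n}^{-}$.

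The main subtle point I anticipate is ensuring that the extremum is actually attained rather than merely approached: this relies on the non-zero eigenvalue sequence accumulating only at zero, so each extremal value is realised on a finite-dimensional eigenspace with the optimiser being the corresponding eigenvector, a structural feature already built into Lemma \ref{lem:A_spec}. Beyond this, the argument is purely algebraic bookkeeping of signs and orthogonality, so no further substantial technical obstacle is expected.
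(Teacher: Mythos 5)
The paper does not prove this lemma at all: it is stated purely as a cited result from Riesz--Sz.-Nagy, \cite[Sect.~95 Thm p.~237]{Rie-Sz}, with no proof supplied. Your argument is therefore not comparable to a paper proof, but it is a correct and essentially canonical derivation of the Rayleigh--Ritz characterisation from the spectral decomposition furnished by Lemma~\ref{lem:A_spec}. The decomposition of $f$ into positive-eigenspace, negative-eigenspace and kernel components, the sign bookkeeping showing that the negative and kernel contributions only hurt the Rayleigh quotient, the restriction to the orthogonal complement of $\varphi_1^+,\ldots,\varphi_{n-1}^+$ for $\lambda_n^+$, and the reduction of the negative case to $-\mathcal{A}$ are all sound. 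The one caveat worth stating explicitly is that the formula for $\lambda_n^+$ presupposes that $\mathcal{A}$ actually has at least $n$ positive eigenvalues (counted with multiplicity); otherwise the supremum of the Rayleigh quotient on the constrained set is $\leq 0$ and is not attained unless $\mathrm{Ker}\,\mathcal{A}\neq\{0\}$. Your remark about attainment relying on the eigenvalue sequence accumulating only at zero is the right intuition but could be sharpened to this point: the maximiser exists precisely because $\lambda_n^+$ is an isolated eigenvalue of finite multiplicity with an actual eigenvector in the admissible set, not merely because of accumulation behaviour in the abstract.
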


\phantom{.}
\begin{lem}
\label{lem:Cour-Fisch}\cite[Sect. 95 Thm p. 237]{Rie-Sz} Let $X$
be a Hilbert space with the inner product $\left\langle \cdot,\cdot\right\rangle _{X}$,
and $\mathcal{A}:\,X\rightarrow X$ is a compact self-adjoint operator.
Then, $\lambda_{n}$, the $n$-th largest eigenvalue of $\mathcal{A}$
can be characterised as
\[
\lambda_{n}=\min_{f_{1},\ldots,f_{n-1}\in X}\max_{\substack{f\in X\\
f\perp\left(f_{1},\ldots f_{n-1}\right)
}
}\frac{\left\langle \mathcal{A}\left[f\right],f\right\rangle _{X}}{\left\langle f,f\right\rangle _{X}},\hspace{1em}n\geq2,
\]
where the orthogonality condition $f\perp\left(f_{1},\ldots f_{n-1}\right)$
means that $\left\langle f,f_{j}\right\rangle _{X}=0$, $j=1$, $\ldots$,
$n-1$.
\end{lem}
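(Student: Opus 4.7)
The plan is to establish the min-max formula by matching inequalities, following the classical approach to the Courant-Fischer principle. I would first invoke Lemma \ref{lem:A_spec} to obtain an orthonormal system $\left(\varphi_{k}\right)_{k\geq1}$ of eigenfunctions of $\mathcal{A}$ corresponding to eigenvalues arranged in decreasing order $\lambda_{1}\geq\lambda_{2}\geq\ldots$ (with zeros appended in case $\mathcal{A}$ has a non-trivial kernel). Any $f\in X$ then decomposes orthogonally as $f=\sum_{k\geq1}c_{k}\varphi_{k}+f_{0}$ with $f_{0}\in\mathrm{Ker}\,\mathcal{A}$, so that $\left\langle \mathcal{A}f,f\right\rangle _{X}=\sum_{k\geq1}\lambda_{k}\left|c_{k}\right|^{2}$ and $\left\Vert f\right\Vert _{X}^{2}=\sum_{k\geq1}\left|c_{k}\right|^{2}+\left\Vert f_{0}\right\Vert _{X}^{2}$. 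This spectral expansion is the engine for everything that follows.

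For the upper bound $\min\max\leq\lambda_{n}$, I would make the specific choice $f_{j}:=\varphi_{j}$ for $j=1,\ldots,n-1$. Any admissible $f$ orthogonal to this list has $c_{1}=\ldots=c_{n-1}=0$, hence
\[
\left\langle \mathcal{A}f,f\right\rangle _{X}=\sum_{k\geq n}\lambda_{k}\left|c_{k}\right|^{2}\leq\lambda_{n}\sum_{k\geq n}\left|c_{k}\right|^{2}\leq\lambda_{n}\left\Vert f\right\Vert _{X}^{2},
\]
using $\lambda_{k}\leq\lambda_{n}$ for $k\geq n$ and the fact that $\left\langle \mathcal{A}f_{0},f_{0}\right\rangle _{X}=0$ (so the null-space contribution does not spoil the inequality, regardless of the sign of $\lambda_{n}$). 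Taking the supremum over such $f$ and then the infimum over the specific choice of test directions already gives the bound.

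For the reverse inequality $\lambda_{n}\leq\min\max$, I would fix arbitrary $f_{1},\ldots,f_{n-1}\in X$ and exhibit an admissible $f$ whose Rayleigh quotient is at least $\lambda_{n}$. To this end, consider the $n$-dimensional subspace $V:=\mathrm{span}\left(\varphi_{1},\ldots,\varphi_{n}\right)$ intersected with the codimension-$(n-1)$ linear subspace $\{f\in X:\left\langle f,f_{j}\right\rangle _{X}=0,\ j=1,\ldots,n-1\}$. A dimension count guarantees a non-zero vector $f=\sum_{k=1}^{n}c_{k}\varphi_{k}$ lying in the intersection, and for such $f$ the spectral formula yields
\[
\left\langle \mathcal{A}f,f\right\rangle _{X}=\sum_{k=1}^{n}\lambda_{k}\left|c_{k}\right|^{2}\geq\lambda_{n}\sum_{k=1}^{n}\left|c_{k}\right|^{2}=\lambda_{n}\left\Vert f\right\Vert _{X}^{2},
\]
so that the inner maximum is at least $\lambda_{n}$, and this being independent of $\left(f_{1},\ldots,f_{n-1}\right)$, the outer infimum inherits the same bound.

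The two inequalities together give the desired identity. The only genuinely delicate point, modest as it is, is the bookkeeping associated with the eigenvalue ordering convention when $\mathcal{A}$ has eigenvalues of both signs accumulating at $0$; this is handled cleanly once one agrees on the decreasing arrangement $\lambda_{1}\geq\lambda_{2}\geq\ldots$ and treats the kernel component via $\left\langle \mathcal{A}f_{0},f_{0}\right\rangle _{X}=0$ as above. Beyond that, the argument uses only the spectral expansion from Lemma \ref{lem:A_spec} together with elementary linear-algebraic dimension counting.
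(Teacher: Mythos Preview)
The paper does not supply its own proof of this lemma: it is stated with a direct citation to Riesz--Sz.-Nagy and left unproved. Your proposal is the standard Courant--Fischer argument (spectral expansion, specific choice $f_{j}=\varphi_{j}$ for the upper bound, dimension count in $\mathrm{span}(\varphi_{1},\ldots,\varphi_{n})$ for the lower bound) and is correct; there is nothing to compare against in the paper itself.

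One small remark on presentation: your parenthetical ``with zeros appended'' together with a separate kernel component $f_{0}$ is slightly ambiguous, since if the zero eigenvalue is already listed among the $\lambda_{k}$ with associated $\varphi_{k}$, there should be no residual $f_{0}$. In the paper's actual applications (Propositions~\ref{prop:K2_spec_pos}--\ref{prop:K2_K_link}) the operators are positive semidefinite, so the sign-bookkeeping issue you flag does not arise there; but if you want the statement in full generality you should either (i) restrict to the positive part of the spectrum as in Lemma~\ref{lem:Rayl-Hilb}, or (ii) be explicit that the $\varphi_{k}$ enumerate an orthonormal basis of $X$ including kernel directions at the appropriate position in the ordering, dropping $f_{0}$ altogether. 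Either convention makes both halves of your argument go through cleanly.
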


\subsection{Some special functions and their properties }

\phantom{abc}\\\vspace{-5pt}

\uline{Gamma function}:
\begin{equation}
\Gamma\left(z\right):=\int_{0}^{\infty}x^{z-1}e^{-x}\dd x,\hspace{1em}z>0.\label{eq:gamma_fct}
\end{equation}
This function satisfies the following fundamental relation
\begin{equation}
\Gamma\left(z+1\right)=z\Gamma\left(z\right),\hspace{1em}\hspace{1em}\Gamma\left(n+1\right)=n!,\hspace{1em}z>0,\hspace{1em}n\in\mathbb{N}_{0},\label{eq:gamma_prop}
\end{equation}
as well as Euler's reflection formula
\begin{equation}
\Gamma\left(1-z\right)\Gamma\left(z\right)=\frac{\pi}{\sin\left(\pi z\right)},\hspace{1em}z\notin\mathbb{Z}.\label{eq:gamma_Euler}
\end{equation}

\uline{Mittag-Leffler and relevant functions}:
\begin{equation}
E_{\alpha}\left(z\right):=\sum_{k=0}^{\infty}\frac{z^{k}}{\Gamma\left(\alpha k+1\right)},\hspace{1em}\hspace{1em}E_{\alpha,\beta}\left(z\right):=\sum_{k=0}^{\infty}\frac{z^{k}}{\Gamma\left(\alpha k+\beta\right)},\hspace{1em}\hspace{1em}z\in\mathbb{C},\hspace{1em}\alpha,\,\beta>0,\label{eq:E_a}
\end{equation}
\begin{equation}
e_{\alpha}\left(z;\lambda\right):=\frac{\dd}{\dd z}E_{\alpha}\left(-\lambda z^{\alpha}\right)=\frac{\alpha}{z}\sum_{k=1}^{\infty}\frac{\left(-1\right)^{k}k\lambda^{k}z^{\alpha k}}{\Gamma\left(\alpha k+1\right)},\hspace{1em}z>0,\hspace{1em}\alpha>0,\hspace{1em}\lambda\in\mathbb{C},\label{eq:e_a}
\end{equation}
\begin{equation}
\mathcal{E_{\alpha}}\left(z\right):=e_{\alpha}\left(z;1\right)=\frac{\alpha}{z}\sum_{k=1}^{\infty}\frac{\left(-1\right)^{k}kz^{\alpha k}}{\Gamma\left(\alpha k+1\right)},\hspace{1em}z>0,\hspace{1em}\alpha>0.\label{eq:cal_E_a}
\end{equation}
The following two identities are direct consequences of definitions
(\ref{eq:E_a}) and (\ref{eq:gamma_prop})
\begin{equation}
E_{\alpha,1}\left(z\right)=E_{\alpha}\left(z\right),\hspace{1em}\hspace{1em}E_{1}\left(z\right)=\exp\left(z\right),\hspace{1em}z\in\mathbb{C}.\label{eq:E_1}
\end{equation}

Note that when $\lambda>0$, we can write
\begin{equation}
e_{\alpha}\left(z;\lambda\right)=\lambda^{1/\alpha}\mathcal{E_{\alpha}}\left(\lambda^{1/\alpha}z\right),\hspace{1em}z>0,\hspace{1em}\alpha>0,\hspace{1em}\lambda>0.\label{eq:e_a_cal_E_a}
\end{equation}
We have a useful relation
\begin{equation}
E_{\alpha,\alpha}\left(z\right)=\alpha E_{\alpha,1}^{\prime}\left(z\right)=\alpha E_{\alpha}^{\prime}\left(z\right),\hspace{1em}z\in\mathbb{C},\hspace{1em}\alpha>0,\label{eq:E_a_a}
\end{equation}
which is straightforward to obtain using definitions (\ref{eq:E_a})
and (\ref{eq:gamma_prop}). Also, from (\ref{eq:e_a_cal_E_a}) and
(\ref{eq:e_a}), it follows that
\begin{equation}
\int_{0}^{z_{0}}\mathcal{E}_{\alpha}\left(\lambda^{1/\alpha}z\right)\dd z=\lambda^{-1/\alpha}\int_{0}^{z_{0}}e_{\alpha}\left(z;\lambda\right)\dd z=\lambda^{-1/\alpha}\left[E_{\alpha}\left(-\lambda z_{0}^{\alpha}\right)-1\right],\hspace{1em}z_{0}>0,\hspace{1em}\alpha>0,\hspace{1em}\lambda>0.\label{eq:int_cal_E_a}
\end{equation}
Moreover, when $\alpha\in\left(0,1\right)$, we have the following
integral representation 
\begin{equation}
\mathcal{E}_{\alpha}\left(z\right)=-\frac{\sin\left(\alpha\pi\right)}{\pi}{\displaystyle \int_{0}^{\infty}}e^{-rz}\frac{r^{\alpha}}{r^{2\alpha}+2r^{\alpha}\cos\left(\alpha\pi\right)+1}\dd r,\hspace{1em}0<\alpha<1,\hspace{1em}z>0,\label{eq:cal_E_a_alt}
\end{equation}
which can be obtained from \cite[eqs. (3.19)--(3.20), (3.24)--(3.25)]{GorMain}
by differentiation.\\

Small-argument asymptotics of the above functions follow directly
from definitions (\ref{eq:E_a})--(\ref{eq:cal_E_a})
\begin{equation}
E_{\alpha}\left(z\right)=1+\frac{z}{\Gamma\left(1+\alpha\right)}+\mathcal{O}\left(z^{2}\right),\hspace{1em}\hspace{1em}e_{\alpha}\left(z;\lambda\right)=-\frac{\alpha\lambda}{\Gamma\left(1+\alpha\right)}\frac{1}{z^{1-\alpha}}+\mathcal{O}\left(\frac{1}{z^{1-2\alpha}}\right),\hspace{1em}\hspace{1em}\left|z\right|\ll1,\label{eq:E_a_small}
\end{equation}
\begin{equation}
\mathcal{E_{\alpha}}\left(z\right)=-\frac{\alpha}{\Gamma\left(1+\alpha\right)}\frac{1}{z^{1-\alpha}}+\mathcal{O}\left(\frac{1}{z^{1-2\alpha}}\right),\hspace{1em}\hspace{1em}\left|z\right|\ll1.\label{eq:cal_E_a_small}
\end{equation}
Large-argument asymptotics of $\mathcal{E}_{\alpha}\left(z;\lambda\right)$
can be derived from that of $E_{\alpha}\left(z\right)$ given, for
example, in \cite[eqs. (3.4.14)--(3.4.15)]{GorKilMaiRog} (see also
\cite[eq. (1.2)]{Paris} for the same results for $\alpha\in\left(0,1\right)$)
\begin{equation}
E_{\alpha}\left(z\right)=\frac{1}{\alpha}\exp\left(z^{1/\alpha}\right)-\sum_{k=1}^{\infty}\frac{z^{-k}}{\Gamma\left(1-\alpha k\right)},\hspace{1em}\hspace{1em}E_{\alpha}\left(-z\right)=-\sum_{k=1}^{\infty}\frac{\left(-1\right)^{k}z^{-k}}{\Gamma\left(1-\alpha k\right)},\hspace{1em}\hspace{1em}z\gg1,\hspace{1em}\alpha\in\left(0,1\right)\cup\left(1,2\right),\label{eq:E_a_large}
\end{equation}
\begin{equation}
e_{\alpha}\left(z;\lambda\right)=\begin{cases}
-\frac{\alpha}{\lambda\Gamma\left(1-\alpha\right)}\frac{1}{z^{\alpha+1}}+\mathcal{O}\left(\frac{1}{z^{2\alpha+1}}\right), & z\gg1,\hspace{1em}\lambda>0,\hspace{1em}\alpha\in\left(0,1\right)\cup\left(1,2\right),\\
\frac{\lambda^{1/\alpha}}{\alpha}\exp\left(\lambda^{1/\alpha}z\right)+\frac{\alpha}{\lambda\Gamma\left(1-\alpha\right)}\frac{1}{z^{\alpha+1}}+\mathcal{O}\left(\frac{1}{z^{2\alpha+1}}\right), & z\gg1,\hspace{1em}\lambda<0,\hspace{1em}\alpha\in\left(0,1\right)\cup\left(1,2\right),
\end{cases}\label{eq:e_a_large}
\end{equation}
\begin{equation}
\mathcal{E_{\alpha}}\left(z\right)=-\frac{\alpha}{\Gamma\left(1-\alpha\right)}\frac{1}{z^{\alpha+1}}+\mathcal{O}\left(\frac{1}{z^{2\alpha+1}}\right),\hspace{1em}\hspace{1em}z\gg1,\hspace{1em}\alpha\in\left(0,1\right)\cup\left(1,2\right).\label{eq:cal_E_a_large}
\end{equation}

We also need some Laplace transforms of the above functions. We denote
the Laplace transform of $f$ as $\mathcal{L}\left[f\right]\left(s\right):=\int_{0}^{\infty}f\left(t\right)e^{-st}\dd t$.
In particular, we have (see \cite[eq. (3.14)]{GorMain} or \cite[eq. (4)]{Main})
\begin{equation}
\mathcal{L}\left[\mathcal{E}_{\alpha}\right]\left(s\right)=-\frac{1}{s^{\alpha}+1},\hspace{1em}\alpha>0,\label{eq:E_a_Lap}
\end{equation}
which implies that, for $\lambda>0$,
\begin{equation}
\mathcal{L}^{-1}\left[\frac{1}{s^{\alpha}+\lambda}\right]\left(t\right)=-\frac{1}{\lambda^{1-1/\alpha}}\mathcal{E}_{\alpha}\left(\lambda^{1/\alpha}t\right)=-\frac{1}{\lambda}e_{\alpha}\left(t;\lambda\right),\hspace{1em}\alpha>0.\label{eq:E_a_lmb_inv_Lap}
\end{equation}
To obtain similar result for $\lambda<0$, we use the formula \cite[eq. (1.93)]{SamKilMar}
\[
\mathcal{L}\left[\frac{1}{t^{1-\alpha}}E_{\alpha,\alpha}\left(t^{\alpha}\right)\right]\left(s\right)=\frac{1}{s^{\alpha}-1},\hspace{1em}\text{Re }s>1,\hspace{1em}\alpha>0,
\]
which implies that
\begin{align}
\mathcal{L}^{-1}\left[\frac{1}{s^{\alpha}-\left|\lambda\right|}\right]\left(t\right)= & \frac{1}{t^{1-\alpha}}E_{\alpha,\alpha}\left(\left|\lambda\right|t^{\alpha}\right)=\frac{\alpha}{t^{1-\alpha}}E_{\alpha}^{\prime}\left(\left|\lambda\right|t^{\alpha}\right)=\frac{1}{\left|\lambda\right|}\frac{\dd}{\dd t}E_{\alpha}\left(\left|\lambda\right|t^{\alpha}\right)\label{eq:E_a_lmb_inv_Lap_alt}\\
= & -\frac{1}{\lambda}\frac{\dd}{\dd t}E_{\alpha}\left(-\lambda t^{\alpha}\right)=-\frac{1}{\lambda}e_{\alpha}\left(t;\lambda\right).\nonumber 
\end{align}
Here, we used the identity (\ref{eq:E_a_a}) and the definition of
the function $\mathcal{E}_{a}$.\\
Note that (\ref{eq:E_a_lmb_inv_Lap_alt}) shows that (\ref{eq:E_a_lmb_inv_Lap})
is actually valid for all $\lambda\in\mathbb{R}$.

\subsection{Some singular integral equations and their solutions}

We consider here Abel's integral equations of the first and the second
kinds and two other related equations.
\begin{lem}
\label{lem:Abel_2nd}\cite[Thm 4.2]{GorKilMaiRog} Let $\alpha>0$,
$\lambda\in\mathbb{C}$ and $f\in L^{1}\left(0,T\right)$ for any
$T>0$. Then, the integral equation
\begin{equation}
u\left(t\right)+\frac{\lambda}{\Gamma\left(\alpha\right)}\int_{0}^{t}\frac{u\left(\tau\right)}{\left(t-\tau\right)^{1-\alpha}}\dd\tau=f\left(t\right),\hspace{1em}t\in\left(0,T\right),\label{eq:Abel_2nd}
\end{equation}
has a unique solution $u\in L^{1}\left(0,T\right)$ given by
\begin{align}
u\left(t\right)= & f\left(t\right)-\lambda\int_{0}^{t}\frac{E_{\alpha,\alpha}\left(-\lambda\left(t-\tau\right)^{\alpha}\right)}{\left(t-\tau\right)^{1-\alpha}}f\left(\tau\right)\dd\tau\label{eq:Abel_2nd_sol}\\
= & f\left(t\right)+\int_{0}^{t}e_{\alpha}\left(t-\tau;\lambda\right)f\left(\tau\right)\dd\tau,\hspace{1em}t\in\left(0,T\right).\nonumber 
\end{align}
\end{lem}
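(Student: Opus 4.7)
The plan is to solve by Picard/Neumann iteration. Writing $J^{\alpha}u(t):=\frac{1}{\Gamma(\alpha)}\int_{0}^{t}(t-\tau)^{\alpha-1}u(\tau)\,\dd\tau$ for the Riemann--Liouville fractional integral, the equation becomes $(I+\lambda J^{\alpha})u=f$. The formal inverse is the Neumann series $\sum_{k=0}^{\infty}(-\lambda)^{k}(J^{\alpha})^{k}$, which by the semigroup property $J^{\alpha}\circ J^{\beta}=J^{\alpha+\beta}$ (easily seen from Fubini together with the Beta--Gamma identity) equals $\sum_{k=0}^{\infty}(-\lambda)^{k}J^{\alpha k}$ applied to $f$.

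First I would establish convergence of this series in $L^{1}(0,T)$. A direct application of Fubini gives
\[
\bigl\Vert J^{\alpha k}f\bigr\Vert_{L^{1}(0,T)}\leq\frac{T^{\alpha k}}{\Gamma(\alpha k+1)}\Vert f\Vert_{L^{1}(0,T)},
\]
so that
\[
\sum_{k=0}^{\infty}|\lambda|^{k}\bigl\Vert J^{\alpha k}f\bigr\Vert_{L^{1}(0,T)}\leq\Vert f\Vert_{L^{1}(0,T)}\sum_{k=0}^{\infty}\frac{(|\lambda|T^{\alpha})^{k}}{\Gamma(\alpha k+1)}=\Vert f\Vert_{L^{1}(0,T)}\,E_{\alpha}(|\lambda|T^{\alpha})<\infty,
\]
by definition (\ref{eq:E_a}). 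Hence the partial sums form a Cauchy sequence in $L^{1}(0,T)$ and define $u\in L^{1}(0,T)$. Substituting back termwise and invoking the semigroup property confirms that $u$ satisfies (\ref{eq:Abel_2nd}).

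Next I would turn the series into closed form. Swapping summation with integration (justified by the $L^{1}$-convergence just established) and factoring out $(t-\tau)^{\alpha-1}$ yields
\[
u(t)=f(t)+\int_{0}^{t}\Bigl[\sum_{k=1}^{\infty}\frac{(-\lambda)^{k}(t-\tau)^{\alpha k-1}}{\Gamma(\alpha k)}\Bigr]f(\tau)\,\dd\tau=f(t)-\lambda\int_{0}^{t}(t-\tau)^{\alpha-1}E_{\alpha,\alpha}\bigl(-\lambda(t-\tau)^{\alpha}\bigr)f(\tau)\,\dd\tau,
\]
which is the first line of (\ref{eq:Abel_2nd_sol}). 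The second line follows at once by recognising, from definitions (\ref{eq:E_a})--(\ref{eq:e_a}) and the identity $\Gamma(\alpha k+1)=\alpha k\,\Gamma(\alpha k)$, that $e_{\alpha}(z;\lambda)=-\lambda z^{\alpha-1}E_{\alpha,\alpha}(-\lambda z^{\alpha})$.

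Uniqueness is obtained by the same mechanism: if $v\in L^{1}(0,T)$ solves the homogeneous equation $v+\lambda J^{\alpha}v=0$, then $v=(-\lambda)^{k}J^{\alpha k}v$ for every $k\geq1$, so $\Vert v\Vert_{L^{1}(0,T)}\leq |\lambda|^{k}T^{\alpha k}/\Gamma(\alpha k+1)\,\Vert v\Vert_{L^{1}(0,T)}\to0$ as $k\to\infty$, forcing $v\equiv 0$. The main (minor) obstacle is the careful justification of Fubini for the iterated convolution and of the swap of sum and integral in the identification step, but both are routine once the absolute $L^{1}$-bound above is in place. An alternative derivation via Laplace transforms---using $\mathcal{L}[J^{\alpha}u](s)=s^{-\alpha}\mathcal{L}[u](s)$ together with (\ref{eq:E_a_lmb_inv_Lap_alt})---would give the same answer but requires more care with the function class on the unbounded interval.
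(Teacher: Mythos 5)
The paper does not prove this lemma: it cites \cite[Thm 4.2]{GorKilMaiRog}, and the only remark it adds (in the sentence after the statement) is that the second line of (\ref{eq:Abel_2nd_sol}) follows from identity (\ref{eq:E_a_a}). So there is no in-text proof to compare against; what you have done is supply a self-contained proof, and it is correct. The Neumann-series argument is the standard method of successive approximations for Volterra equations with weakly singular kernel: the bound $\Vert J^{\alpha k}f\Vert_{L^{1}(0,T)}\leq T^{\alpha k}\,\Vert f\Vert_{L^{1}(0,T)}/\Gamma(\alpha k+1)$ is a direct Tonelli computation, its summability against $|\lambda|^{k}$ yields the majorant $E_{\alpha}(|\lambda|T^{\alpha})$, the semigroup identity $J^{\alpha}J^{\beta}=J^{\alpha+\beta}$ is the Beta--Gamma calculation, and the same iterate bound immediately gives $L^{1}$ uniqueness. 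The termwise resummation of the kernel to $-\lambda(t-\tau)^{\alpha-1}E_{\alpha,\alpha}(-\lambda(t-\tau)^{\alpha})$ is justified by the absolute $L^{1}$ bound via Tonelli, as you note. Finally, the identity $e_{\alpha}(z;\lambda)=-\lambda z^{\alpha-1}E_{\alpha,\alpha}(-\lambda z^{\alpha})$ that links the two forms of (\ref{eq:Abel_2nd_sol}) is exactly (\ref{eq:E_a_a}) applied after the chain rule in the definition (\ref{eq:e_a}) --- which is precisely what the paper's one-line remark alludes to. No gaps; this is essentially the successive-approximations proof one finds in the cited reference.
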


Note that the integral form of the solution on the second line of
(\ref{eq:Abel_2nd_sol}) can be seen due to the identity (\ref{eq:E_a_a}),
and it is also given elsewhere (see \cite[eq. (8.1.20)]{GorKilMaiRog},
\cite[eq. (2.12)]{GorMain}).
\begin{lem}
\label{lem:Abel_1st}\cite[Thm 2.1 \& Lem. 2.1]{SamKilMar} (see also
\cite[Sect. 8.1.1]{GorKilMaiRog}) Let $\alpha\in\left(0,1\right)$,
and assume that $f$ is such that $\int_{0}^{t}\frac{f\left(\tau\right)}{\left(t-\tau\right)^{\alpha}}\dd\tau$
is an absolutely continuous function on $\left[0,T\right]$ for some
$T>0$ and, moreover, ${\displaystyle \lim_{t\rightarrow0^{+}}}\int_{0}^{t}\frac{f\left(\tau\right)}{\left(t-\tau\right)^{\alpha}}\dd\tau=0$.
Then, the integral equation
\begin{equation}
\frac{1}{\Gamma\left(\alpha\right)}\int_{0}^{t}\frac{u\left(\tau\right)}{\left(t-\tau\right)^{1-\alpha}}\dd\tau=f\left(t\right),\hspace{1em}t\in\left(0,T\right),\label{eq:Abel_1st}
\end{equation}
has a unique solution $u\in L^{1}\left(0,T\right)$ given by
\begin{equation}
u\left(t\right)=\frac{1}{\Gamma\left(1-\alpha\right)}\frac{\dd}{\dd t}\int_{0}^{t}\frac{f\left(\tau\right)}{\left(t-\tau\right)^{\alpha}}\dd\tau,\hspace{1em}t\in\left(0,T\right).\label{eq:Abel_1st_sol}
\end{equation}
In particular, to satisfy the aforementioned conditions on $f$, it
is sufficient that $f$ is an absolutely continuous function on $\left[0,T\right]$.
In this case, the solution (\ref{eq:Abel_1st_sol}) can be written
in the alternative form
\begin{equation}
u\left(t\right)=\frac{1}{\Gamma\left(1-\alpha\right)}\left[\frac{f\left(0\right)}{t^{\alpha}}+\int_{0}^{t}\frac{f^{\prime}\left(\tau\right)}{\left(t-\tau\right)^{\alpha}}\dd\tau\right],\hspace{1em}t\in\left(0,T\right).\label{eq:Abel_1st_sol_alt}
\end{equation}
\end{lem}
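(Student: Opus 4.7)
The plan is to reformulate the equation in the language of Riemann-Liouville fractional integrals and then invert using the semigroup property. Define, for $\beta > 0$, the operator $J^\beta g(t) := \frac{1}{\Gamma(\beta)}\int_0^t (t-\tau)^{\beta-1} g(\tau)\,\dd\tau$. Equation (\ref{eq:Abel_1st}) is precisely $J^\alpha u = f$. The classical semigroup identity $J^{1-\alpha}\circ J^\alpha = J^1$ (valid on $L^1(0,T)$ and provable by Fubini together with the Beta function identity $\int_0^1 s^{-\alpha}(1-s)^{\alpha-1}\dd s = \Gamma(\alpha)\Gamma(1-\alpha)$) allows one to apply $J^{1-\alpha}$ to both sides and obtain
\begin{equation*}
\int_0^t u(\tau)\,\dd\tau = J^1 u(t) = J^{1-\alpha} f(t) = \frac{1}{\Gamma(1-\alpha)}\int_0^t \frac{f(\tau)}{(t-\tau)^\alpha}\,\dd\tau.
\end{equation*}
The two hypotheses on $f$ are exactly tailored for the next step: absolute continuity of $J^{1-\alpha}f$ on $[0,T]$ justifies differentiating both sides by the Lebesgue differentiation theorem, and the vanishing condition ${\displaystyle \lim_{t\to 0^+}} J^{1-\alpha}f(t) = 0$ ensures consistency with $J^1 u(0) = 0$. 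Differentiation then yields (\ref{eq:Abel_1st_sol}).

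For uniqueness I would argue directly: if $u_1, u_2 \in L^1(0,T)$ both solve (\ref{eq:Abel_1st}), then $J^\alpha(u_1 - u_2) \equiv 0$ on $(0,T)$, so applying $J^{1-\alpha}$ once more gives $J^1(u_1-u_2) \equiv 0$, hence $u_1 = u_2$ a.e.

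For the alternative representation (\ref{eq:Abel_1st_sol_alt}) under the stronger hypothesis that $f$ is absolutely continuous, I would integrate by parts in $J^{1-\alpha}f$ with $u := f(\tau)$ and $\dd v := (t-\tau)^{-\alpha}\dd\tau$, obtaining
\begin{equation*}
\int_0^t \frac{f(\tau)}{(t-\tau)^\alpha}\,\dd\tau = \frac{f(0)\,t^{1-\alpha}}{1-\alpha} + \frac{1}{1-\alpha}\int_0^t (t-\tau)^{1-\alpha} f'(\tau)\,\dd\tau.
\end{equation*}
The right-hand side is now manifestly differentiable in $t$ (since $f'\in L^1$ by absolute continuity), and term-by-term differentiation (via Leibniz with the boundary contribution vanishing because the integrand carries a factor $(t-\tau)^{1-\alpha}$ with positive exponent) produces $f(0)t^{-\alpha} + \int_0^t (t-\tau)^{-\alpha}f'(\tau)\,\dd\tau$, yielding (\ref{eq:Abel_1st_sol_alt}) after dividing by $\Gamma(1-\alpha)$.

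The main technical obstacle is the justification of the semigroup identity and the differentiation under the integral at the endpoints, where the weakly singular kernel $(t-\tau)^{-\alpha}$ meets the integrability requirement on $f$. Working only with $f\in L^1$ one cannot, in general, differentiate $J^{1-\alpha}f$ pointwise; the absolute-continuity-plus-vanishing-at-zero hypothesis is what converts this into a bona fide fundamental theorem of calculus argument. Since the result is classical, I would likely just cite \cite[Thm 2.1 \& Lem. 2.1]{SamKilMar} for the rigorous version and supply the computation above as the intuition.
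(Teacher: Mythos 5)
The paper does not prove Lemma \ref{lem:Abel_1st}; it cites it directly from Samko--Kilbas--Marichev and Gorenflo et al. Your proposal reconstructs the standard argument those references use: writing the equation as $J^{\alpha}u=f$ with the Riemann--Liouville fractional integral, inverting via the semigroup identity $J^{1-\alpha}J^{\alpha}=J^{1}$ (a Fubini--plus--Beta-function computation, as you note), and then differentiating, with the hypotheses on $f$ precisely what is needed to make $J^{1-\alpha}f$ absolutely continuous and vanishing at $0$ so that the differentiation step is legitimate. The uniqueness argument via $J^{\alpha}(u_1-u_2)\equiv 0 \Rightarrow J^{1}(u_1-u_2)\equiv 0 \Rightarrow u_1=u_2$ a.e. is clean and correct, and the integration by parts that converts (\ref{eq:Abel_1st_sol}) into (\ref{eq:Abel_1st_sol_alt}) under absolute continuity of $f$ is computed correctly, with the boundary term at $\tau=t$ vanishing because the exponent $1-\alpha$ is positive.

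The one point worth flagging: your chain of reasoning establishes \emph{necessity} (any $L^{1}$ solution must be given by (\ref{eq:Abel_1st_sol})) and uniqueness, but does not explicitly verify \emph{sufficiency}, i.e.\ that the candidate $u=(J^{1-\alpha}f)'$ actually satisfies $J^{\alpha}u=f$. That verification is exactly where the two structural hypotheses on $f$ (absolute continuity of $J^{1-\alpha}f$ plus its vanishing at $0$) do the real work — they characterise when $f$ lies in the range $J^{\alpha}(L^{1})$, which is what \cite[Thm 2.1]{SamKilMar} establishes. You acknowledge this by deferring to the reference, which is the appropriate move; the paper does the same thing, more economically, by citing and not proving at all.
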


\phantom{.}
\begin{cor}
\label{cor:Abel_1st} Let $\alpha\in\left[1,2\right)$, and assume
that $f$ is such that $\int_{0}^{t}\frac{f\left(\tau\right)}{\left(t-\tau\right)^{\alpha-1}}\dd\tau$
is an absolutely continuous function on $\left[0,T\right]$ for some
$T>0$ and, moreover, ${\displaystyle \lim_{t\rightarrow0^{+}}}\int_{0}^{t}\frac{f\left(\tau\right)}{\left(t-\tau\right)^{\alpha-1}}\dd\tau=0$.
Then, integral equation (\ref{eq:Abel_1st}) has a unique solution
$u$ such that $\int_{0}^{\cdot}u\left(\tau\right)\dd\tau\in L^{1}\left(0,T\right)$
and it is given by
\begin{equation}
u\left(t\right)=\frac{\Gamma\left(\alpha\right)\sin\left[\pi\left(\alpha-1\right)\right]}{\pi\left(\alpha-1\right)}\frac{\dd^{2}}{\dd t^{2}}\int_{0}^{t}\frac{f\left(\tau\right)}{\left(t-\tau\right)^{\alpha-1}}\dd\tau,\hspace{1em}t\in\left(0,T\right).\label{eq:Abel_1st_sol_alph12}
\end{equation}
 Moreover, if $f^{\prime}$ is absolutely continuous on $\left[0,T\right]$,
solution formula (\ref{eq:Abel_1st_sol_alph12}) can be rewritten
as
\begin{equation}
u\left(t\right)=\frac{\Gamma\left(\alpha\right)\sin\left[\pi\left(\alpha-1\right)\right]}{\pi\left(\alpha-1\right)}\left[-\frac{\alpha-1}{t^{\alpha}}f\left(0\right)+\frac{f^{\prime}\left(0\right)}{t^{\alpha-1}}+\int_{0}^{t}\frac{f^{\prime\prime}\left(\tau\right)}{\left(t-\tau\right)^{\alpha-1}}\dd\tau\right],\hspace{1em}t\in\left(0,T\right).\label{eq:Abel_1st_sol_alt_alph12}
\end{equation}
\end{cor}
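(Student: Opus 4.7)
The strategy is to reduce the case $\alpha\in[1,2)$ to the already-established case of Lemma \ref{lem:Abel_1st} (which handles $\alpha\in(0,1)$). When $\alpha=1$, equation (\ref{eq:Abel_1st}) collapses to $\int_0^t u(\tau)\dd\tau=f(t)$ so that $u=f'$; the formula (\ref{eq:Abel_1st_sol_alph12}) reproduces this upon interpreting $\sin[\pi(\alpha-1)]/(\alpha-1)$ as its continuous limit $\pi$ at $\alpha=1$. For $\alpha\in(1,2)$ I set $\beta:=\alpha-1\in(0,1)$ and introduce the antiderivative $U(t):=\int_0^t u(\tau)\dd\tau$, which under the stated solution class ($\int_0^\cdot u\in L^1(0,T)$) is $L^1$ and satisfies $U(0)=0$.

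The key reduction is an integration by parts in (\ref{eq:Abel_1st}). Since $\alpha-1>0$, the boundary term $(t-\tau)^{\alpha-1}U(\tau)$ vanishes at both $\tau=0$ (from $U(0)=0$) and $\tau=t$, yielding $\int_0^t (t-\tau)^{\alpha-1} u(\tau)\dd\tau = (\alpha-1)\int_0^t (t-\tau)^{\alpha-2} U(\tau)\dd\tau$. Combined with $\Gamma(\alpha)=(\alpha-1)\Gamma(\beta)$, equation (\ref{eq:Abel_1st}) becomes $\frac{1}{\Gamma(\beta)}\int_0^t (t-\tau)^{\beta-1} U(\tau)\dd\tau = f(t)$, which is precisely the setting of Lemma \ref{lem:Abel_1st} for the unknown $U$ with fractional index $\beta\in(0,1)$. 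The integrability and vanishing-at-zero hypotheses on $\int_0^t f(\tau)/(t-\tau)^{\alpha-1}\dd\tau$ stated in the corollary match verbatim those required by the lemma (because $\alpha-1=\beta$). Applying it yields $U$ in closed form, and $u=U'$ produces the desired formula; uniqueness in the stated class is inherited directly from the $L^1(0,T)$-uniqueness of $U$.

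To identify the prefactor with $\Gamma(\alpha)\sin[\pi(\alpha-1)]/[\pi(\alpha-1)]$ appearing in (\ref{eq:Abel_1st_sol_alph12}), I apply Euler's reflection formula (\ref{eq:gamma_Euler}) with $z=\alpha-1\in(0,1)$, obtaining $\Gamma(2-\alpha)\Gamma(\alpha-1)=\pi/\sin[\pi(\alpha-1)]$, and combine it with $\Gamma(\alpha)=(\alpha-1)\Gamma(\alpha-1)$ to get $1/\Gamma(2-\alpha)=\Gamma(\alpha)\sin[\pi(\alpha-1)]/[\pi(\alpha-1)]$. The alternative representation (\ref{eq:Abel_1st_sol_alt_alph12}), valid when $f'$ is absolutely continuous, then follows by the substitution $\sigma=t-\tau$ and two successive differentiations under the integral: the first yields the boundary contribution $f(0)t^{1-\alpha}$ plus $\int_0^t f'(\tau)(t-\tau)^{1-\alpha}\dd\tau$, and the second adds $f'(0)t^{1-\alpha}+\int_0^t f''(\tau)(t-\tau)^{1-\alpha}\dd\tau$ while converting the first boundary term into $(1-\alpha)f(0)t^{-\alpha}=-(\alpha-1)f(0)t^{-\alpha}$, matching the three terms of (\ref{eq:Abel_1st_sol_alt_alph12}).

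The only genuine subtlety I foresee is rigorously justifying the integration by parts under the weak solution class stipulated, where $u$ itself need not lie in $L^1$, only its primitive $U$. I would handle this by taking the reduced equation $\frac{1}{\Gamma(\beta)}\int_0^t(t-\tau)^{\beta-1}U(\tau)\dd\tau=f(t)$ as the actual starting point and defining $u$ a posteriori as the distributional derivative of $U$, which is automatically consistent with the class $\int_0^\cdot u\in L^1(0,T)$. Everything else is routine manipulation of $\Gamma$-function identities and differentiation under the integral sign.
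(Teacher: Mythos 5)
Your proposal is correct and follows essentially the same route as the paper: both pass to the primitive $U(t)=\int_0^t u$, integrate by parts (using $U(0)=0$ and $\alpha-1>0$ to kill the boundary terms) to lower the Abel index to $\alpha-1\in(0,1)$, apply Lemma \ref{lem:Abel_1st}, and then invoke Euler's reflection formula \eqref{eq:gamma_Euler} together with $\Gamma(\alpha)=(\alpha-1)\Gamma(\alpha-1)$ to identify the prefactor. The handling of $\alpha=1$ as a limiting/degenerate case and the derivation of the alternative formula \eqref{eq:Abel_1st_sol_alt_alph12} by two rounds of integration-by-parts-then-differentiation also match the paper's computation. Your closing remark about interpreting $u$ a posteriori as the distributional derivative of $U$ is a useful clarification that the paper leaves implicit.
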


\begin{proof}
Let us first assume that $\alpha\in\left(1,2\right)$. Denoting $\widetilde{u}\left(t\right):=\int_{0}^{t}u\left(\tau\right)\dd\tau$,
integration by parts yields
\[
\int_{0}^{t}\frac{u\left(\tau\right)}{\left(t-\tau\right)^{1-\alpha}}\dd\tau=\left(\alpha-1\right)\int_{0}^{t}\frac{\widetilde{u}\left(\tau\right)}{\left(t-\tau\right)^{2-\alpha}}\dd\tau,
\]
where the boundary terms vanish due to $\alpha>1$ and $\widetilde{u}\left(0\right)=0$.
Lemma \ref{lem:Abel_1st} now applies to the equation
\[
\frac{1}{\Gamma\left(\alpha-1\right)}\int_{0}^{t}\frac{\widetilde{u}\left(\tau\right)}{\left(t-\tau\right)^{2-\alpha}}\dd\tau=\frac{\Gamma\left(\alpha\right)}{\left(\alpha-1\right)\Gamma\left(\alpha-1\right)}f\left(t\right),\hspace{1em}t\in\left(0,T\right),
\]
and gives the existence of a unique solution $\tilde{u}\in L^{1}\left(0,T\right)$:
\begin{equation}
\widetilde{u}\left(t\right)=\frac{\Gamma\left(\alpha\right)}{\left(\alpha-1\right)\Gamma\left(\alpha-1\right)\Gamma\left(2-\alpha\right)}\frac{\dd}{\dd t}\int_{0}^{t}\frac{f\left(\tau\right)}{\left(t-\tau\right)^{\alpha-1}}\dd\tau,\hspace{1em}t\in\left(0,T\right).\label{eq:Abel_1st_sol_tild}
\end{equation}
Hence, differentiating and using the identity $\Gamma\left(\alpha-1\right)\Gamma\left(2-\alpha\right)=\pi/\sin\left[\pi\left(\alpha-1\right)\right]$
(see (\ref{eq:gamma_Euler})), we obtain (\ref{eq:Abel_1st_sol_alph12}).

Suppose now that $f$ is absolutely continuous on $\left[0,T\right]$,
we can then integrate by parts, taking into account that $\alpha<2$,
\begin{align}
\int_{0}^{t}\frac{f\left(\tau\right)}{\left(t-\tau\right)^{\alpha-1}}\dd\tau= & \int_{0}^{t}\frac{f\left(t-\tau\right)}{\tau^{\alpha-1}}\dd\tau=\frac{t^{2-\alpha}}{2-\alpha}f\left(0\right)+\frac{1}{2-\alpha}\int_{0}^{t}f^{\prime}\left(t-\tau\right)\tau^{2-\alpha}\dd\tau\label{eq:f_part_integr}\\
= & \frac{t^{2-\alpha}}{2-\alpha}f\left(0\right)+\frac{1}{2-\alpha}\int_{0}^{t}f^{\prime}\left(\tau\right)\left(t-\tau\right)^{2-\alpha}\dd\tau,\nonumber 
\end{align}
and thus
\begin{equation}
\frac{\dd}{\dd t}\int_{0}^{t}\frac{f\left(\tau\right)}{\left(t-\tau\right)^{\alpha-1}}\dd\tau=\frac{f\left(0\right)}{t^{\alpha-1}}+\int_{0}^{t}\frac{f^{\prime}\left(\tau\right)}{\left(t-\tau\right)^{\alpha-1}}\dd\tau.\label{eq:f_part_integr_der}
\end{equation}
If $f^{\prime}$ is absolutely continuous on $\left[0,T\right]$,
we can apply the same procedure to the integral on the right-hand
side before another differentiation and then we arrive at 
\[
\frac{\dd^{2}}{\dd t^{2}}\int_{0}^{t}\frac{f\left(\tau\right)}{\left(t-\tau\right)^{\alpha-1}}\dd\tau=-\frac{\alpha-1}{t^{\alpha}}f\left(0\right)+\frac{f^{\prime}\left(0\right)}{t^{\alpha-1}}+\int_{0}^{t}\frac{f^{\prime\prime}\left(\tau\right)}{\left(t-\tau\right)^{\alpha-1}}\dd\tau,
\]
which, upon insertion into (\ref{eq:Abel_1st_sol_alph12}), gives
(\ref{eq:Abel_1st_sol_alt_alph12}).

Now, we consider the situation when $\alpha=1$. Equation (\ref{eq:Abel_1st})
degenerates and its unique solution $u\left(t\right)=f^{\prime}\left(t\right)$
follows immediately by differentiation of both sides of the equation.
Taking into account that $\frac{\sin\left[\pi\left(\alpha-1\right)\right]}{\pi\left(\alpha-1\right)}\rightarrow1$
as $\alpha\rightarrow1$, upon an elementary simplification, we see
that this solution coincides precisely with (\ref{eq:Abel_1st_sol_alt_alph12})
for $\alpha=1$.
\end{proof}
\begin{lem}
\label{lem:spec_int_eq_2nd} Let $\alpha>0$, $\lambda\in\mathbb{R}$.
Then, the integral equation
\begin{equation}
u\left(t\right)-\lambda\int_{0}^{t}\mathcal{E}_{\alpha}\left(t-\tau\right)u\left(\tau\right)\dd\tau=f\left(t\right),\hspace{1em}t>0,\label{eq:cal_E_a_int_eq_2nd}
\end{equation}
has a unique solution given by
\begin{equation}
u\left(t\right)=f\left(t\right)+\frac{\lambda}{1+\lambda}\int_{0}^{t}e_{\alpha}\left(t-\tau;1+\lambda\right)f\left(\tau\right)\dd\tau,\hspace{1em}t>0.\label{eq:cal_E_a_int_eq_2nd_sol}
\end{equation}
\end{lem}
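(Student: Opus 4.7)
The plan is to handle equation (\ref{eq:cal_E_a_int_eq_2nd}) via Laplace transform, treating it as a Volterra equation of the second kind with a convolution kernel. Setting $U(s):=\mathcal{L}[u](s)$ and $F(s):=\mathcal{L}[f](s)$, the convolution theorem combined with (\ref{eq:E_a_Lap}) transforms the equation into an algebraic relation
\[
U(s)\Bigl(1+\frac{\lambda}{s^{\alpha}+1}\Bigr)=F(s),
\]
which solves as
\[
U(s)=\frac{s^{\alpha}+1}{s^{\alpha}+(1+\lambda)}F(s)=F(s)-\frac{\lambda}{s^{\alpha}+(1+\lambda)}F(s).
\]

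Next I would invert the term $1/(s^{\alpha}+(1+\lambda))$ using (\ref{eq:E_a_lmb_inv_Lap}) together with its extension (\ref{eq:E_a_lmb_inv_Lap_alt}) that covers all real values of the parameter, yielding
\[
\mathcal{L}^{-1}\!\left[\frac{1}{s^{\alpha}+(1+\lambda)}\right]\!(t)=-\frac{1}{1+\lambda}\,e_{\alpha}(t;1+\lambda).
\]
Applying the inverse Laplace transform together with the convolution theorem then reproduces formula (\ref{eq:cal_E_a_int_eq_2nd_sol}) exactly, which establishes existence. For uniqueness, I would rely on the fact that $\mathcal{E}_{\alpha}(\tau)=\mathcal{O}(\tau^{\alpha-1})$ near $\tau=0$ by (\ref{eq:cal_E_a_small}), so the kernel is locally integrable and standard Volterra theory applies; alternatively, injectivity of the Laplace transform on locally integrable functions of exponential type suffices.

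The main technical obstacle is the degeneration at $\lambda=-1$, where the solution formula is indeterminate. The cleanest remedy is to verify the formula by direct substitution: plug (\ref{eq:cal_E_a_int_eq_2nd_sol}) into the left-hand side of (\ref{eq:cal_E_a_int_eq_2nd}), use Fubini to swap the order of integration, and reduce the resulting inner integral $\int_{\tau}^{t}\mathcal{E}_{\alpha}(t-s)e_{\alpha}(s-\tau;1+\lambda)\,\dd s$ through the Laplace identities above to obtain $-\tfrac{1}{\lambda}\bigl[(1+\lambda)e_{\alpha}(t-\tau;1+\lambda)+\lambda\mathcal{E}_{\alpha}(t-\tau)\bigr]$, after which the $f(\tau)$-dependent terms cancel. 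A secondary point worth being careful with is the justification of the Laplace transforms: one should either restrict to $f$ (and hence $u$) of at most exponential growth on $(0,\infty)$ so that the transforms converge in a half-plane, or treat the argument purely formally and take direct substitution as the actual proof — which is what I would do in the final write-up for full rigor.
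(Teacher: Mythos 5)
Your proposal is correct and follows essentially the same route as the paper's own proof: take the Laplace transform, use (\ref{eq:E_a_Lap}) and the convolution theorem to get the algebraic relation $U(s)\bigl(1+\lambda/(s^{\alpha}+1)\bigr)=F(s)$, rewrite $U(s)=\bigl[1-\lambda/(s^{\alpha}+1+\lambda)\bigr]F(s)$, and invert via (\ref{eq:E_a_lmb_inv_Lap}) (whose validity for all real $\lambda$ is indeed secured by (\ref{eq:E_a_lmb_inv_Lap_alt})). The extra care you flag --- the indeterminacy at $\lambda=-1$, the growth hypotheses needed for the Laplace transform to converge, and the fallback of verifying by direct substitution --- goes beyond the paper's terse argument but is sound; in particular, at $\lambda=-1$ the factor $\frac{\lambda}{1+\lambda}e_{\alpha}(\cdot;1+\lambda)$ has a finite limit $t^{\alpha-1}/\Gamma(\alpha)$ (the Riemann--Liouville kernel), so the formula extends by continuity, as your substitution check would confirm.
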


\begin{proof}
Let us denote $U\left(s\right):=\mathcal{L}\left[u\right]\left(s\right)$,
$F\left(s\right):=\mathcal{L}\left[f\right]\left(s\right)$ the Laplace
transforms of $u\left(t\right)$ and $f\left(t\right)$, respectively.
Application of the Laplace transformation to the both sides of (\ref{eq:cal_E_a_int_eq_2nd})
yields, upon using the convolution theorem and (\ref{eq:E_a_Lap}),
\begin{equation}
U\left(s\right)+\frac{\lambda}{s^{\alpha}+1}U\left(s\right)=F\left(s\right)\hspace{1em}\Longrightarrow\hspace{1em}U\left(s\right)=\frac{s^{\alpha}+1}{s^{\alpha}+1+\lambda}F\left(s\right)=\left[1-\frac{\lambda}{s^{\alpha}+1+\lambda}\right]F\left(s\right).\label{eq:cal_E_int_eq_Lap}
\end{equation}
Using (\ref{eq:E_a_lmb_inv_Lap}), inversion of Laplace transformation
of the rightmost equation above leads to (\ref{eq:cal_E_a_int_eq_2nd_sol}).
\end{proof}
\begin{cor}
\label{cor:spec_int_eq_2nd} Let $\alpha>0$, $\mu\geq0$, $\lambda\in\mathbb{R}$.
Then, the integral equation
\begin{equation}
u\left(t\right)-\frac{\lambda}{\mu^{1-1/\alpha}}\int_{0}^{t}\mathcal{E}_{\alpha}\left(\mu^{1/\alpha}\left(t-\tau\right)\right)u\left(\tau\right)\dd\tau=f\left(t\right),\hspace{1em}t>0,\label{eq:cal_E_a_int_eq_mu}
\end{equation}
has a unique solution given by
\begin{align}
u\left(t\right) & =f\left(t\right)+\frac{\lambda}{\mu+\lambda}\int_{0}^{t}e_{\alpha}\left(t-\tau;\mu+\lambda\right)f\left(\tau\right)\dd\tau\label{eq:cal_E_a_int_eq_2nd_mu_sol}\\
 & =f\left(t\right)+\frac{\lambda}{\left(\mu+\lambda\right)^{1-1/\alpha}}\int_{0}^{t}\mathcal{E}_{\alpha}\left(\left(\mu+\lambda\right)^{1/\alpha}\left(t-\tau\right)\right)f\left(\tau\right)\dd\tau,\hspace{1em}t>0.\nonumber 
\end{align}
\end{cor}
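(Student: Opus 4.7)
The plan is to derive Corollary \ref{cor:spec_int_eq_2nd} directly from Lemma \ref{lem:spec_int_eq_2nd} by a time rescaling; the case $\mu>0$ is the main case, with $\mu=0$ handled separately via Abel's equation. For $\mu>0$, I would introduce the rescaled variables $\tilde t:=\mu^{1/\alpha}t$, $\tilde\tau:=\mu^{1/\alpha}\tau$ and set $\tilde u(\tilde t):=u(\mu^{-1/\alpha}\tilde t)$, $\tilde f(\tilde t):=f(\mu^{-1/\alpha}\tilde t)$. Under this change of variable the argument of $\mathcal{E}_\alpha$ becomes $\tilde t-\tilde\tau$, the Jacobian $d\tau=\mu^{-1/\alpha}d\tilde\tau$ combines with the prefactor $\lambda/\mu^{1-1/\alpha}$ into exactly $\lambda/\mu$, and (\ref{eq:cal_E_a_int_eq_mu}) transforms into
\[
\tilde u(\tilde t)-\frac{\lambda}{\mu}\int_0^{\tilde t}\mathcal{E}_\alpha(\tilde t-\tilde\tau)\tilde u(\tilde\tau)\dd\tilde\tau=\tilde f(\tilde t),\qquad \tilde t>0,
\]
which is (\ref{eq:cal_E_a_int_eq_2nd}) with the parameter $\lambda$ replaced by $\lambda/\mu$. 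Lemma \ref{lem:spec_int_eq_2nd} then immediately provides existence, uniqueness and the closed form $\tilde u(\tilde t)=\tilde f(\tilde t)+\frac{\lambda}{\mu+\lambda}\int_0^{\tilde t}e_\alpha(\tilde t-\tilde\tau;1+\lambda/\mu)\tilde f(\tilde\tau)\dd\tilde\tau$.

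Next I would transform back to $(t,\tau)$, picking up a factor $\mu^{1/\alpha}$ from $d\tilde\tau=\mu^{1/\alpha}d\tau$. The key algebraic fact (assuming $\mu+\lambda>0$, which is the case of interest in the applications of this corollary) is
\[
\mu^{1/\alpha}\left(1+\frac{\lambda}{\mu}\right)^{1/\alpha}=(\mu+\lambda)^{1/\alpha},
\]
so that invoking (\ref{eq:e_a_cal_E_a}) with $\lambda\leftarrow 1+\lambda/\mu$ gives
\[
\mu^{1/\alpha}e_\alpha\bigl(\mu^{1/\alpha}(t-\tau);\,1+\lambda/\mu\bigr)=(\mu+\lambda)^{1/\alpha}\mathcal{E}_\alpha\bigl((\mu+\lambda)^{1/\alpha}(t-\tau)\bigr)=e_\alpha(t-\tau;\mu+\lambda).
\]
Substituting this into the transformed solution formula produces simultaneously both representations in (\ref{eq:cal_E_a_int_eq_2nd_mu_sol}): the second line comes from the factorisation $\lambda/(\mu+\lambda)\cdot(\mu+\lambda)^{1/\alpha}=\lambda/(\mu+\lambda)^{1-1/\alpha}$, while the first line is obtained by reassembling the $(\mu+\lambda)^{1/\alpha}\mathcal{E}_\alpha(\cdot)$ combination back into $e_\alpha(\cdot;\mu+\lambda)$.

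For $\mu=0$, the symbols $\mu^{1-1/\alpha}$ and $\mu^{1/\alpha}\mathcal{E}_\alpha(\mu^{1/\alpha}(\cdot))$ must be understood as limits as $\mu\to 0^+$; using the small-argument asymptotic (\ref{eq:cal_E_a_small}) one gets $\mu^{1/\alpha-1}\mathcal{E}_\alpha(\mu^{1/\alpha}(t-\tau))\to -(t-\tau)^{\alpha-1}/\Gamma(\alpha)$, so (\ref{eq:cal_E_a_int_eq_mu}) degenerates into Abel's second-kind equation (\ref{eq:Abel_2nd}) with parameter $\lambda$. Lemma \ref{lem:Abel_2nd} then yields $u(t)=f(t)+\int_0^t e_\alpha(t-\tau;\lambda)f(\tau)\dd\tau$, which agrees with (\ref{eq:cal_E_a_int_eq_2nd_mu_sol}) at $\mu=0$ (the prefactor $\lambda/(\mu+\lambda)$ simplifies to $1$). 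The only real obstacle is bookkeeping: verifying that the fractional powers of $\mu$ arising from the rescaling, the Jacobian, and the identity (\ref{eq:e_a_cal_E_a}) collapse consistently into $(\mu+\lambda)^{1/\alpha}$; this reduces entirely to the elementary identity displayed above.
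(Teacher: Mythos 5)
Your proof is correct and follows essentially the same route as the paper: a rescaling $t\mapsto\mu^{1/\alpha}t$ reduces (\ref{eq:cal_E_a_int_eq_mu}) to (\ref{eq:cal_E_a_int_eq_2nd}) with $\lambda/\mu$ in place of $\lambda$, and the $\mu=0$ case is recovered via the small-argument asymptotics (\ref{eq:cal_E_a_small}) reducing to Abel's equation (\ref{eq:Abel_2nd}). Your extra care in spelling out the identity $\mu^{1/\alpha}(1+\lambda/\mu)^{1/\alpha}=(\mu+\lambda)^{1/\alpha}$ and flagging the sign restriction $\mu+\lambda>0$ needed for the $\mathcal{E}_\alpha$ form via (\ref{eq:e_a_cal_E_a}) is a welcome precision that the paper leaves implicit.
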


\begin{proof}
For $\mu>0$, the result follows from Lemma \ref{lem:spec_int_eq_2nd}
applied to the integral equation for $u\left(t/\mu^{1/\alpha}\right)$
with $\lambda/\mu$ and $f\left(t/\mu^{1/\alpha}\right)$ in place
of $\lambda$ and $f\left(t\right)$, respectively.

To treat the case $\mu=0$, we observe, using (\ref{eq:cal_E_a_small})
and (\ref{eq:gamma_prop}), that in the limit $\mu\rightarrow0$,
equation (\ref{eq:cal_E_a_int_eq_mu}) becomes (\ref{eq:Abel_2nd})
whose solution, given by Lemma \ref{lem:Abel_2nd}, coincides exactly
with (\ref{eq:cal_E_a_int_eq_2nd_mu_sol}) for $\mu=0$.
\end{proof}
\begin{lem}
\label{lem:spec_int_eq_1st} Let $\alpha\in\left(0,1\right)$, and
assume that $f$ is an absolutely continuous function on every subinterval
of $\mathbb{R}_{+}$. Then, the integral equation 
\begin{equation}
\int_{0}^{t}\mathcal{E}_{\alpha}\left(t-\tau\right)u\left(\tau\right)\dd\tau=f\left(t\right),\hspace{1em}t>0,\label{eq:cal_E_a_int_eq_1st}
\end{equation}
has a unique solution given by
\begin{align}
u\left(t\right) & =-\frac{f\left(0\right)}{\Gamma\left(1-\alpha\right)t^{\alpha}}-\int_{0}^{t}\left(1+\frac{1}{\Gamma\left(1-\alpha\right)}\frac{1}{\left(t-\tau\right)^{\alpha}}\right)f^{\prime}\left(\tau\right)\dd\tau\label{eq:cal_E_a_int_eq_1st_sol}\\
 & =-f\left(t\right)-\text{\ensuremath{\left(\frac{1}{\Gamma\left(1-\alpha\right)t^{\alpha}}-1\right)}}f\left(0\right)-\frac{1}{\Gamma\left(1-\alpha\right)}\int_{0}^{t}\frac{f^{\prime}\left(\tau\right)}{\left(t-\tau\right)^{\alpha}}\dd\tau,\hspace{1em}t>0.\nonumber 
\end{align}
\end{lem}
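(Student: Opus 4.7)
The plan is to solve this first-kind Volterra equation by Laplace transform in the spirit of Lemma \ref{lem:spec_int_eq_2nd}, and then to unwind the inverse transform using Abel-type identities from Lemma \ref{lem:Abel_1st}. Writing $U\left(s\right):=\mathcal{L}\left[u\right]\left(s\right)$, $F\left(s\right):=\mathcal{L}\left[f\right]\left(s\right)$ and using the convolution theorem together with the Laplace transform formula (\ref{eq:E_a_Lap}), the equation (\ref{eq:cal_E_a_int_eq_1st}) transforms into $-U\left(s\right)/\left(s^{\alpha}+1\right)=F\left(s\right)$, which I rearrange as
\[
U\left(s\right)=-\left(s^{\alpha}+1\right)F\left(s\right)=-F\left(s\right)-s^{\alpha}F\left(s\right).
\]

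The first piece inverts trivially to $-f\left(t\right)$. For the second piece, I would split $s^{\alpha}F\left(s\right)=s\cdot s^{\alpha-1}F\left(s\right)$ and recall that, since $\alpha\in\left(0,1\right)$, one has $\mathcal{L}^{-1}\left[s^{\alpha-1}\right]\left(t\right)=t^{-\alpha}/\Gamma\left(1-\alpha\right)$. The convolution theorem then gives
\[
\mathcal{L}^{-1}\left[s^{\alpha-1}F\left(s\right)\right]\left(t\right)=\frac{1}{\Gamma\left(1-\alpha\right)}\int_{0}^{t}\frac{f\left(\tau\right)}{\left(t-\tau\right)^{\alpha}}\,\dd\tau,
\]
which vanishes at $t=0^{+}$ (so that the extra factor $s$ simply produces a pointwise derivative on $\left(0,\infty\right)$). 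This identifies $\mathcal{L}^{-1}\left[s^{\alpha}F\left(s\right)\right]\left(t\right)$ with the Riemann--Liouville derivative (\ref{eq:der_RL_def}) of $f$ applied to $t$. Under the absolute continuity assumption on $f$, the alternative representation (\ref{eq:Abel_1st_sol_alt}) from Lemma \ref{lem:Abel_1st} then yields
\[
\mathcal{L}^{-1}\left[s^{\alpha}F\left(s\right)\right]\left(t\right)=\frac{1}{\Gamma\left(1-\alpha\right)}\left[\frac{f\left(0\right)}{t^{\alpha}}+\int_{0}^{t}\frac{f^{\prime}\left(\tau\right)}{\left(t-\tau\right)^{\alpha}}\,\dd\tau\right],
\]
which, combined with the $-f\left(t\right)$ term, gives the second form of (\ref{eq:cal_E_a_int_eq_1st_sol}). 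The first form of (\ref{eq:cal_E_a_int_eq_1st_sol}) is then obtained by the elementary rewrite $-f\left(t\right)+f\left(0\right)=-\int_{0}^{t}f^{\prime}\left(\tau\right)\,\dd\tau$ and regrouping.

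Uniqueness follows along the same line: if $v:=u_{1}-u_{2}$ satisfies $\int_{0}^{t}\mathcal{E}_{\alpha}\left(t-\tau\right)v\left(\tau\right)\,\dd\tau=0$, then $V\left(s\right)/\left(s^{\alpha}+1\right)=0$, hence $V\equiv0$ and thus $v\equiv0$. The main delicate point I expect is the rigorous justification of the Laplace inversion, since $\mathcal{E}_{\alpha}$ has an integrable singularity at the origin (behaviour $-\alpha/\left(\Gamma\left(1+\alpha\right)t^{1-\alpha}\right)$ by (\ref{eq:cal_E_a_small})) and the candidate solution $u$ itself is singular like $t^{-\alpha}$, so one has to verify that all convolutions involved are well-defined and that the pointwise operations (differentiation under the convolution, Leibniz-type manipulations if one prefers a direct verification instead of Laplace inversion) are legitimate in the considered function class. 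If this is an issue, I would fall back on \emph{a posteriori} verification: substitute the candidate formula (\ref{eq:cal_E_a_int_eq_1st_sol}) back into (\ref{eq:cal_E_a_int_eq_1st}), exploit the identity $\mathcal{E}_{\alpha}\left(z\right)=\left(\dd/\dd z\right)E_{\alpha}\left(-z^{\alpha}\right)$ and the convolution identity $\int_{0}^{t}\frac{\dd\tau}{\left(t-\tau\right)^{\alpha}\tau^{\alpha}}=\Gamma\left(1-\alpha\right)^{2}/\Gamma\left(2-2\alpha\right)\cdot t^{1-2\alpha}$ via the Beta function, together with term-by-term series expansion of $E_{\alpha}\left(-t^{\alpha}\right)$ from (\ref{eq:E_a}), to check that the resulting expression collapses to $f\left(t\right)$.
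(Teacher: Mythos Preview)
Your approach is correct and essentially identical to the paper's: both apply the Laplace transform, use (\ref{eq:E_a_Lap}) and the convolution theorem to obtain $U(s)=-(1+s^{\alpha})F(s)$, and then invert. The only cosmetic difference is that the paper rewrites this as $U(s)=-\bigl[sF(s)-f(0)\bigr]\bigl(1/s+1/s^{1-\alpha}\bigr)-f(0)\bigl(1/s+1/s^{1-\alpha}\bigr)$ and inverts directly via $\mathcal{L}[f']=sF-f(0)$ and $\mathcal{L}\bigl[t^{-\alpha}/\Gamma(1-\alpha)\bigr]=s^{\alpha-1}$, whereas you route the $s^{\alpha}F(s)$ piece through the Riemann--Liouville derivative and Lemma~\ref{lem:Abel_1st}; the two routes coincide.
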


\begin{proof}
The proof is ideologically similar to that of Lemma \ref{lem:spec_int_eq_2nd}.
Let us denote $U\left(s\right):=\mathcal{L}\left[u\right]\left(s\right)$,
$F\left(s\right):=\mathcal{L}\left[f\right]\left(s\right)$ the Laplace
transforms of $u\left(t\right)$ and $f\left(t\right)$. Upon Laplace
transformation of (\ref{eq:cal_E_a_int_eq_1st}), using the convolution
theorem and (\ref{eq:E_a_Lap}), we have
\[
-U\left(s\right)\frac{1}{s^{\alpha}+1}=F\left(s\right)\hspace{1em}\Longrightarrow\hspace{1em}U\left(s\right)=-F\left(s\right)\left(1+s^{\alpha}\right)=-\left[sF\left(s\right)-f\left(0\right)\right]\left(\frac{1}{s}+\frac{1}{s^{1-\alpha}}\right)-f\left(0\right)\left(\frac{1}{s}+\frac{1}{s^{1-\alpha}}\right).
\]
Hence, employing $\mathcal{L}\left[t^{\beta-1}\right]\left(s\right)=\Gamma\left(\beta\right)/s^{\beta}$,
$\beta>0$ (see e.g. \cite[pp. 318--319]{Doet}), and, in particular,
$\mathcal{L}\left[1\right]\left(s\right)=1/s$, we invert the transform
to obtain (\ref{eq:cal_E_a_int_eq_1st_sol}). 
\end{proof}
\begin{cor}
\label{cor:spec_int_eq_1st_mu} Let $\alpha\in\left(0,1\right)$,
$\mu\geq0$ and assume that $f$ is an absolutely continuous function
on every bounded subinterval of $\mathbb{R}_{+}$. Then, the integral
equation 
\begin{equation}
-\frac{1}{\mu^{1-1/\alpha}}\int_{0}^{t}\mathcal{E}_{\alpha}\left(\mu^{1/\alpha}\left(t-\tau\right)\right)u\left(\tau\right)\dd\tau=f\left(t\right),\hspace{1em}t>0,\label{eq:cal_E_a_int_eq_1st_mu}
\end{equation}
has a unique solution given by
\begin{align}
u\left(t\right) & =\frac{f\left(0\right)}{\Gamma\left(1-\alpha\right)t^{\alpha}}+\int_{0}^{t}\left(\mu+\frac{1}{\Gamma\left(1-\alpha\right)}\frac{1}{\left(t-\tau\right)^{\alpha}}\right)f^{\prime}\left(\tau\right)\dd\tau\label{eq:cal_E_a_int_eq_1st_sol_mu}\\
 & =\mu f\left(t\right)+\text{\ensuremath{\left(\frac{1}{\Gamma\left(1-\alpha\right)t^{\alpha}}-\mu\right)}}f\left(0\right)+\frac{1}{\Gamma\left(1-\alpha\right)}\int_{0}^{t}\frac{f^{\prime}\left(\tau\right)}{\left(t-\tau\right)^{\alpha}}\dd\tau,\hspace{1em}t>0.\nonumber 
\end{align}
\end{cor}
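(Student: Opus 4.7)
The plan is to mirror the two-stage strategy used for Corollary \ref{cor:spec_int_eq_2nd}: treat $\mu>0$ by a time rescaling that reduces \eqref{eq:cal_E_a_int_eq_1st_mu} to the integral equation studied in Lemma \ref{lem:spec_int_eq_1st}, and then recover the case $\mu=0$ by a limiting argument which turns \eqref{eq:cal_E_a_int_eq_1st_mu} into an Abel equation of the first kind to which Lemma \ref{lem:Abel_1st} applies.

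For $\mu>0$, I would introduce the rescaled variables $\tilde{t}:=\mu^{1/\alpha}t$, $\tilde{u}(\tilde{t}):=u(\tilde{t}/\mu^{1/\alpha})$, $\tilde{f}(\tilde{t}):=f(\tilde{t}/\mu^{1/\alpha})$. Substituting $\tilde{\tau}=\mu^{1/\alpha}\tau$ in the convolution integral, the prefactor $\mu^{1-1/\alpha}$ combines with the Jacobian $d\tilde\tau/\mu^{1/\alpha}$ to produce a clean equation
\[
\int_{0}^{\tilde{t}}\mathcal{E}_{\alpha}\bigl(\tilde{t}-\tilde{\tau}\bigr)\tilde{u}(\tilde{\tau})\,\dd\tilde\tau=-\mu\,\tilde{f}(\tilde{t}),\hspace{1em}\tilde{t}>0,
\]
which is exactly \eqref{eq:cal_E_a_int_eq_1st} with right-hand side $-\mu\tilde{f}$. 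Because $f$ is absolutely continuous on bounded subintervals of $\mathbb{R}_+$, so is $\tilde{f}$; hence Lemma \ref{lem:spec_int_eq_1st} applies and gives a unique $\tilde{u}$ via \eqref{eq:cal_E_a_int_eq_1st_sol}. Undoing the substitution and using $\tilde{t}^{\alpha}=\mu t^{\alpha}$, $(\tilde{t}-\tilde{\tau})^{\alpha}=\mu(t-\tau)^{\alpha}$ and $\tilde{f}^{\prime}(\tilde\tau)=\mu^{-1/\alpha}f^{\prime}(\tau)$ should produce, after cancellation of the $\mu$-factors, precisely formula \eqref{eq:cal_E_a_int_eq_1st_sol_mu}. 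I expect this to be the most error-prone step in terms of bookkeeping, since several powers of $\mu$ have to line up; the key identity is $\mu\cdot\mu^{-1/\alpha}\cdot\mu^{1/\alpha}=\mu$ for the non-singular term and $\mu\cdot\mu^{-1}=1$ for the singular one.

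For $\mu=0$, I would take the limit $\mu\to 0^{+}$ directly in \eqref{eq:cal_E_a_int_eq_1st_mu}. Using the small-argument asymptotics \eqref{eq:cal_E_a_small},
\[
-\frac{1}{\mu^{1-1/\alpha}}\,\mathcal{E}_{\alpha}\bigl(\mu^{1/\alpha}(t-\tau)\bigr)=\frac{\alpha}{\mu^{1-1/\alpha}\Gamma(1+\alpha)}\cdot\frac{1}{\bigl(\mu^{1/\alpha}(t-\tau)\bigr)^{1-\alpha}}+\mathcal{O}\bigl(\mu^{1/\alpha}\bigr),
\]
and the exponents of $\mu$ cancel because $1-1/\alpha+(1-\alpha)/\alpha=0$. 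Using $\alpha/\Gamma(1+\alpha)=1/\Gamma(\alpha)$, the equation collapses in the limit to the Abel integral equation
\[
\frac{1}{\Gamma(\alpha)}\int_{0}^{t}\frac{u(\tau)}{(t-\tau)^{1-\alpha}}\,\dd\tau=f(t),\hspace{1em}t>0,
\]
which is \eqref{eq:Abel_1st}. Since $f$ is absolutely continuous on bounded subintervals, Lemma \ref{lem:Abel_1st} yields a unique solution given by \eqref{eq:Abel_1st_sol_alt}, and a direct comparison shows that this expression matches \eqref{eq:cal_E_a_int_eq_1st_sol_mu} evaluated at $\mu=0$, closing the argument.

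The hard part, in my view, is not any single estimate but rather the accounting of the various $\mu$-powers in the rescaling step and the verification that the $\mu\to 0^{+}$ limit of the $\mu>0$ formula coincides with the solution obtained from the limiting Abel equation. Both checks are routine once the substitutions are done carefully; uniqueness in each regime is inherited directly from Lemma \ref{lem:spec_int_eq_1st} and Lemma \ref{lem:Abel_1st}.
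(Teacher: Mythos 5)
Your proposal is correct and follows the same two-stage route as the paper's own proof: a time rescaling $\tilde t=\mu^{1/\alpha}t$ reducing the $\mu>0$ case to Lemma \ref{lem:spec_int_eq_1st} applied to $-\mu\tilde f$, and the $\mu\to 0^{+}$ limit via the small-argument asymptotics (\ref{eq:cal_E_a_small}) reducing to Lemma \ref{lem:Abel_1st}; the $\mu$-power bookkeeping you carry out does indeed close. The only slip is cosmetic: the error term in your expansion should be $\mathcal{O}(\mu)$ rather than $\mathcal{O}(\mu^{1/\alpha})$ (since $\mu^{1/\alpha-1}\cdot(\mu^{1/\alpha})^{2\alpha-1}=\mu$), but either way it vanishes as $\mu\to 0^{+}$, so the limiting argument is unaffected.
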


\begin{proof}
For $\mu>0$, the result follows from Lemma \ref{lem:spec_int_eq_1st}
applied to the integral equation for $u\left(t/\mu^{1/\alpha}\right)$
with $-\mu f\left(t/\mu^{1/\alpha}\right)$ in place of $f\left(t\right)$.

To treat the case $\mu=0$, we observe, using (\ref{eq:cal_E_a_small})
and (\ref{eq:gamma_prop}), that in the limit $\mu\rightarrow0$,
equation (\ref{eq:cal_E_a_int_eq_1st_mu}) becomes (\ref{eq:Abel_1st})
whose solution, given by Lemma \ref{lem:Abel_1st}, coincides exactly
with (\ref{eq:cal_E_a_int_eq_1st_sol_mu}) for $\mu=0$.
\end{proof}
\begin{lem}
\label{lem:spec_int_eq_1st_mu_alph12} Let $\alpha\in\left[1,2\right)$,
$\mu\geq0$, and assume that $f^{\prime}$ is an absolutely continuous
function on every bounded subinterval of $\mathbb{R}_{+}$. Then,
integral equation (\ref{eq:cal_E_a_int_eq_1st_mu}) has a unique solution
given by
\begin{align}
u\left(t\right)=\mu f\left(t\right)-\frac{\alpha-1}{\Gamma\left(2-\alpha\right)t^{\alpha}}f\left(0\right)+\frac{1}{\Gamma\left(2-\alpha\right)t^{\alpha-1}}f^{\prime}\left(0\right)+\frac{1}{\Gamma\left(2-\alpha\right)}\int_{0}^{t}\frac{f^{\prime\prime}\left(\tau\right)}{\left(t-\tau\right)^{\alpha-1}}\dd\tau,\hspace{1em}t>0.\label{eq:cal_E_a_int_eq_1st_sol_mu_alph12}
\end{align}
\end{lem}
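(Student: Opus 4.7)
The plan is to adapt the Laplace-transform approach of Lemma \ref{lem:spec_int_eq_1st} to the range $\alpha\in\left[1,2\right)$, which is delicate because we will need to extract one additional derivative of $f$ compared to the $\alpha\in\left(0,1\right)$ case, producing singular terms at $t=0$ proportional to $f\left(0\right)$ and $f^{\prime}\left(0\right)$. As in Corollary \ref{cor:spec_int_eq_1st_mu}, I would first treat $\mu>0$ by rescaling $\tilde{t}:=\mu^{1/\alpha}t$, which reduces \eqref{eq:cal_E_a_int_eq_1st_mu} to
$$-\int_{0}^{\tilde{t}}\mathcal{E}_{\alpha}\left(\tilde{t}-\tilde{\tau}\right)\tilde{u}\left(\tilde{\tau}\right)\dd\tilde{\tau}=\mu\tilde{f}\left(\tilde{t}\right),$$
the $\mu=1$ form of the equation with rescaled right-hand side $\mu\tilde{f}$. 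Applying Laplace transform and using \eqref{eq:E_a_Lap} together with the convolution theorem, one obtains $\widetilde{U}\left(p\right)=\mu\left(p^{\alpha}+1\right)\widetilde{F}\left(p\right)=\mu\widetilde{F}\left(p\right)+\mu p^{\alpha}\widetilde{F}\left(p\right)$.

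The main work is then to invert $p^{\alpha}\widetilde{F}\left(p\right)$. I would decompose
$$p^{\alpha}\widetilde{F}\left(p\right)=p^{\alpha-2}\left[p^{2}\widetilde{F}\left(p\right)\right]=p^{\alpha-2}\left(\mathcal{L}\left[\tilde{f}^{\prime\prime}\right]\left(p\right)+p\tilde{f}\left(0\right)+\tilde{f}^{\prime}\left(0\right)\right)$$
and invert each piece. Since $2-\alpha>0$, the classical identity $\mathcal{L}\left[t^{1-\alpha}/\Gamma\left(2-\alpha\right)\right]\left(p\right)=p^{\alpha-2}$ shows that $p^{\alpha-2}\mathcal{L}\left[\tilde{f}^{\prime\prime}\right]$ inverts to the convolution $\Gamma\left(2-\alpha\right)^{-1}\int_{0}^{\tilde{t}}\left(\tilde{t}-\tilde{\tau}\right)^{1-\alpha}\tilde{f}^{\prime\prime}\left(\tilde{\tau}\right)\dd\tilde{\tau}$, while $p^{\alpha-2}\tilde{f}^{\prime}\left(0\right)$ inverts to $\tilde{f}^{\prime}\left(0\right)\tilde{t}^{1-\alpha}/\Gamma\left(2-\alpha\right)$. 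The subtle piece is $p^{\alpha-1}\tilde{f}\left(0\right)$: using Euler reflection \eqref{eq:gamma_Euler} one has $1/\Gamma\left(1-\alpha\right)=-\left(\alpha-1\right)/\Gamma\left(2-\alpha\right)$ (with the usual convention $1/\Gamma\left(0\right)=0$ for $\alpha=1$), and interpreting $\mathcal{L}\left[t^{-\alpha}/\Gamma\left(1-\alpha\right)\right]=p^{\alpha-1}$ in the Hadamard finite-part sense yields the inversion $-\left(\alpha-1\right)\tilde{f}\left(0\right)\tilde{t}^{-\alpha}/\Gamma\left(2-\alpha\right)$. Undoing the rescaling assembles these pieces into exactly \eqref{eq:cal_E_a_int_eq_1st_sol_mu_alph12}.

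For the degenerate case $\mu=0$, I would pass to the limit $\mu\to0^{+}$ in \eqref{eq:cal_E_a_int_eq_1st_mu}, using the small-argument asymptotics \eqref{eq:cal_E_a_small} and \eqref{eq:gamma_prop} to identify the limiting kernel as the Abel kernel $-1/\left[\Gamma\left(\alpha\right)\left(t-\tau\right)^{1-\alpha}\right]$, so that the equation reduces to \eqref{eq:Abel_1st}. Corollary \ref{cor:Abel_1st} then furnishes the solution, and a short computation using $\Gamma\left(\alpha-1\right)\Gamma\left(2-\alpha\right)=\pi/\sin\left[\pi\left(\alpha-1\right)\right]$ together with $\Gamma\left(\alpha\right)=\left(\alpha-1\right)\Gamma\left(\alpha-1\right)$ shows that the prefactor $\Gamma\left(\alpha\right)\sin\left[\pi\left(\alpha-1\right)\right]/\left[\pi\left(\alpha-1\right)\right]$ in \eqref{eq:Abel_1st_sol_alt_alph12} collapses to $1/\Gamma\left(2-\alpha\right)$, recovering \eqref{eq:cal_E_a_int_eq_1st_sol_mu_alph12} with $\mu=0$. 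The boundary case $\alpha=1$ is itself immediate by direct differentiation of \eqref{eq:cal_E_a_int_eq_1st_mu} with $\mathcal{E}_{1}\left(z\right)=-e^{-z}$.

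The main obstacle is making the inversion of $p^{\alpha-1}\tilde{f}\left(0\right)$ rigorous, since $t^{-\alpha}$ is not locally integrable for $\alpha\in\left[1,2\right)$. To sidestep distributional Laplace calculus, I would complement the formal derivation by a direct \emph{a posteriori} verification: substitute the candidate right-hand side of \eqref{eq:cal_E_a_int_eq_1st_sol_mu_alph12} back into the integral equation and evaluate the resulting convolution against $\mathcal{E}_{\alpha}\left(\mu^{1/\alpha}\left(t-\cdot\right)\right)$ by integration by parts, handling the singular $t^{-\alpha}$ and $t^{1-\alpha}$ terms by cut-off at $\tau=t-\varepsilon$ and passing to the limit $\varepsilon\to0^{+}$ using \eqref{eq:cal_E_a_small}. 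Uniqueness follows from Titchmarsh's convolution theorem applied to the kernel $\mathcal{E}_{\alpha}\left(\mu^{1/\alpha}\cdot\right)$, whose support is $\left[0,\infty\right)$ and which does not vanish identically on any $\left[0,\varepsilon\right)$ by \eqref{eq:cal_E_a_small}.
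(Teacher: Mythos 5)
Your proposal arrives at the correct formula \eqref{eq:cal_E_a_int_eq_1st_sol_mu_alph12} but takes a genuinely different route from the paper in the case $\mu>0$, $\alpha\in\left(1,2\right)$. You invert $U\left(s\right)=\left(s^{\alpha}+\mu\right)F\left(s\right)$ directly, decomposing $s^{\alpha}F\left(s\right)=s^{\alpha-2}\bigl[\mathcal{L}\left[f^{\prime\prime}\right]\left(s\right)+sf\left(0\right)+f^{\prime}\left(0\right)\bigr]$; this forces you to give meaning to the term $s^{\alpha-1}f\left(0\right)$, whose formal inverse $t^{-\alpha}/\Gamma\left(1-\alpha\right)$ is not locally integrable for $\alpha\geq1$, and you propose to handle it as a Hadamard finite part followed by an \emph{a posteriori} substitution check. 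The paper avoids the distributional step entirely by first passing to the antiderivative $\widetilde{u}\left(t\right):=\int_{0}^{t}u\left(\tau\right)\dd\tau$: because $\mathcal{E}_{\alpha}\left(0\right)=0$ for $\alpha>1$ and $\widetilde{u}\left(0\right)=0$, an integration by parts inside \eqref{eq:cal_E_a_int_eq_1st_mu} turns the kernel into $\mathcal{E}_{\alpha}^{\prime}$, and in the Laplace domain one then only needs to invert $\widetilde{U}\left(s\right)=\mathcal{L}\left[f^{\prime}\right]\left(s\right)\bigl(\mu/s^{2}+s^{\alpha-2}\bigr)+f\left(0\right)\bigl(\mu/s^{2}+s^{\alpha-2}\bigr)$, where $s^{\alpha-2}$ has a classical, locally integrable inverse $t^{1-\alpha}/\Gamma\left(2-\alpha\right)$. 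The singular $t^{-\alpha}$ contribution then appears only upon the final differentiation of $\widetilde{u}$, which is unproblematic. In short, the paper keeps the entire Laplace inversion classical at the cost of one extra differentiation at the end, whereas your approach is more direct but needs the finite-part interpretation (or the verification step you sketch but do not carry out) to be rigorous. Your uniqueness argument via Titchmarsh's convolution theorem is a nice explicit addition that the paper leaves implicit. Two small points: the identity $1/\Gamma\left(1-\alpha\right)=-\left(\alpha-1\right)/\Gamma\left(2-\alpha\right)$ follows from the functional equation \eqref{eq:gamma_prop}, $\Gamma\left(z+1\right)=z\Gamma\left(z\right)$, not from Euler reflection \eqref{eq:gamma_Euler}; and your $\mu=0$ and $\alpha=1$ treatments coincide with the paper's.
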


\begin{proof}
\uline{Case \mbox{$\mu>0$}}:

First, let us consider $\alpha\in\left(1,2\right)$. Let $\widetilde{u}\left(t\right):=\int_{0}^{t}u\left(\tau\right)\dd\tau$.
Since $\mathcal{E}_{\alpha}\left(0\right)=0$ for $\alpha>1$ and
$\widetilde{u}\left(0\right)=0$, integration by parts of (\ref{eq:cal_E_a_int_eq_1st_mu})
leads to
\[
-\frac{1}{\mu^{1-2/\alpha}}\int_{0}^{t}\mathcal{E}_{\alpha}^{\prime}\left(\mu^{1/\alpha}\left(t-\tau\right)\right)\widetilde{u}\left(\tau\right)\dd\tau=f\left(t\right),\hspace{1em}t>0.
\]
Denoting $\widetilde{U}\left(s\right):=\mathcal{L}\left[\widetilde{u}\right]\left(s\right)$,
$F\left(s\right):=\mathcal{L}\left[f\right]\left(s\right)$ the Laplace
transforms of $\widetilde{u}\left(t\right)$ and $f\left(t\right)$,
respectively, we apply Laplace transformation to the above equation
and use the convolution theorem and $\mathcal{L}\left[\mathcal{E}_{\alpha}^{\prime}\left(\mu^{1/\alpha}\cdot\right)\right]$$\left(s\right)=-\frac{\mu^{1-2/\alpha}s}{s^{\alpha}+\mu}$
(which easily follows from (\ref{eq:E_a_Lap})) to arrive at
\[
\widetilde{U}\left(s\right)\frac{s}{s^{\alpha}+\mu}=F\left(s\right)\hspace{1em}\Longrightarrow\hspace{1em}\widetilde{U}\left(s\right)=F\left(s\right)\left(\frac{\mu}{s}+s^{\alpha-1}\right)=\left[sF\left(s\right)-f\left(0\right)\right]\left(\frac{\mu}{s^{2}}+\frac{1}{s^{2-\alpha}}\right)+f\left(0\right)\left(\frac{\mu}{s^{2}}+\frac{1}{s^{2-\alpha}}\right).
\]
Therefore, upon inversion of Laplace transformation, using that $\mathcal{L}\left[t^{1-\alpha}\right]\left(s\right)=\Gamma\left(2-\alpha\right)/s^{2-\alpha}$,
$\alpha<2$, (and, in particular, $\mathcal{L}\left[t\right]\left(s\right)=1/s^{2}$),
we obtain
\begin{equation}
\widetilde{u}\left(t\right)=\int_{0}^{t}f^{\prime}\left(\tau\right)\left[\mu\left(t-\tau\right)+\frac{1}{\Gamma\left(2-\alpha\right)\left(t-\tau\right)^{\alpha-1}}\right]\dd\tau+f\left(0\right)\left(\mu t+\frac{1}{\Gamma\left(2-\alpha\right)t^{\alpha-1}}\right),\hspace{1em}t>0.\label{eq:spec_int_eq_1st_sol_tild}
\end{equation}
Note that, similarly to (\ref{eq:f_part_integr})--(\ref{eq:f_part_integr_der}),
we have, for $\alpha\in\left(1,2\right)$, 
\[
\int_{0}^{t}\frac{f^{\prime}\left(\tau\right)}{\left(t-\tau\right)^{\alpha-1}}\dd\tau=\frac{t^{2-\alpha}}{2-\alpha}f^{\prime}\left(0\right)+\frac{1}{2-\alpha}\int_{0}^{t}f^{\prime\prime}\left(\tau\right)\left(t-\tau\right)^{2-\alpha}\dd\tau,
\]
\[
\frac{\dd}{\dd t}\int_{0}^{t}\frac{f^{\prime}\left(\tau\right)}{\left(t-\tau\right)^{\alpha-1}}\dd\tau=\frac{1}{t^{\alpha-1}}f^{\prime}\left(0\right)+\int_{0}^{t}\frac{f^{\prime\prime}\left(\tau\right)}{\left(t-\tau\right)^{\alpha-1}}\dd\tau,
\]
and hence, by differentiation of (\ref{eq:spec_int_eq_1st_sol_tild}),
we deduce (\ref{eq:cal_E_a_int_eq_1st_sol_mu_alph12}).

It remains to consider the situation when $\alpha=1$. To this effect,
we recall (\ref{eq:e_a_cal_E_a}) and (\ref{eq:e_a}) which imply
that $\mathcal{E}_{1}\left(\mu t\right)=-e^{-\mu t}$, and hence equation
(\ref{eq:cal_E_a_int_eq_1st_mu}) can be recast as
\[
\int_{0}^{t}e^{\mu\tau}u\left(\tau\right)\dd\tau=e^{\mu t}f\left(t\right),\hspace{1em}t>0,
\]
which is then immediately solved by the differentiation to give $u\left(t\right)=\mu f\left(t\right)+f^{\prime}\left(t\right)$.
This solution coincides with that given by (\ref{eq:cal_E_a_int_eq_1st_sol_mu_alph12}),
upon substitution $\alpha=1$ and relevant simplifications (integration
and cancellations).

\uline{Case \mbox{$\mu=0$}}:

Recalling (\ref{eq:cal_E_a_small}) and (\ref{eq:gamma_prop}), we
observe that in the limit $\mu\rightarrow0$, equation (\ref{eq:cal_E_a_int_eq_1st_mu})
becomes (\ref{eq:Abel_1st}). Solution of the latter equation, for
$\alpha\in\left[1,2\right)$, is given by Corollary \ref{cor:Abel_1st}.
Using the identities $\Gamma\left(\alpha-1\right)\Gamma\left(2-\alpha\right)=\pi/\sin\left[\pi\left(\alpha-1\right)\right]$
and $\Gamma\left(\alpha\right)=\left(\alpha-1\right)\Gamma\left(\alpha-1\right)$,
this solution can be seen to coincide with (\ref{eq:cal_E_a_int_eq_1st_sol_mu_alph12})
for $\mu=0$.
\end{proof}
\begin{lem}
\label{lem:log_int_eq} Suppose that $f\in C^{2}\left(\left[-a,a\right]\right)$,
then, if $a\neq2$, the integral equation
\[
-\int_{-a}^{a}\log\left|x-\xi\right|u\left(\xi\right)\dd\xi=f\left(x\right),\hspace{1em}x\in\left(-a,a\right),
\]
admits a solution given by
\[
u\left(x\right)=-\frac{1}{\left(a^{2}-x^{2}\right)^{1/2}}\left[\fint_{-a}^{a}\frac{\left(a^{2}-\xi^{2}\right)^{1/2}f^{\prime}\left(\xi\right)}{\xi-x}\dd\xi+\frac{1}{\log\left(a/2\right)}\int_{-a}^{a}\frac{f\left(\xi\right)}{\left(a^{2}-\xi^{2}\right)^{1/2}}\dd\xi\right],\hspace{1em}x\in\left(-a,a\right).
\]
Moreover, this solution is unique in the class of H{\"o}lder continuous
functions with possible integrable singularities at the endpoints
of the interval $\left[-a,a\right]$. 
\end{lem}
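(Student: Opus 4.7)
The plan is to reduce the weakly singular logarithmic equation to the classical airfoil (finite Hilbert transform) equation by differentiation, invert the latter using the Söhngen--Tricomi formula, and then pin down the single remaining free constant by reinserting the candidate back into the original equation. The hypothesis $a\neq 2$ will enter precisely through the nonvanishing of $\log(a/2)$ at this last step.

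First I would differentiate once in $x$, turning $-\int_{-a}^{a}\log|x-\xi|u(\xi)\dd\xi=f(x)$ into the Cauchy singular integral equation $\fint_{-a}^{a}u(\xi)/(x-\xi)\dd\xi=f'(x)$. The hypothesis $f\in C^{2}([-a,a])$ makes $f'$ Hölder continuous, so the classical Söhngen inversion on $(-a,a)$ (obtained from its $(-1,1)$ form by an affine rescaling) furnishes the general solution in the class of Hölder-continuous functions with at most $(a^{2}-x^{2})^{-1/2}$-type endpoint singularities: the Cauchy integral appearing in the stated formula, plus a free term $C\,(a^{2}-x^{2})^{-1/2}$ with a single constant $C$.

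To identify $C$ I would exploit the standard identity $-\int_{-a}^{a}\log|x-\xi|(a^{2}-\xi^{2})^{-1/2}\dd\xi=\pi\log(2/a)$ for $x\in(-a,a)$, which follows by rescaling from $\int_{-1}^{1}\log|y-\eta|(1-\eta^{2})^{-1/2}\dd\eta=-\pi\log 2$. Multiplying the original equation by $(a^{2}-x^{2})^{-1/2}$, integrating in $x$ and applying Fubini produces the scalar relation $\pi\log(2/a)\int_{-a}^{a}u(\xi)\dd\xi=\int_{-a}^{a}f(x)(a^{2}-x^{2})^{-1/2}\dd x$. On the other hand, integrating the candidate representation of $u$ over $(-a,a)$, the Cauchy-integral contribution vanishes thanks to the classical identity $\fint_{-a}^{a}(a^{2}-x^{2})^{-1/2}(\xi-x)^{-1}\dd x=0$ for $|\xi|<a$, so only $\pi C$ survives. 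Equating the two expressions yields $C$ explicitly, and assembling everything (using $\log(2/a)=-\log(a/2)$) recovers the formula in the statement.

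Uniqueness in the claimed Hölder-with-integrable-endpoint-singularities class then follows: the difference $v$ of two solutions satisfies the homogeneous problem, the differentiated equation forces $v=C'(a^{2}-x^{2})^{-1/2}$, and the homogeneous original equation reduces to $\pi C'\log(a/2)=0$, so $C'=0$ under $a\neq 2$. The main obstacle I anticipate is not conceptual but technical bookkeeping: making the Fubini exchanges rigorous near both the Cauchy and the endpoint square-root singularities, and tracking the $\pi$-prefactors when transporting the Söhngen formula from $(-1,1)$ to $(-a,a)$. The assumption $a\neq 2$ plays exactly the role of keeping the free-constant mode non-orthogonal to the constraint produced by integration against $(a^{2}-x^{2})^{-1/2}$, which is what prevents the original operator from being non-injective on the stated class.
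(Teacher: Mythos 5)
The paper does not actually prove this lemma; it only cites Carrier--Krook--Pearson (for $a=1$) and Gakhov (for general $a\neq 2$), which derive the inversion formula essentially by the route you propose. So your proposal is a genuine proof where the paper has only a pointer, and the strategy is the correct and standard one: differentiate to reduce the log-kernel equation to the dominant (finite-Hilbert-transform) equation, invert by the S{\"o}hngen--Tricomi formula in the unbounded (index-$1$) class, and fix the single free constant by integrating the original equation against the Chebyshev weight $(a^{2}-x^{2})^{-1/2}$, which is precisely where the equilibrium identity and the hypothesis $a\neq 2$ enter. The uniqueness argument via the homogeneous problem is likewise sound.

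Two points need tightening. First, a small sign slip: differentiation gives $\fint_{-a}^{a}u(\xi)/(\xi-x)\,\dd\xi=f'(x)$, not $\fint_{-a}^{a}u(\xi)/(x-\xi)\,\dd\xi=f'(x)$; this is harmless but should be corrected. Second, and more importantly, you assert that the bookkeeping ``recovers the formula in the statement'', yet carrying the $\pi$'s through carefully the S{\"o}hngen inversion contributes a factor $1/\pi^{2}$ in front of the Cauchy integral, and your own scalar constraint $\pi\log(2/a)\int u=\int f\,(a^{2}-x^{2})^{-1/2}\dd x$, combined with $\int u=\pi C$, makes the constant term inherit the same $1/\pi^{2}$. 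Thus your argument, executed exactly, produces the lemma's right-hand side multiplied by $1/\pi^{2}$. A direct sanity check confirms this: with $u(x)=(a^{2}-x^{2})^{-1/2}$ one has $f\equiv-\pi\log(a/2)$, and substituting this $f$ into the lemma's displayed formula returns $\pi^{2}(a^{2}-x^{2})^{-1/2}$ rather than $(a^{2}-x^{2})^{-1/2}$. So your approach does not merely recover the stated formula --- it reveals an apparently missing $1/\pi^{2}$ normalisation in the lemma, which would propagate to (\ref{eq:p_0_sol})--(\ref{eq:delta_0_sol}) where the lemma is invoked. You should carry the $\pi$-prefactors explicitly through the S\"ohngen step and the constraint step and state the constant you actually obtain, rather than asserting agreement with the displayed formula.
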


\begin{proof}
See e.g. \cite[p. 428]{Carr} for the case $a=1$ and \cite[p. 591]{Gakh}
for arbitrary $a>0$, $a\neq2$.
\end{proof}

\end{document}